\documentclass[a4paper,USenglish,cleveref,nameinlink,numberwithinsect,thm-restate]{lipics-v2021}
\pdfoutput=1
\nolinenumbers
\usepackage{microtype}
\usepackage{framed}
\usepackage{xstring}
\makeatletter
\newcommand{\seq}[1]{
\saveexpandmode\expandarg
\StrDel{#1}{&}
\restoreexpandmode
} 
\makeatother

\bibliographystyle{plainurl}

\title{\texorpdfstring{From Chinese Postman to Salesman and Beyond I:\\ Approximating Shortest Tours $\delta$-Covering All Points on All Edges}{From Chinese Postman to Salesman and Beyond I: Approximating Shortest Tours $\delta$-Covering All Points on All Edges}}
\titlerunning{From Chinese Postman to Salesman and Beyond I}

\author{Fabian Frei}{CISPA Helmholtz Center for Information Security, Saarbrücken}{fabian.frei@cispa.de}{https://orcid.org/0000-0002-1368-3205}{}
\author{Ahmed Ghazy}{CISPA Helmholtz Center for Information Security and Saarland University, Saarbrücken}{ahmed.ghazy@cispa.de}{https://orcid.org/0009-0009-7414-5871}{}
\author{Tim A.~Hartmann}{CISPA Helmholtz Center for Information Security, Saarbrücken}{tim.hartmann@cispa.de}{https://orcid.org/0000-0002-1028-6351}{}
\author{Florian Hörsch}{CISPA Helmholtz Center for Information Security, Saarbrücken}{florian.hoersch@cispa.de}{https://orcid.org/0000-0002-5410-613X}{}
\author{Dániel Marx}{CISPA Helmholtz Center for Information Security, Saarbrücken}{marx@cispa.de}{https://orcid.org/0000-0002-5686-8314}{}

\authorrunning{F.~Frei, A.~Ghazy, T.~A.~Hartmann, F.~Hörsch, D.~Marx}

\Copyright{Fabian Frei, Ahmed Ghazy, Tim A.~Hartmann, Florian Hörsch, Dániel Marx}

\ccsdesc[500]{Mathematics of computing~Approximation algorithms}

\keywords{Chinese Postman Problem, Traveling Salesman Problem, Continuous Graphs, Approximation Algorithms}

\category{}

\relatedversion{}
\usepackage[utf8]{inputenc}
\usepackage[T1]{fontenc}
\usepackage{latexsym}

\usepackage{amssymb}
\usepackage{bibentry}
\usepackage{amsmath}
\usepackage{amsthm}
\usepackage{thmtools}
\usepackage{amsthm}
\usepackage{hyperref}
\usepackage[all]{hypcap}
\usepackage{mathtools}
\usepackage{xspace} 
\usepackage{xparse}
\usepackage{dsfont} 
\usepackage{tabularx} 
\usepackage{adjustbox}
\usepackage{changepage}
\usepackage{subcaption}
\usepackage{hyperref}
\usepackage[textsize=footnotesize,backgroundcolor=white]{todonotes}

\usepackage{algorithm}
\DeclareCaptionLabelFormat{nonumber}{%
  \kern0.05em{\color[rgb]{0.99,0.78,0.07}\rule{0.73em}{0.73em}}%
  \hspace*{0.67em}\bothIfFirst{#1}{~}} 

\let\oldtextsc\textsc
\renewcommand{\textsc}[1]{\mbox{\oldtextsc{#1}}}

\newtheorem{Case}{Case}
\crefname{claim}{Claim}{Claims}

\makeatletter
\newcommand{\xlabel}[2][]{\phantomsection\def\@currentlabelname{\ifthenelse{\equal{#1}{}}{#2}{#1}}\label{#2}}
\makeatother


\usepackage{tikz}
\usetikzlibrary{decorations.pathmorphing}
\usetikzlibrary{decorations.markings}
\usetikzlibrary{decorations,arrows.meta}
\pgfdeclaremetadecoration{many arrows}{initial}{ 
\state{initial}[width=0pt, next state=arrow] {
    \pgfmathdivide{100}{\pgfmetadecoratedpathlength}
    
    \pgfset{/pgf/decoration/segment length=4pt}
  }
  \state{arrow}[
switch if less than=\pgfmetadecorationsegmentlength to final, width=\pgfmetadecorationsegmentlength/3,
next state=end arrow]
  {
    \decoration{curveto}
    \beforedecoration
    {
      \pgfpathmoveto{\pgfpointmetadecoratedpathfirst}
} }
\state{end arrow}[width=\pgfmetadecorationsegmentlength/3, next state=move] {
    \decoration{curveto}
    \beforedecoration{\pgfpathmoveto{\pgfpointmetadecoratedpathfirst}}
    \afterdecoration
    {
      \pgfsetarrowsend{Latex[length=1pt,width=1pt]}
      \pgfusepath{stroke}
    }
}
\state{move}[width=\pgfmetadecorationsegmentlength/2, next state=arrow]{} \state{final}{}
}

\usetikzlibrary{shapes}
\usetikzlibrary{calc}
\usetikzlibrary{patterns}
\usetikzlibrary{fit}
\usepackage{pgfplots}

\newcommand{\p}{\textup{P}\xspace}
\newcommand{\np}{\textup{NP}\xspace}
\newcommand{\fpt}{\textup{FPT}\xspace}

\newcommand{\apx}{\textup{APX}\xspace}
\newcommand{\wone}{\textup{W[1]}\xspace}
\newcommand{\wtwo}{\textup{W[2]}\xspace}

\newcommand{\Oh}{\mathcal{O}}
\renewcommand{\Oh}{O}

\newcommand{\TSP}{\textup{TSP}\xspace}

\newcommand{\abs}[1]{\lvert #1\rvert}
\renewcommand{\setminus}{-}

\newcommand{\ceil}[1]{\ensuremath{\lceil #1 \rceil}}
\newcommand{\floor}[1]{\ensuremath{\lfloor #1 \rfloor}}

\DeclareMathOperator{\polylog}{polylog}

\newcommand{\half}{\ensuremath{{\tfrac{1}{2}}}}

\renewcommand{\epsilon}{\varepsilon}
\def\phi{\ensuremath{\varphi}}

\newcommand{\tour}{\textsc{Tour}\xspace}

\renewcommand{\textsc}[1]{\textnormal{\scshape #1}}
\newcommand{\VC}{\textsc{Vertex}\-\textsc{Cover}\xspace}
\newcommand{\domset}{\textsc{Dominating}\-\textsc{Set}\xspace}

\newcommand{\Center}{\textsc{Center}\xspace}
\newcommand{\covering}{\textsc{Covering}\xspace}
\newcommand{\dispersion}{\textsc{Dispersion}\xspace}
\newcommand{\IndependentSet}{\textsc{Independent}\-\textsc{Set}\xspace}

\newcommand{\CPP}{\textsc{Chinese}\-\textsc{Postman}\-\textsc{Problem}\xspace}

\newcommand{\MetricTSP}{\textsc{Metric}\-\textsc{TSP}\xspace}

\usepackage{tabularx, environ}

\makeatletter
\newcolumntype{\expand}{}
\long\@namedef{NC@rewrite@\string\expand}{\expandafter\NC@find}

\newcounter{myproblem}
\newcounter{temp}
\NewEnviron{myproblem}[1][]{%
\setcounter{temp}{\value{algorithm}}%
\setcounter{algorithm}{\value{myproblem}}%
\captionsetup[algorithm]{name=Optimization Problem, labelformat=nonumber, position=top, listformat=empty}%
  \noindent%
\begin{algorithm}[H]\caption{#1}%
\refstepcounter{myproblem}%
\vspace{1ex}%
\begin{tabularx}{\textwidth}{>{\bfseries}lXc}%
	\phantomsection%
    \BODY
  \end{tabularx}%
  \smallskip
\end{algorithm}
\setcounter{algorithm}{\value{temp}}
\LinkTargetOn
}

\crefname{algorithm}{Algorithm}{Algorithms}
\crefname{myproblem}{Problem}{Problems}
\Crefname{algorithm}{Algorithm}{Algorithms}
\Crefname{myproblem}{Problem}{Problems}

\makeatother

\NewEnviron{framedproblem}[1][]{%
  \noindent%
  
  \vspace{1ex}
  \begin{tabularx}{\textwidth}{|>{\bfseries}lX|c|}%
	\hline
    \BODY\\\hline
  \end{tabularx}%
  \smallskip
}
\makeatother

\newcommand{\dist}{\operatorname{{d}}}
\newcommand{\len}{\operatorname{\mathsf{\ell}}}

\newcommand{\opt}{\ensuremath{\mathsf{OPT}}}

\newcommand{\opttour}[1][\delta]{\ensuremath{\opt_{#1\textup{-tour}}}\xspace}

\newcommand\opttsp{\opt_{\textup{TSP}}\xspace}
\newcommand\optlp{\opt_{\textup{LP}}\xspace}

\newcommand{\deltatour}[1][\delta]{{\ensuremath{{#1\textup{-tour}}}}\xspace}
\newcommand{\deltatourprob}[1][\delta]{{\ensuremath{{#1\textup{-\textsc{Tour}}}}}\xspace}

\newcommand{\conn}[4]{\ensuremath{\ifthenelse{\equal{#4}{0}}{#1_{#2}^{#3}}{\bar{#1}_{#2}^{#3}}}}

\newcommand{\midpath}[2]{\ensuremath{\ifthenelse{\equal{#2}{0}}{M_{#1}}{\bar{M}_{#1}}}}

\input{restatable-helpers_2.tex}

\hideLIPIcs

\begin{document}
\maketitle
\begin{abstract}
A well-studied continuous model of graphs, introduced by Dearing and Francis [Transportation Science, 1974],
	considers each edge as a continuous unit-length interval of points.
        For \(\delta \geq 0\), we introduce the problem \(\delta\)-\tour, where the objective is to find the shortest tour that comes within a distance of \(\delta\) of every point on every edge. It can be observed that \(0\)-\tour is essentially equivalent to the Chinese Postman Problem, which is solvable in polynomial time. In contrast, \(1/2\)-\tour is essentially equivalent to the Graphic Traveling Salesman Problem (TSP), which is NP-hard but admits a constant-factor approximation in polynomial time. We investigate \(\delta\)-\tour for other values of \(\delta\),
		noting that the problem's behavior and the insights required to understand it differ significantly across various \(\delta\) regimes.
		We design polynomial-time approximation algorithms summarized as follows:
        \begin{enumerate}
          \item[(1)] For every fixed \(0 < \delta < 3/2\), the problem \(\delta\)-\tour admits a constant-factor approximation.
		  \item[(2)] For every fixed \(\delta \geq 3/2\), the problem admits an \(\Oh(\log n)\)-approximation.
		  \item[(3)] If $\delta$ is considered to be part of the input, then the problem admits an $\Oh(\log^3{n})$-approximation.
          \end{enumerate}
 This is the first of two articles on the \(\delta\)-\tour problem. In the second one we complement the approximation algorithms presented here with inapproximability results and  related to parameterized complexity.      
\bigskip
\end{abstract}
\newpage

\section{Introduction}

We consider a well-studied continuous model of graphs introduced by Dearing and Francis~\cite{Dearing1974}.
Each edge is seen as a continuous unit interval of points with its vertices as endpoints. 
For any given graph~$G$, this yields a compact metric space $(P(G),\dist)$ with a point set $P(G)$ and a distance function $\dist\colon P(G)^2\to \mathbb{R}_{\ge0}$.

A prototypical problem in this setting is $\delta$-\covering, 
introduced by Shier~\cite{Shier1977} for any positive real $\delta$.
The task is to find in $G$ a minimum set $S$ of points that \emph{$\delta$-covers} 
the entire graph, in the sense that each point in $P(G)$ has distance at most $\delta$ to some point in $S$.
This problem, which is also often referred to as the continuous $p$-\Center problem has been extensively studied; 
we cite only a few examples:~\cite{Kariv1979,ChandrasekaranTamir1980,MegiddoTamir1983}.
Observe that the problem differs from typical discrete graph problems 
in two ways: 
the solution has to $\delta$-cover every point of every edge 
(not just the vertices) and the solution may (and for optimality sometimes must) use points inside edges.
        How does the complexity of this problem depend on the distance $\delta$?
First, the problem is polynomial-time solvable when $\delta$ is a unit fraction, i.e., a rational with numerator $1$,
	and \np-hard for all other rational and irrational $\delta$~\cite{Hartmann2022,HartmannLW22}.
One can show that \VC is reducible to $2/3$-\covering and 
\domset is reducible to $3/2$-\covering. 
Thus $\delta$-\covering behaves very differently for different values of $\delta$ 
and can express problems of different nature and complexity: 
for example, while vertex cover is fixed-parameter tractable (\fpt) 
	when parameterized by the solution size, 
	dominating set is \wtwo-hard.
This is reflected also in the complexity of $\delta$-\covering: 
at the threshold of $\delta=3/2$, the parameterized complexity of the problem, parameterized by the size of the solution,
	jumps from \fpt to \wtwo-hard~\cite{HartmannLW22}.
Similarly, $\delta$-\covering allows a constant factor approximation
	for $\delta<3/2$ and becomes log-\apx-hard for $\delta \geq 
	3/2$~\cite{HartmannJanssen2024}.
The problem dual to $\delta$-\covering is $\delta$-\dispersion,
	as studied for example by Shier and Tamir~\cite{Shier1977,Tamir1991}.
The task is to place a maximum number of points in the input graph
	such that they pairwise have distance at least $\delta$ from each other.
For this problem, $\delta=2$ marks the threshold where the parameterized complexity
	for the solution size as the parameter
	jumps from \fpt to \wone-hard~\cite{HartmannL22}.
Furthermore, the problem is polynomial-time solvable when $\delta$ is a rational with numerator 1 or 2, 
	and \np-hard for all other rational and irrational $\delta$~\cite{GrigorievHLW21, HartmannL22}.
With $\delta$-\covering being the continuous version of \VC and \domset,
	and $\delta$-\dispersion being a continuous version of \IndependentSet, 
	we now turn to the natural continuous variant of another famous problem.

We study the Graphic Traveling Salesman Problem (TSP)
	with a positive real covering range $\delta$ in the continuous model,
	which we call \deltatourprob.
A \emph{\deltatour} $T$ is a tour
that may make U-turns at arbitrary points of the graph, even inside edges, 
and is $\delta$-covering, that is, every point in the graph is within distance $\delta$ from a point $T$ passes by.
The task in our problem \deltatourprob is to find a shortest \deltatour.
See~\cref{fig:example} for two examples of unique shortest \deltatour{}s that cannot be described as graph-theoretic closed walks. 

Note that computing a shortest $0$-tour is equivalent to computing a shortest Chinese Postman tour (a closed walk going through every edge),
	which is known to be polynomial-time solvable~\cite[Chapter~29]{schrijver-book}.
	Moreover, one can observe that if every vertex of the input graph has degree at least two,
	then there is a shortest $\frac12$-tour that stops at every vertex and, conversely,
	any tour stopping at every vertex is a $\frac12$-tour.
	Thus, finding a shortest $\frac12$-tour is essentially equivalent to
	solving a TSP instance, with some additional careful handling of degree-1 vertices.

\begin{figure}[t]
\label{fig:results}
\centering
\begin{tikzpicture}[
  declare function={
    ubone(\d)=
	    (\d == 0) * (1)   +
	    and(\d > 0, \d <= 1/6) * (1/(1-2*\d))   +
	    and(\d > 1/6, \d <= 1/2) * (1.5)
   ;
    ubtwo(\d)=
	    and(\d >= 1/2, \d <= 0.825) * (1.4/(2-2*\d))     +
	    and(\d >= 0.825, \d < 1) * (4)
   ;
    ubthr(\d)=
	    (\d == 1) * (3) +
	    and(\d > 1, \d <= 1.5) * (3/(3-2*\d))
   ;
   lb(\d) = 
	   and(\d >= 1, \d < 1.5) * (2.5-\d)
   ;
  }
]

\begin{axis}[width=\textwidth, height=7cm,
  axis x line=left, axis y line=left,
  ymajorgrids=true, grid style=dashed,
  ymin=0, ymax=5.1, ytick={1, 1.5, 3, 4}, ylabel=approximation ratio,
  xmin=0, xmax=1.55, xtick={0,1/6,1/2,0.825,1,1.5}, xlabel={covering range $\delta$},
  xticklabels={$0$,$1/6$,$1/2$,$33/40$,$1$,$3/2$},
  domain=0:1.5,samples=50,
]

\draw [draw=black!100!green, fill=black!100!green, very thick] (axis cs: 0.5, 1.4) circle (1.0pt);

\draw [draw=black!100!green, fill=black!100!green, very thick] (axis cs: 1, 3) circle (1.0pt);

\addplot [black!100!green,domain=0:.5] {ubone(x)};
\addplot [black!100!green,domain=.5:1] {ubtwo(x)};
\addplot [black!100!green,domain=1:1.2] {ubthr(x)};

\end{axis}
\end{tikzpicture} 

\caption{The approximation ratio of our algorithms for \deltatourprob plotted against $\delta$.}
\label{figure:approximation}
\end{figure}

\begin{table}[t]
\label{tab:results}
\centering
\caption{Approximation upper bounds (UB) for \deltatourprob.}
\setlength\tabcolsep{3.13pt}
\begin{adjustwidth}{-0.0cm}{}
	\begin{tabular}{|c|c|c|c|c|c|c|c|c|c|}
		\hline
			$\boldsymbol{\delta}$ &
		$(0,1/6]$ &
		\multicolumn{2}{c|}{ $(1/6,1/2)$ } &
		$1/2$ &
		$(1/2,33/40)$ &
		$[33/40,1)$ &
		$[1,3/2)$ & $[3/2,\infty)$ 
		\\
		\hline
			\textbf{UB} &
		$1/(1{-}2\delta)$ &
		\multicolumn{2}{c|}{ 1.5 } &
		$1.4$ &
		$1.4/(2-2\delta)$ &
		$4$ &
		$3/(3{-}2\delta)$ &
		$\min\{2\delta, 64\log^2{n}\} \log{n}$ \\
			&
		Thm.~\ref{thm:approx:ub:zero_sixth} &
		\multicolumn{2}{c|}{Thm.~\ref{thm:approx:ub:sixth_half} } &
		Thm.~\ref{thm:approx:ub:half} &
		Thm.~\ref{thm:approx:ub:half:threequarters}&
		Thm.~\ref{thm:approx:ub:threequarters:one} &
		Thm.~\ref{thm:approx:ub:one:threehalves} &
		Thms.~\ref{thm:approx:ub:threehalves:lognpthree} and~\ref{thm:approx:ub:threehalves:logn}\\
		\hline
	\end{tabular}
\end{adjustwidth}
\label{table:approximation}
\end{table}

\subparagraph{Our Results.}
It turns out that finding a shortest \deltatour is \np-hard for all $\delta>0$.
In fact, in~\cite{FreiGHHM24}, we show \apx-hardness for every fixed $\delta\in (0,3/2)$.
In this work, we therefore present approximation algorithms.
As is standard, an \emph{$\alpha$-approximation algorithm} is one that runs
	in polynomial time and finds a solution of value within a factor $\alpha$ of the optimum.
As our main approximation result, for every fixed $\delta \in (0,3/2)$,
	we give constant-factor approximation algorithms for finding a shortest \deltatour.
We list our results in \cref{table:approximation}
	and plot the approximation ratio against $\delta$ in \cref{figure:approximation}.

\begin{theorem}[\textbf{Constant-Factor Approximation}]\label{thm:approxmain}
For every fixed $\delta \in (0,3/2)$, the problem \deltatourprob admits a polynomial-time constant-factor approximation algorithm.
\end{theorem}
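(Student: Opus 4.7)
The plan is to prove the theorem by case analysis on $\delta$, designing a separate constant-factor algorithm for each of the subintervals listed in \cref{table:approximation}: $(0, 1/6]$, $(1/6, 1/2)$, $\{1/2\}$, $(1/2, 33/40)$, $[33/40, 1)$, and $[1, 3/2)$. Since each ratio shown is a constant for any fixed $\delta < 3/2$, combining the six sub-theorems yields the result; the bulk of the work lives inside those sub-theorems, which I sketch below.

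\textbf{Small $\delta$.} For $\delta < 1/2$ a shortest Chinese Postman tour is itself a $\delta$-tour and is computable in polynomial time, so the task reduces to lower-bounding $\opttour$ against the Chinese Postman value. The key structural observation is that any $\delta$-tour that does not fully traverse an edge $e$ of length $1$ must still enter $e$ from its endpoints to a combined depth of at least $1 - 2\delta$, contributing $2(1-2\delta)$ to the tour length along $e$. Amortizing this over all edges gives the ratio $1/(1-2\delta)$ on $(0, 1/6]$. The flat ratio $3/2$ on $(1/6, 1/2)$ is reminiscent of Christofides and suggests a hybrid strategy: take a cheap spanning subgraph, then patch parities via a T-join that exploits the $\delta$-slack to shortcut edges the tour can cheaply approach from both sides.

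\textbf{Larger $\delta$.} At $\delta = 1/2$, a $\delta$-tour is essentially a closed walk hitting every vertex, so the problem reduces to Graphic TSP and Seb\H{o}--Vygen yields a $7/5$-approximation. For $\delta > 1/2$ the tour may skip entire edges and turn at interior points, so the first step is to build a polynomial-size set of candidate stopping points whose $\delta$-coverage implies $\delta$-coverage of the whole graph; one then computes a $\delta$-dominating subset of candidates and connects it by a short closed walk. For $\delta \in (1/2, 33/40)$ a refinement scales the $1.4$ factor by $1/(2-2\delta)$, reflecting the additional per-edge slack; for $\delta \in [33/40, 1)$ the constant $4$ is obtained once a doubling-the-Steiner-tree argument beats the TSP-scaling bound; and for $\delta \in [1, 3/2)$ a Steiner tree plus T-join construction on a well-chosen interior $\delta$-dominating set yields the ratio $3/(3-2\delta)$.

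\textbf{Main obstacle.} The recurring difficulty is the continuous nature of the instance: optima may turn at interior points, so the per-edge covering cost is a piecewise-linear function of two entry depths rather than a discrete choice. Showing that one can reduce, without losing a constant factor, to a finite TSP-like instance---especially at the regime boundaries $\delta = 1/2$, $33/40$, and $1$, where different algorithmic paradigms must agree---is the most delicate step. The ratio $3/(3-2\delta)$ also diverges as $\delta \nearrow 3/2$, matching the start of the $O(\log n)$ regime and explaining why no uniform constant-factor bound is pursued across the entire interval.
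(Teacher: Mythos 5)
Your overall plan---split $(0,3/2)$ into the same subintervals and design one algorithm per regime---matches the paper, and your treatment of the two easiest regimes is essentially right (Chinese Postman with the $1/(1-2\delta)$ charging for $\delta\le 1/6$; graphic TSP via Seb\H{o}--Vygen at $\delta=1/2$, modulo the handling of degree-one vertices). But in the regimes where the real work happens, the key ideas are missing. For $\delta\in(1/6,1/2)$ your ``cheap spanning subgraph plus a T-join exploiting the $\delta$-slack'' is not an argument: you never say against what quantity the output is compared, and the whole difficulty is the lower bound on $\opttour$. The paper instead builds an explicit auxiliary weighted graph whose vertices are interior points at depth $2\delta$ (and $1-\delta$ on leaf edges) and proves a two-way correspondence between TSP tours there and structured $\delta$-tours in $G$ (via the characterization and discretization lemmas), so that Christofides' ratio transfers verbatim; nothing in your sketch substitutes for that correspondence.

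The gap is most serious on $[33/40,3/2)$. A $\delta$-tour in this range need not stop at any vertex, so neither ``doubling a Steiner tree'' nor ``Steiner tree plus T-join on a $\delta$-dominating set'' comes with any comparison to $\opttour$: a connected structure built on a dominating set chosen without reference to the optimum can be far longer than a shortest $\delta$-tour, and you give no mechanism to bound the loss by a constant (indeed, the generic dominate-then-connect strategy is exactly what degrades to $\Theta(\log n)$ once $\delta\ge 3/2$). The paper's crucial ingredient here is the vertex-cover-tour LP of K\"onemann et al.: one shows that the LP optimum $\optlp(G)$ lower-bounds the length of \emph{every} $1$-tour (by charging the fractional edge loads $\Lambda_e$ of a nice optimal tour against the cut constraints), so the LP-based $3$-approximate vertex cover tour is a $3$-approximate $1$-tour; the other values of $\delta$ are then reached by the $1/(3-2\delta)$ rescaling (for $\delta>1$) and by an augmentation whose extra cost is charged at $4(1-\delta)$ per non-leaf vertex against any $\delta$-tour (for $\delta<1$). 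All of these steps, as well as your small-$\delta$ per-edge charging, presuppose that one may restrict attention to nice tours with stopping points in a small explicit set $S_\delta$; you correctly flag this discretization as the delicate point but do not supply it, and without it none of the per-edge or per-vertex lower-bound arguments go through.
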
          
        
The problem behaves very differently in the various regimes of $\delta$, even within the range of $(0,3/2)$, 
and we exploit connections to different problems for different values of $\delta$:

\begin{description}
\item[{Case \boldmath$\delta\in(0,1/6]$.}]
	There is a close relation between our problem and the \CPP in this range, 
	which gives a good approximation ratio.
When $\delta$ approaches $0$, our approximation ratio approaches $1$.
See \cref{thm:approx:ub:zero_sixth}.

\item[{Case \boldmath$\delta\in(1/6,33/40)$.}]
	When $1/6 < \delta < 1/2$, the problem can be reduced to solving TSP on metric instances, for
	which we can use Christofides' 3/2-approximation
	algorithm~\cite{Christofides2022} to obtain the same approximation
	ratio for our problem. See \cref{thm:approx:ub:sixth_half}.

	A simplification of this approach for $\delta = 1/2$ allows us to
	use the better $7/5$-approximation for TSP due to Seb\H o and Vygen \cite{SeboV14}. See \cref{thm:approx:ub:half}. 

	Finally, for $1/2 < \delta < 33/40$, it turns out that a \deltatour[\frac12]
	is a good approximation of a \deltatour. See \cref{thm:approx:ub:half:threequarters}.

\item[{Case \boldmath$\delta\in[33/40,3/2)$.}]
	The problem here is closely related to a variation of the \VC problem,  
	some results on which we exploit in our approximation algorithms~\cite{ArkinHH1993,KonemannKPS03}.
See Theorems~\ref{thm:approx:ub:one:threehalves}~and~\ref{thm:approx:ub:threequarters:one}.
      \end{description}

      Once $\delta$ reaches $3/2$, the problem \deltatourprob suddenly changes character: it becomes similar to \domset,
      where only a logarithmic-factor approximation is known; see \cref{thm:approx:ub:threehalves:logn}.
	  In~\cite{FreiGHHM24}, we show that this is best possible unless $\p = \np$.

\begin{theorem}[\textbf{Logarithmic Approximation}]\label{thm:logmain}
	For every fixed $\delta \geq 3/2$, the problem \deltatourprob admits a polynomial-time $\Oh(\log n)$-approximation algorithm.
\end{theorem}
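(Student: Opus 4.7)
My plan is to reduce the problem for $\delta \geq 3/2$ to a distance-dominating-tree problem and to apply a greedy Set Cover--style algorithm.

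First, I would exploit $\delta \geq 3/2$ to reduce the continuous $\delta$-covering condition to a discrete vertex condition: a subgraph $H$ of $G$ $\delta$-covers every point of $P(G)$ whenever every vertex of $G$ is within graph distance $\delta - \tfrac12$ of $H$. This is because every point of an edge lies within $\tfrac12$ of one of its endpoints, which itself lies within $\delta - \tfrac12$ of $H$, summing to at most $\delta$. Hence it suffices to find a connected $H \subseteq G$ that $(\delta - \tfrac12)$-dominates $V(G)$ and then traverse $H$ via a closed walk of length at most $2|H|$ by doubling.

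Second, I would establish that $\opt$ is within a constant factor of the minimum weight $|H^\ast|$ of such a dominating tree. The upper bound $\opt \leq 2|H^\ast|$ is immediate from doubling. The lower bound---that $|H^\ast|$ is $\Oh(\opt)$---follows by extracting from the optimal $\delta$-tour $T^\ast$ an appropriate dominating subtree: the edge support of $T^\ast$ is a connected $\delta$-covering subgraph of $G$ of weight at most $\opt$, and a local thickening argument converts it into a $(\delta - \tfrac12)$-dominating subtree of weight $\Oh(\opt)$.

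Third, I would approximate the minimum dominating tree greedily. Starting from any single vertex, iteratively extend the current tree by a shortest path of length at most $2\delta$ that maximizes the ratio of newly dominated vertices to path length. A standard harmonic analysis then bounds the total tree weight by $2\delta \cdot H_n \cdot |H^\ast|$, where $H_n = \Oh(\log n)$ is the $n$-th harmonic number. Combined with doubling and the lower bound $|H^\ast| = \Oh(\opt)$, this gives a $\delta$-tour of length $\Oh(\delta \log n) \cdot \opt$, which is $\Oh(\log n) \cdot \opt$ for fixed $\delta \geq 3/2$.

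The main obstacle is twofold. First, the greedy analysis requires showing that at every stage the current tree admits a short extension (of length at most $2\delta$) that dominates a sizeable fraction of the remaining uncovered vertices; this needs a careful decomposition of $H^\ast$ into short pieces compatible with the current partial tree. Second, bridging the half-unit gap between the $(\delta + \tfrac12)$-coverage guaranteed by the vertex set of the optimal tour and the $(\delta - \tfrac12)$-domination required by our covering condition is delicate; it is typically handled either by a local covering argument (as sketched above) or by directly solving an LP relaxation of the dominating-tree problem and rounding.
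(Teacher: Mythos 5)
Your plan follows the same broad template as the paper (reduce to a domination-type problem, approximate, connect into a tour), but the specific reduction in your first two steps has a fatal gap: replacing the continuous covering condition by ``every vertex of $G$ is within distance $\delta-\tfrac12$ of $H$'' is only a \emph{sufficient} condition, and the cheapest connected $H$ satisfying it can be arbitrarily more expensive than a shortest $\delta$-tour, so the claimed bound $|H^\ast|=\Oh(\opt)$ is false. Concretely, take $\delta=2$ and let $G$ be a single edge $c_1c_2$ with $m$ pendant paths of length $2$ attached to each of $c_1$ and $c_2$. The tour that goes from $c_1$ to $c_2$ and back has length $2$ and is a $2$-tour, since every point of every pendant path is within distance $2$ of its attachment vertex; hence $\opt\leq 2$. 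However, each pendant leaf is at distance $2$ from $\{c_1,c_2\}$, so any connected subgraph (indeed, any connected point set) that comes within distance $\delta-\tfrac12=\tfrac32$ of every vertex must penetrate each of the $2m$ pendant paths by at least half an edge, giving weight $\Omega(m)$. No ``local thickening'' of the optimal tour can repair this, because the thickening cost scales with the number of deficient branches rather than with $\opt$; your proposal would output a tour of length $\Omega(m)$ on an instance with $\opt\leq 2$. This is exactly why the paper does not shrink the radius: it first discretizes the candidate stopping points (Discretization Lemma), then builds an auxiliary graph $\Gamma(G,\delta)$ whose extra vertices represent edge \emph{segments}, so that domination in $\Gamma(G,\delta)$ characterizes $\delta$-covering of all interior points exactly, not merely vertex domination at a reduced radius; the lower bound then comes for free because the stopping points of an optimal tour themselves form a dominating set of $\Gamma(G,\delta)$ of size at most $\len(T^\ast)/s_\delta$.

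A secondary problem is your third step: the $\Oh(\log n)$ guarantee for greedily growing a dominating \emph{tree} is not a standard harmonic-sum argument. The needed claim---that the current tree always admits an extension of length at most $2\delta$ covering a proportional share of the still-undominated vertices---is precisely what you would have to prove, and connected-domination-type problems are genuinely harder than set cover (which is why, for $\delta$ given as input, the paper falls back on Kutiel's $\log^3 n$-approximation for dominating trees). For fixed $\delta$ this difficulty is avoidable as in the paper: compute an ordinary, not necessarily connected, dominating set of $\Gamma(G,\delta)$ with the greedy $\Oh(\log n)$ set-cover algorithm and then connect its points, which costs only $\Oh(\delta)$ per point because consecutive points of any dominating set can be linked by walks of length at most $2\delta$ (otherwise some midpoint would be uncovered); since $\delta$ is fixed, this factor is absorbed into the constant. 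Finally, degenerate tours with at most two stopping points (e.g., instances with $\opt=0$) must be handled separately, as the paper does, since any purely multiplicative guarantee breaks there.
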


The above approximation ratio in fact depends on $\delta$, which the big-$\Oh$ notation hides.
Thus, if $\delta$ is not fixed and is rather given as an input,
this approximation guarantee can be arbitrarily bad. We show that a
polylogarithmic-factor approximation is fortunately still possible in that setting.

\begin{restatable}[Polylogarithmic Approximation]{linkedtheorem}{ThmApproxUbThreeHalvesLogNpThree}
\label{ThmApproxUbThreeHalvesLogNpThree}
\label{thm:approx:ub:threehalves:lognpthree}
	There is a polynomial-time algorithm that, given $\delta > 0$ and a graph $G$ of order
	$n$, computes a $64(\log n)^3$-approximation of a shortest \deltatour
	of $G$.
\end{restatable}

\Cref{section:prel} begins with formal notions including a thorough definition of a \deltatour.
Then \cref{section:overview} gives an extended overview of our results.

\section{Formal Definitions}
\label{section:prel}

\subparagraph{General Definitions.}
For a positive integer $n$, we denote the set $\{1, \dots, n\}$ by $[n]$.
All graphs in this article are undirected, unweighted and do not contain parallel edges or loops. Let $G$ be a graph.
For a subset of vertices $V' \subseteq V(G)$, we denote by $G[V']$ the subgraph induced by $V'$.
The neighborhood of a vertex $u$ is $N_G(u) \coloneqq \{ v \in V(G) \mid uv \in E(G)\}$. 
We write $uv$ for an edge $\{u,v\} \in E(G)$.
We denote by $\log$ the binary logarithm.

\subparagraph{Problem Related Definitions.}
For a graph~$G$, we define a metric space whose 
	point set $P(G)$ contains, somewhat informally speaking, all points on the continuum of each edge, which has unit length. 
We use the word \emph{vertex} for the elements in $V(G)$, 
	whereas we use the word \emph{point} to denote elements in $P(G)$. Note however, that each vertex of $G$ is also a point of $G$.

The set $P(G)$ is the set of points $p(u,v,\lambda)$ for every edge $uv \in E(G)$ and every $\lambda \in [0,1]$
	where $p(u,v,\lambda)=p(v,u,1-\lambda)$; $p(u,v,0)$ coincides with $u$
	and $p(u,v,1)$ coincides with $v$.

The \emph{edge segment} $P(p,q)$ of $p$ and $q$ then is the subset of points
	$\{ p(u,v,\mu) \mid \min\{\lambda_p,\lambda_q\} \leq \mu \leq \max\{\lambda_p,\lambda_q\} \}$.
A $\seq{p&q}$-\emph{walk} $T$ between points $p_0 \coloneqq p$ and $p_z \coloneqq q$
	is a finite sequence of points $\seq{p_0&p_1&\dots&p_z}$
	where every two consecutive points are distinct and lie on the same edge,
	that is, formally, for every $i\in[z]$
	there are an edge $u_i v_i \in E(G)$ and $\lambda_i,\mu_i \in [0,1]$ with $\lambda_i \neq \mu_i$
	such that $p_{i-1}=p(u_i,v_i,\lambda_i)$ and $p_{i}=p(u_i,v_i,\mu_i)$.
When $p$ and $q$ are not specified, we may simply write \emph{walk} instead of $\seq{p&q}$-walk.
The points in the sequence defining a walk are called its \emph{stopping points}. The point set of $T$ is $P(T)= \bigcup_{i \in [z]} P(p_{i-1},p_i)$. For some $p\in P(T)$, we say that $T$ {\it passes} $p$.

For two points $p,q \in P(G)$, we now define their so-called {\it distance} $\dist(p,q)$. 
If $p$ and $q$ are on the same edge $uv$,
say $p=p(u,v,\lambda_p)$ and $q=p(u,v,\lambda_q)$, we define  $\dist(p,q)=\abs{\lambda_q-\lambda_p}$. In order to extend this definition to arbitrary point pairs, we first need the definition of the length of a walk. 
Namely, the \emph{length} $\len(T)$ of a walk $T=\seq{p_0&p_1&\dots&p_z}$ is $\sum_{i\in[z]}\dist(p_{i-1},p_i)$.
A $\seq{p&q}$-walk with minimum length among all $\seq{p&q}$-walks is called \emph{shortest}. We can now extend the notion of distance to arbitrary point pairs.
The \emph{distance between two points} $p,q \in P(G)$, denoted $\dist(p,q)$,
	is either the length of a shortest $\seq{p&q}$-walk or, if no such walk exists, $\infty$ . It is easy to see that both definitions coincide for points located on the same edge.
Further, let $\dist(p,Q) = \inf\{\dist(p,q) \mid q \in Q \}$ for any point $p \in P(G)$ and point set $Q \subseteq P(G)$.
For some $p\in P(G)$ and a walk $T$, we write $\dist(p,T)$ as an abbreviation for $\dist(p,P(T))$.
We later show that $\dist(p, T)$ is in fact a minimum taken over the set of stopping points of $T$ (see \cref{nearstop}).

A \emph{tour} $T$ is a $\seq{p_0&p_z}$-walk with $p_0=p_z$.
For a real $\delta>0$, a $\delta$-\emph{tour} is a tour
where $\dist(p,T) \leq \delta$ for every point $p \in P(G)$.
We study the following minimization problem.
\begin{myproblem}[\deltatourprob, where $\delta \geq 0$]
\label{prob:deltatour}%
Instance&A connected simple graph $G$.\\
Solution&Any \deltatour $T$.\\
Goal& Minimize the length $\len(T)$.
\end{myproblem}

Further, we use the following notions for a tour $T=\seq{p_0&p_1&\dots&p_z}$.
A \emph{tour segment} of $T$ is a walk
	given by a contiguous subsequence of $\seq{p_0&p_1&\dots&p_z}$.
The tour $T$ \emph{stops} at a point $p \in P(G)$ if $p \in \{p_0,p_1,\dots,p_z\}$ 
and \emph{traverses} an edge $uv$ if $uv$ or $vu$ is a tour segment of $T$.
The \emph{discrete length} of a tour is $z$, that is, the length 
of the finite sequence of points representing it. We denote the discrete length
of a tour $T$ by $\alpha(T)$.

A point $p\in P(G)$ is \emph{integral} if it coincides with a vertex.
Similarly, $p=p(u,v,\lambda)$ is \emph{half-integral} if $\lambda \in \{0,\half,1\}$. A tour is {\it integral} ({\it half-integral}) if all its stopping points are integral (half-integral).

The \emph{extension} of a tour $T= \seq{p_0&p_1&\dots&p_z}$, denoted as $\ceil{T}$,
	is the integral tour where,
	for every edge $uv \in E(G)$ and every $\lambda<1$,
	every tour segment~$\seq{u&p(u, v, \lambda)&u}$
	is replaced by $\seq{u&v&u}$.

We note that $P(T) \subseteq P(\ceil{T})$.

\begin{figure}[htb]
\begin{subfigure}{.64\textwidth}
\centering
\begin{tikzpicture}[scale=2.1,
		vertex/.style={
			draw,
			circle, 
			inner sep=.0pt, 
			minimum size=0.62cm},
		every edge/.append style={}
			]
\newcommand{\inpointset}{very thick}
	\coordinate (cc1) at (90+0:.5);
	\coordinate (cc2) at (90+120:.5);
	\coordinate (cc3) at (90+240:.5);
	\coordinate (ca1) at (90+0:1.9);
	\coordinate (ca2) at (90+120:1.9);
	\coordinate (ca3) at (90+240:1.9);
	\coordinate (cb1) at ($ (ca1)!.5!(ca2) $);
	\coordinate (cb2) at ($ (ca2)!.5!(ca3) $);
	\coordinate (cb3) at ($ (ca3)!.5!(ca1) $);

	\coordinate (cv1) at ($ (ca1)!.2!(ca2) $);
	\coordinate (cv2) at ($ (ca2)!.2!(ca3) $);
	\coordinate (cv3) at ($ (ca3)!.2!(ca1) $);

	\coordinate (cx1) at ($ (ca1)!.4!(ca2) $);
	\coordinate (cx2) at ($ (ca2)!.4!(ca3) $);
	\coordinate (cx3) at ($ (ca3)!.4!(ca1) $);

	\coordinate (cy1) at ($ (ca1)!.6!(ca2) $);
	\coordinate (cy2) at ($ (ca2)!.6!(ca3) $);
	\coordinate (cy3) at ($ (ca3)!.6!(ca1) $);

	\coordinate (cz1) at ($ (ca1)!.8!(ca2) $);
	\coordinate (cz2) at ($ (ca2)!.8!(ca3) $);
	\coordinate (cz3) at ($ (ca3)!.8!(ca1) $);

	\coordinate (mac1) at ($ (ca1)!.5!(cc1) $);
	\coordinate (mac1l) at ($ (mac1)!.1!90:(cc1) $);
	\coordinate (mac1r) at ($ (mac1)!.1!-90:(cc1) $);
	\coordinate (mac2) at ($ (ca2)!.5!(cc2) $);
	\coordinate (mac2l) at ($ (mac2)!.1!90:(cc2) $);
	\coordinate (mac2r) at ($ (mac2)!.1!-90:(cc2) $);
	\coordinate (mac3) at ($ (ca3)!.5!(cc3) $);
	\coordinate (mac3l) at ($ (mac3)!.1!90:(cc3) $);
	\coordinate (mac3r) at ($ (mac3)!.1!-90:(cc3) $);

	\node[vertex] (c1) at (cc1) {$c_1$};
	\node[vertex] (c2) at (cc2) {$c_2$};
	\node[vertex] (c3) at (cc3) {$c_3$};
	\node[vertex,\inpointset] (a1) at (ca1) {$a_1$};
	\node[vertex,\inpointset] (a2) at (ca2) {$a_2$};
	\node[vertex,\inpointset] (a3) at (ca3) {$a_3$};

	\node[vertex,\inpointset] (v1) at (cv1) {$v_1$};
	\node[vertex,\inpointset] (v2) at (cv2) {$v_2$};
	\node[vertex,\inpointset] (v3) at (cv3) {$v_3$};

	\node[vertex,\inpointset] (x1) at (cx1) {$x_1$};
	\node[vertex,\inpointset] (x2) at (cx2) {$x_2$};
	\node[vertex,\inpointset] (x3) at (cx3) {$x_3$};

	\node[vertex,\inpointset] (y1) at (cy1) {$y_1$};
	\node[vertex,\inpointset] (y2) at (cy2) {$y_2$};
	\node[vertex,\inpointset] (y3) at (cy3) {$y_3$};

	\node[vertex,\inpointset] (z1) at (cz1) {$z_1$};
	\node[vertex,\inpointset] (z2) at (cz2) {$z_2$};
	\node[vertex,\inpointset] (z3) at (cz3) {$z_3$};

	\node[minimum size=0.2cm,inner sep=0,circle,fill=black]  (m1) at (mac1) {};
	\node[minimum size=0.2cm,inner sep=0,circle,fill=black]  (m2) at (mac2) {};
	\node[minimum size=0.2cm,inner sep=0,circle,fill=black]  (m3) at (mac3) {};

	\node[minimum size=0.2cm,inner sep=0,circle,fill=black]  (m1) at (mac1) {};
	\node[minimum size=0.2cm,inner sep=0,circle,fill=black]  (m2) at (mac2) {};
	\node[minimum size=0.2cm,inner sep=0,circle,fill=black]  (m3) at (mac3) {};

\newcommand{\vertexangle}{15}
\newcommand{\vertexdistance}{.19}
\newcommand{\midpointangle}{30}
\newcommand{\midpointdistance}{.09}

	\draw[\inpointset] (a1)--(mac1)--(a1)--(v1)--(x1)--(y1)--(z1)--(a2)--(mac2)--(a2)--(v2)--(x2)--(y2)--(z2)--(a3)--(mac3)--(a3)--(v3)--(x3)--(y3)--(z3)--(a1);
	\draw (c1)--(mac1)--(a1)--(mac1)--(c1)--(c2)--(mac2)--(a2)--(mac2)--(c2)--(c3)--(mac3)--(a3)--(mac3)--(c3)--(c1);
\begin{scope}[/pgf/fpu/install only={reciprocal}]
	\draw [ultra thick,dash pattern={on 6pt off 2pt}, rounded corners=.5] 
  ([shift={(-90+\vertexangle:\vertexdistance)}]ca1)--([shift={(90-\midpointangle:\midpointdistance)}]mac1) arc (90-\midpointangle:90-360+\midpointangle:\midpointdistance)--([shift={(-90-\vertexangle:0\vertexdistance)}]ca1)--([shift={(60-\vertexangle:\vertexdistance)}]cv1) arc (60-\vertexangle:-120+\vertexangle:\vertexdistance)--([shift={(60-\vertexangle:\vertexdistance)}]cx1) arc (60-\vertexangle:-120+\vertexangle:\vertexdistance)--([shift={(60-\vertexangle:\vertexdistance)}]cy1) arc (60-\vertexangle:-120+\vertexangle:\vertexdistance)--([shift={(60-\vertexangle:\vertexdistance)}]cz1) arc (60-\vertexangle:-120+\vertexangle:\vertexdistance)
--([shift={(-90+120+\vertexangle:\vertexdistance)}]ca2)--([shift={(120+90-\midpointangle:\midpointdistance)}]mac2) arc (120+90-\midpointangle:120+90-360+\midpointangle:\midpointdistance)--([shift={(-90+120-\vertexangle:\vertexdistance)}]ca2)--([shift={(180-\vertexangle:\vertexdistance)}]cv2) arc (180-\vertexangle:0+\vertexangle:\vertexdistance)--([shift={(180-\vertexangle:\vertexdistance)}]cx2) arc (180-\vertexangle:0+\vertexangle:\vertexdistance)--([shift={(180-\vertexangle:\vertexdistance)}]cy2) arc (180-\vertexangle:0+\vertexangle:\vertexdistance)--([shift={(180-\vertexangle:\vertexdistance)}]cz2) arc (180-\vertexangle:0+\vertexangle:\vertexdistance)
--([shift={(-90+240+\vertexangle:\vertexdistance)}]ca3)--([shift={(240+90-\midpointangle:\midpointdistance)}]mac3) arc (240+90-\midpointangle:240+90-360+\midpointangle:\midpointdistance)--([shift={(-90+240-\vertexangle:\vertexdistance)}]ca3)--([shift={(300-\vertexangle:\vertexdistance)}]cv3) arc (300-\vertexangle:120+\vertexangle:\vertexdistance)--([shift={(300-\vertexangle:\vertexdistance)}]cx3) arc (300-\vertexangle:120+\vertexangle:\vertexdistance)--([shift={(300-\vertexangle:\vertexdistance)}]cy3) arc (300-\vertexangle:120+\vertexangle:\vertexdistance)--([shift={(300-\vertexangle:\vertexdistance)}]cz3) arc (300-\vertexangle:120+\vertexangle:\vertexdistance)
--cycle;
\end{scope}

\end{tikzpicture}
\caption{A graph and a \deltatour for $\delta=1$.  
The tour $\delta$-covers the inner part of this graph by peeking into three edges up to 
the midpoint. These three peek points are highlighted as the thick dots. 
The depicted tour (the thick dashed line) has length $18$, which is shortest.%
}
\label{fig:example1}
\end{subfigure}
\hfill
\begin{subfigure}{.34\textwidth}
\begin{tikzpicture}[scale=4,rotate=30,
		vertex/.style={
			draw,
			circle, 
			inner sep=.0pt, 
			minimum size=0.62cm},
		every edge/.append style={}
			]

	\coordinate (cu) at ($(0,1)+(-30:.8)$);
	\coordinate (cv) at (0,1);
	\coordinate (cm1) at (0,1/3);
	\coordinate (cm2) at (0,1/6);
	\coordinate (cx) at (0,0);
	\coordinate (cy) at (-120-30:.8);
	\coordinate (cz) at (-120+30:.8);

	\node[vertex] (u) at (cu) {$u$};
	\node[vertex] (v) at (cv) {$v$};
	\node[vertex] (x) at (cx) {$x$};
	\node[vertex] (y) at (cy) {$y$};
	\node[vertex] (z) at (cz) {$z$};

	\node[minimum size=0.2cm,inner sep=0,outer sep=-1mm,circle,fill=black]  (m1) at (cm1) {};
	\node[minimum size=0.2cm,inner sep=0,outer sep=-1mm,circle,fill=black]  (m2) at (cm2) {};

\newcommand{\midpointangle}{29}
\newcommand{\midpointdistance}{.045}
	\draw[very thick] (m1)--(m2);
	\draw (u)--(v)--(m1)--(m2)--(x)--(y)--(z)--(x);
	\draw [ultra thick,dash pattern={on 6pt off 2pt}, rounded corners=.5] 
	([shift={(-90+\midpointangle:\midpointdistance)}]cm1) arc (-90+\midpointangle:360-90-\midpointangle:\midpointdistance)--([shift={(90+\midpointangle:\midpointdistance)}]cm2) arc (90+\midpointangle:360+90-\midpointangle:\midpointdistance)--cycle;
	\end{tikzpicture}
\caption{The depicted shortest \deltatour for $\delta=5/3$ of length $2\cdot1/6$ travels between the two points on edge $vx$ at distances $1/3$ and $1/6$ from $x$.}
\label{fig:example2}
\end{subfigure}
\caption{Two examples of a \deltatour in a graph. On the right, see the special case of a tour fully contained in an edge.}
\label{fig:example}
\end{figure}


\section{Overview of Results}
\label{section:overview}

\Cref{section:overview:structural:results} provides key technical insights.
We present our approximation algorithms in \cref{section:overview:approximation:algorithms}.

\subsection{Structural Results}
\label{section:overview:structural:results}

\begin{figure}[t]
\centering
\begin{subfigure}{.24\textwidth}
\begin{tikzpicture}[scale=2,rotate=0,
		vertex/.style={draw,circle,inner sep=.0pt,minimum size=0.62cm},]
    \clip(-.22,-.3) rectangle (1.3,.3);
	\coordinate (cx) at (0,1);
	\coordinate (cy) at ($(1,0)+(0:1)$);
	\coordinate (cz) at ($(1,0)+(-90:1)$);
	\coordinate (cu) at (0,0);
	\coordinate (cv) at (1,0);
	\node[vertex,very thick] (u) at (cu) {$u$};
	\node[vertex,very thick] (v) at (cv) {$v$};
	\node[vertex] (x) at (cx) {$x$};
	\node[vertex] (y) at (cy) {$y$};
	\node[vertex] (z) at (cz) {$z$};
\newcommand{\vertexangle}{12}
\newcommand{\vertexdistance}{.21}
\newcommand{\midpointangle}{30}
\newcommand{\midpointdistance}{.08}
	\draw[very thick] (x)--(u);
	\draw[very thick] (y)--(v)--(z);
	\draw (u)--(v)--(y)--(z)--(v);
	\draw [ultra thick,dash pattern={on 6pt off 2pt}, rounded corners=.5] 
	([shift={(-90+\vertexangle:\vertexdistance)}]cx) arc (-90+\vertexangle:360-90-\vertexangle:\vertexdistance)--([shift={(90+\vertexangle:\vertexdistance)}]cu) arc (90+\vertexangle:450-\vertexangle:\vertexdistance)--cycle;
	\draw [ultra thick,dash pattern={on 6pt off 2pt}, rounded corners=.5] 
	([shift={(-180+\vertexangle:\vertexdistance)}]cy) arc (-180+\vertexangle:180-\vertexangle:\vertexdistance)--([shift={(\vertexangle:\vertexdistance)}]cv) arc (\vertexangle:270-\vertexangle:\vertexdistance)--([shift={(90+\vertexangle:\vertexdistance)}]cz) arc (90+\vertexangle:90-\vertexangle:\vertexdistance);
	\end{tikzpicture}
\caption{Staying out.}
\end{subfigure}
\hfill
\begin{subfigure}{.24\textwidth}
\begin{tikzpicture}[scale=2,rotate=0,
		vertex/.style={draw,circle,inner sep=.0pt,minimum size=0.62cm},]
    \clip(-.22,-.3) rectangle (1.3,.3);
	\coordinate (cx) at (0,1);
	\coordinate (cy) at ($(1,0)+(0:1)$);
	\coordinate (cz) at ($(1,0)+(-90:1)$);
	\coordinate (cu) at (0,0);
	\coordinate (cv) at (1,0);
	\coordinate (cm) at (.65,0);
	\node[vertex,very thick] (u) at (cu) {$u$};
	\node[vertex] (v) at (cv) {$v$};
	\node[vertex] (x) at (cx) {$x$};
	\node[vertex] (y) at (cy) {$y$};
	\node[vertex] (z) at (cz) {$z$};
	\node[minimum size=0.2cm,inner sep=0,outer sep=-1mm,circle,fill=black]  (m) at (cm) {};
\newcommand{\vertexangle}{12}
\newcommand{\vertexdistance}{.21}
\newcommand{\midpointangle}{30}
\newcommand{\midpointdistance}{.08}
	\draw[very thick] (x)--(u)--(m);
	\draw (u)--(v)--(y)--(z)--(v);
	\draw [ultra thick,dash pattern={on 6pt off 2pt}, rounded corners=.5] 
	([shift={(-90+\vertexangle:\vertexdistance)}]cx) arc (-90+\vertexangle:360-90-\vertexangle:\vertexdistance)--([shift={(90+\vertexangle:\vertexdistance)}]cu) arc (90+\vertexangle:360-\vertexangle:\vertexdistance)--([shift={(-180+\midpointangle:\midpointdistance)}]cm) arc (-180+\midpointangle:180-\midpointangle:\midpointdistance)--([shift={(\vertexangle:\vertexdistance)}]cu) arc (\vertexangle:90-\vertexangle:\vertexdistance)--cycle;
	\end{tikzpicture}
\caption{Peeking in.}
\end{subfigure}
\hfill
\begin{subfigure}{.24\textwidth}
\begin{tikzpicture}[scale=2,rotate=0,
		vertex/.style={draw,circle,inner sep=.0pt,minimum size=0.62cm},]
    \clip(-.22,-.3) rectangle (1.3,.3);
	\coordinate (cx) at (0,1);
	\coordinate (cy) at ($(1,0)+(0:1)$);
	\coordinate (cz) at ($(1,0)+(-90:1)$);
	\coordinate (cu) at (0,0);
	\coordinate (cv) at (1,0);
	\node[vertex,very thick] (u) at (cu) {$u$};
	\node[vertex,very thick] (v) at (cv) {$v$};
	\node[vertex] (x) at (cx) {$x$};
	\node[vertex] (y) at (cy) {$y$};
	\node[vertex] (z) at (cz) {$z$};
\newcommand{\vertexangle}{12}
\newcommand{\vertexdistance}{.21}
\newcommand{\midpointangle}{30}
\newcommand{\midpointdistance}{.08}
	\draw[very thick] (x)--(u)--(v);
	\draw[very thick] (y)--(v)--(z);
	\draw (u)--(v)--(y)--(z)--(v);
	\draw [ultra thick,dash pattern={on 6pt off 2pt}, rounded corners=.5] 
	([shift={(-90+\vertexangle:\vertexdistance)}]cx) arc (-90+\vertexangle:360-90-\vertexangle:\vertexdistance)--([shift={(90+\vertexangle:\vertexdistance)}]cu) arc (90+\vertexangle:360-\vertexangle:\vertexdistance)--([shift={(180+\vertexangle:\vertexdistance)}]cv) arc (180+\vertexangle:270-\vertexangle:\vertexdistance)--([shift={(90+\vertexangle:\vertexdistance)}]cz) arc (90+\vertexangle:90-\vertexangle:\vertexdistance);
	\draw [ultra thick,dash pattern={on 6pt off 2pt}, rounded corners=.5] 
	([shift={(90-\vertexangle:\vertexdistance)}]cz)--([shift={(-90+\vertexangle:\vertexdistance)}]cv) arc (-90+\vertexangle:-\vertexangle:\vertexdistance)--([shift={(-180+\vertexangle:\vertexdistance)}]cy);
	\end{tikzpicture}
\caption{Traversing once.}
\end{subfigure}
\hfill
\begin{subfigure}{.24\textwidth}
\begin{tikzpicture}[scale=2,rotate=0,
		vertex/.style={draw,circle,inner sep=.0pt,minimum size=0.62cm},]
    \clip(-.22,-.3) rectangle (1.3,.3);
	\coordinate (cx) at (0,1);
	\coordinate (cy) at ($(1,0)+(0:1)$);
	\coordinate (cz) at ($(1,0)+(-90:1)$);
	\coordinate (cu) at (0,0);
	\coordinate (cv) at (1,0);
	\node[vertex,very thick] (u) at (cu) {$u$};
	\node[vertex,very thick] (v) at (cv) {$v$};
	\node[vertex] (x) at (cx) {$x$};
	\node[vertex] (y) at (cy) {$y$};
	\node[vertex] (z) at (cz) {$z$};
\newcommand{\vertexangle}{12}
\newcommand{\vertexdistance}{.21}
\newcommand{\midpointangle}{30}
\newcommand{\midpointdistance}{.08}
	\draw[very thick] (x)--(u)--(v);
	\draw[very thick] (y)--(v)--(z);
	\draw (u)--(v)--(y)--(z)--(v);
	\draw [ultra thick,dash pattern={on 6pt off 2pt}, rounded corners=.5] 
	([shift={(-90+\vertexangle:\vertexdistance)}]cx) arc (-90+\vertexangle:360-90-\vertexangle:\vertexdistance)--([shift={(90+\vertexangle:\vertexdistance)}]cu) arc (90+\vertexangle:360-\vertexangle:\vertexdistance)--([shift={(180+\vertexangle:\vertexdistance)}]cv) arc (180+\vertexangle:270-\vertexangle:\vertexdistance)--([shift={(90+\vertexangle:\vertexdistance)}]cz) arc (90+\vertexangle:90-\vertexangle:\vertexdistance);
	\draw [ultra thick,dash pattern={on 6pt off 2pt}, rounded corners=.5] 
	([shift={(-180+\vertexangle:\vertexdistance)}]cy) arc (-180+\vertexangle:180-\vertexangle:\vertexdistance)--([shift={(\vertexangle:\vertexdistance)}]cv) arc (\vertexangle:180-\vertexangle:\vertexdistance)--([shift={(\vertexangle:\vertexdistance)}]cu) arc (\vertexangle:90-\vertexangle:\vertexdistance)--([shift={(-90+\vertexangle:\vertexdistance)}]cx) arc (-90+\vertexangle:360-90-\vertexangle:\vertexdistance);
	\end{tikzpicture}
\caption{Traversing twice.}
\end{subfigure}
\caption{The four ways a nice \deltatour defined by at least $3$ points can interact with an edge $uv$.}
\label{fig:fourbehaviors}
\end{figure}

Because TSP in the continuous model of graphs is studied in this article for the first time, 
we need to lay a substantial amount of groundwork. 
Due to the continuous nature of the problem, it is not clear a priori how to check if a solution is a valid \deltatour, 
or whether it is possible to compute a shortest \deltatour by a brute force search over a finite set of plausible tours. 
We clarify these issues in this section. 
While some of the arguments are intuitively easy to accept, 
the formal proofs (presented in \cref{section:structural}) are delicate with many corner cases to consider;
the reader might want to skip these proofs at first reading.

Sometimes, a \deltatour has to make U-turns inside edges to be shortest; see~\cref{fig:example1}.
Indeed, it can be checked that an optimal $1$-tour 
for the graph in~\cref{fig:example1} must look exactly as depicted.
Except for a single case, it is unnecessary for a tour to make more than one U-turn inside an edge.
Indeed, the only case where a shortest tour is forced to make two U-turns
in an edge is when the tour remains entirely within a single edge; see~\cref{fig:example2} for an example.
Note also that there are degenerate cases in which a shortest \deltatour consists of a single point.

However, unless a tour is completely contained in a single edge,
	we can see that there are only four reasonable ways for a \deltatour to interact with 
	the interior of any given edge, 
	illustrated in~\cref{fig:fourbehaviors}: 
\begin{description}
\item[(a)] completely avoiding the interior, 
\item[(b)] peeking into the edge from one side, 
\item[(c)] fully traversing the edge exactly once from one vertex to the other, or
\item[(d)] traversing the edge exactly twice.
\end{description}
\newcommand{\textDefNiceTour}{%
a tour $T=\seq{p_0&p_1&\dots&p_z}$ in a connected graph $G$ with $z \geq 3$ is \emph{nice} if
\begin{itemize}
\item $\{p_{i-1},p_i\}\cap V(G)\neq \emptyset$ for every $i \in [z]$
(i.e., $T$ has no two conecutive stopping points inside an edge),

\item
for every $i \in [z]$ with $p_i \notin V(G)$, we have $p_{i-1} = p_{(i+1)\bmod z}$\\
(i.e., whenever $T$ stops inside an edge, the previous and next stopping point are the same)
\item every edge $uv \in E(G)$ has at most one index $i \in [z]$ with
	$p_i \in \{ p(u,v,\lambda) \mid \lambda \in (0,1)\}$\\
(i.e., $T$ stops at most once inside any given edge),
\item if $T$ traverses an edge $uv \in E(G)$, there is no index $i \in [z]$ with $p_i \in \{ p(u,v,\lambda) \mid \lambda \in (0,1)\}$\\
(i.e., $T$ does not stop inside traversed edges), and 
\item every edge $uv \in E(G)$ is traversed at most twice by $T$.
\end{itemize}
}
We call a tour that restricts itself to this reasonable behavior a \emph{nice} tour. The following result allows us to restrict our search space to nice tours.

\begin{restatable}[Nice Tours]{linkedlemma}{tournice}
\label{lem:tournice}
Let $G$ be a connected graph. Further, let a tour~$T$ in $G$ be given.
Then, in polynomial time, we can compute a tour~$T'$ in $G$ with $\len(T')\leq \len(T)$, such that,
$T'$ is either nice or has at most two stopping points, and if $T$ is a
\deltatour for some $\delta \geq 0$, then so is $T'$.
\end{restatable}

Despite the continuous nature of \deltatourprob,
we show that we can actually study the problem in a discrete setting instead.
More precisely, we prove that
there is a nice shortest \deltatour~$T$ defined by points
	whose edge positions $\lambda$ come from a small explicitly defined set.
To show this, the idea is that there are only three scenarios for the edge position of a non-vertex stopping point $p$ of $T$.
\begin{description}
\item[1.] 
It has distance exactly $\delta$ to a vertex $u$.
An example is that $G$ is a long path with an endvertex $u$,
	and $p$ is the closest point of $P(T)$ to $u$.
Then $p$ has an edge position $\lambda$
	that is the fractional part of $\delta$.
\item[2.]
It has distance exactly $\delta$ to a half-integral point $p(u,v,\half)$.
An example is that $G$ is a long $\seq{w&w'}$-path with a triangle $\seq{u&v&w}$ attached to $w$,
	and $p$ is a closest point of $P(T)$ to $p(u,v,\half)$.
Then $p$ has an edge position $\lambda$
	which is the fractional part of $\delta+\frac{1}{2}$.
\item[3.]
It has distance exactly $2\delta$ to a vertex $u$.
An example is that $G$ contains a long $\seq{u&v}$-path $P$,
	$p$ is the closest point of $P(T) \setminus \{u\}$ on the path $P$ to $u$,
	and $T$ stops at $u$.
Then $p$ has an edge position $\lambda$
	which is the fractional part of $2\delta$.
\end{description}

Any \deltatour can efficiently be modified into one that is nice and whose stopping points have only such edge positions.
Our technical proof uses some theory of linear programming, in particular some results on the vertex cover polytope.

\begin{restatable}[Discretization Lemma]{linkedlemma}{TheoremDiscretization}
\label{TheoremDiscretization}
\label{lemma:discretization}

For every $\delta \geq 0$ and every connected graph $G$, there is a shortest \deltatour
	that is either nice or contains at most two stopping points and such that each stopping point of the tour can be described as $p(u,v,\lambda)$ with $\lambda \in S_\delta$
	where
$ S_\delta = \big \{0, \delta-\floor{\delta}, \half+\delta - \floor{\half+\delta}, 2\delta -\floor{2\delta} \big\}.$
\end{restatable}
As a consequence, we can find a shortest \deltatour by a brute-force algorithm. Using some related arguments, we can check whether a given tour actually is a \deltatour in polynomial time.

\subsection{Approximation Algorithms}
\label{section:overview:approximation:algorithms}

Here, we overview the approximation algorithms we design for different ranges of $\delta$.
Most of our algorithms follow a general paradigm; our approach is to design
	a collection of \textit{core} approximation algorithms for certain key values
	of $\delta$ and rely on one of the following two ideas to extrapolate 
	the approximation ratios we get to previous and subsequent intervals.
The first main idea uses the simple fact that a \deltatour is also \deltatour[(\delta+x)] for all $x>0$.
	Having an approximation algorithm for \deltatourprob,
	if we are able to reasonably bound the ratio between the lengths of a shortest $\delta$-tour and a shortest $(\delta+x)$-tour,
	we obtain an approximation algorithm for \deltatourprob[(\delta+x)]
	essentially for free.
The second main idea is complementary to the first. Namely, we show
	that we may also extend a given \deltatour[\delta]
	to obtain a \deltatour[(\delta-x)] where $x>0$.
	Again, having an approximation algorithm for the \deltatourprob,
	if we have a good bound on the total length of the extensions we add,
	we obtain an approximation algorithm for \deltatourprob[(\delta-x)].

\subparagraph{Approximation for \boldmath$\delta \in (0, 1/6]$.} 
The main idea is that a shortest Chinese Postman tour, that is, a tour which traverses every edge,
is a good approximation of a \deltatour.
Let us denote the length of a shortest \deltatour of a given graph by $\opttour$
	and the length of a shortest Chinese Postman tour by $\opt_{\text{CP}}$.
To bound the ratio ${\opt_{\text{CP}}}/{\opttour}$,
	we observe that there is
	a shortest \deltatour that, for every edge $uv$,
	either traverses $uv$ or
contains the segment of the form $up(u,v,\lambda)u$ for some $\lambda\in\{1-\delta,1-2\delta\}$.
We obtain a Chinese Postman tour
by replacing every such segment by a tour segment~$\seq{u&v&u}$.
This bounds ${\opt_{\text{CP}}}/{\opttour}$ by
	$1/(1-2\delta)$.
Hence, outputting a shortest Chinese Postman tour, which can be computed in polynomial time \cite{3134208}, yields an approximation ratio of 
	$1/(1-2\delta)$.

\begin{restatable}{linkedtheorem}{ThmApproxUbZeroSixth}
\label{ThmApproxUbZeroSixth}
\label{thm:approx:ub:zero_sixth}
	For every $\delta \in (0, 1/6]$,
	\deltatourprob admits a polynomial-time $1/(1-2\delta)$-approximation algorithm. 
\end{restatable}

\subparagraph{Approximation for \boldmath$\delta \in (1/6, 1/2)$.}
In this range, we
	rely on shortest \deltatour{}s that satisfy certain desirable discrete properties.
	In the following more precise description, we focus on the case $\delta \leq \frac{1}{4}$, the construction needing to be slightly altered if $\frac{1}{4}<\delta \leq \frac{1}{2}$.
Here, we prove the existence
of a nice shortest \deltatour~$T$ such that
\begin{description}
\item[(P1)] $T$ contains the tour segment~$\seq{u&p(u, v, 1-\delta)&u}$ for
every edge $uv \in E(G)$ incident to a leaf vertex $v$ (that is, $\deg_G(v) = 1$) and
\item[(P2)] for every edge $uv$ not incident to a leaf, either $T$ traverses $uv$ or the interaction of $T$ with $uv$ consists of one of the segments~$\seq{u&p(u, v, 1-2\delta)&u}$ or $\seq{v&p(v, u, 1-2\delta)&v}$.
\end{description}
We construct an auxiliary graph~$G'$ on the above listed points
	 with edge weights corresponding to their distance in $G$.
It turns out that \TSP tours in $G'$
	are in a one-to-one correspondence with \deltatour{}s in $G$
	satisfying properties (P1--P2).
More precisely, we prove that an $\alpha$-approximate \TSP tour~$T'$ of $G'$ can
be efficiently turned into a \deltatour~$T$ of $G$ of at most the same length which yields 
$\len(T) \leq \len(T') \leq \alpha \cdot \opttsp$, where
$\opttsp$ denotes the length of a shortest \TSP tour of $G'$.
Then, noting that a given \deltatour of $G$ satisfying properties (P1) and (P2) can be converted
to a TSP tour of $G'$ of at most the same length, we get that $T$ is a \deltatour with $\len(T) \leq \alpha \cdot \opttour$.
The first part, that is, proving that a \TSP tour~$T'$ of $G'$ can be turned into a $\delta$-tour of $G$ of the same length,
is based on the fact that there is a limited number of ways a reasonable TSP
tour interacts with the points corresponding to a certain edge. More precisely,
any TSP tour in $G'$ can easily be transformed into one which is not longer and
whose interaction with the points in any edge is in direct correspondence with
the interaction of a certain \deltatour with this edge in $G$.

This lets us transfer known positive approximation results for metric TSP to $\delta$-tour.
We may use the algorithm of Christofides~\cite{Christofides2022}, yielding
the following theorem.

\begin{restatable}{linkedtheorem}{ThmApproxUbSixthHalf}
\label{ThmApproxUbSixthHalf}
\label{thm:approx:ub:sixth_half}
	For every $\delta \in (1/6, 1/2)$,
	\deltatourprob admits a polynomial-time $1.5$-approximation algorithm. 
\end{restatable}

\subparagraph{Approximation for \boldmath$\delta = 1/2$.}
Even though the idea from the previous section still applies when $\delta = 1/2$,
interestingly, we obtain a better approximation ratio observing that the
problem further reduces to computing a graphic TSP tour on the non-leaf
vertices, which admits a $1.4$-approximation algorithm due to Seb\H o and Vygen~\cite{SeboV14}.

\begin{restatable}{linkedtheorem}{ThmApproxUbHalf}
\label{ThmApproxUbHalf}
\label{thm:approx:ub:half}
	\deltatourprob[1/2] admits a polynomial-time $1.4$-approximation algorithm.
\end{restatable}

\subparagraph{Approximation for \boldmath$\delta \in (1/2, 33/40)$.}
In this range, we show that computing a $\frac12$-tour
via \Cref{thm:approx:ub:half} is a good approximation of a \deltatour.
To that end, we characterize \deltatour{}s for $\delta \in [1/2, 1]$,
showing that, in particular, the existence of a
shortest \deltatour~$T$ such that one of the following conditions hold for
every edge $uv$. 

\begin{description}
\item[(P1)] $T$ stops at both $u$ and $v$, or
\item[(P2)] $T$ stops at one of the endpoints, say $u$, and
additionally stops at the point $p(u, v, \lambda)$ for some $\lambda \in [1-\delta, 1]$, or
$T$ stops at the two points $p(u, v, \lambda_1)$ and $p(x, v, \lambda_2)$ for
some $x \in N_G(v)$ where $\lambda_1 + \lambda_2 \geq 2-2\delta$, or
\item[(P3)] $T$ stops at neither $u$ nor $v$ but stops at two points
$p(x, v, \lambda_1)$ and $p(y, u, \lambda_2)$ for some $x \in N_G(v)$ and
$y \in N_G(u)$ where $\lambda_1 + \lambda_2 \geq 3-2\delta$.
\end{description}

Let $\opt_{1/2}$ and $\opttour$ be the lengths of a shortest
$\frac12$-tour and \deltatour in $G$, respectively. To bound
	the ratio ${\opt_{1/2}}/{\opttour}$, we observe that a given \deltatour~$T_\delta$
	can be transformed into a $\frac12$-tour~$T_{1/2}$ 
	by an appropriate replacement of every tour segment of $T_\delta$
	corresponding to one of the cases (P1) through (P3).

It can be shown that these modifications then result in a tour~$T_{1/2}$ stopping at every non-leaf
	vertex of $G$ and covering leaves by tour segments of length $1$,
	so $T_{1/2}$ is a $\frac12$-tour.
These modifications increase the tour length by at most a multiplicative factor of
	$\max\{\frac{1}{2(1-\delta)}, \frac{2}{3-2\delta}\} = 1/(2-2\delta)$,
	so we have the following theorem.

\begin{restatable}{linkedtheorem}{ThmApproxUbHalfThreeQuarters}
\label{ThmApproxUbHalfThreeQuarters}
\label{thm:approx:ub:half:threequarters}
	For every $\delta \in (1/2, 33/40)$,
	\deltatourprob admits a polynomial-time $1.4/(2-2\delta)$-approximation algorithm. 
\end{restatable}

\subparagraph{Approximation for \boldmath$\delta \in (33/40, 3/2)$.} 
Here we design a constant-factor approximation for computing a shortest $1$-tour and show how to tweak the obtained tour to obtain a constant-factor for any $\delta$ in the considered domain.
For $\delta > 1$, as in the previous range, due to a similar
characterization of \deltatour{}s, we can show an
$\alpha$-approximation algorithm for $1$-\tour to imply 
an $\frac{\alpha}{3-2\delta}$-approximation algorithm.
For $\delta < 1$, we show that starting from a $1$-tour and augmenting it with
some tour segments results in a \deltatour of a bounded length.
The $3$-approximation algorithm for a $1$-\tour works as follows.
We exploit a connection to the problem of computing a
	shortest \emph{vertex cover tour},
	which is a closed walk in a graph such that the vertices this tour stops at form a vertex cover.
This problem, introduced in~\cite{ArkinHH1993}, admits a $3$-approximation algorithm
	using linear programming (LP) techniques~\cite{KonemannKPS03}.
It is easy to see that a vertex cover tour forms a $1$-tour; however,
a shortest $1$-tour can be shorter than a shortest vertex cover tour (e.g., this is the case in \Cref{fig:example1}).
Thus, the \deltatour[1] we get from an arbitrary $3$-approximation for vertex cover tour
is in general not a $3$-approximate \deltatour[1].
Instead, we closely examine the LP formulated by Könemann
	et al.~\cite{KonemannKPS03}, showing the optimum for this LP to be a lower bound on the length of a \deltatour[1],
	which means that the vertex cover tours we get using this approach yield $3$-approximate \deltatour[1]{}s.

Given a connected graph~$G$, let $\mathcal{F}(G)$ be the set of subsets $F$ of $V(G)$ such
that both $G[F]$ and $G[V(G) \setminus F]$ induce at least one edge.
For a set $F \in \mathcal{F}(G)$,
let $C_G(F)$ denote the set of edges in $G$ with exactly one endpoint
in $F$.
The LP can then be formulated as follows:
\medskip
\begin{center}
\boxed{
\begin{aligned}
\begin{array}{ll@{}ll}
\text{Minimize}  & \displaystyle\sum\limits_{e \in E(G)} z_e &\\
\text{subject to}& \displaystyle\sum\limits_{e \in C_G(F)}z_e\geq 2 &\text{ for all $F \in \mathcal{F}(G)$}\text{ and }\\
0 \leq z_e \leq 2 &\text{ for all $e \in E(G)$.}
\end{array}
\end{aligned}
}
\end{center}
\medskip
Denoting the optimal objective value of the above LP defined for a fixed graph $G$ by $\optlp(G)$, the corollary below follows from \cite{KonemannKPS03}.

\begin{restatable}[Consequence of {\cite[Thms.~2 and 3]{KonemannKPS03}}]{linkedtheorem}{KoenemannCor}
\label{KoenemannCor}
\label{cor:koenemann_ovrvw}
Given a connected graph~$G$ of order $n$,
in polynomial time, we can compute a vertex cover tour~$T$
of $G$ with $\len(T) \leq 3 \cdot \optlp(G)$.
\end{restatable}

It remains to show that $\optlp$ lower-bounds
$\opttour[1]$, the length of a shortest \deltatour[1].
Let $T_{\deltatour[1]} = p_0 \dots p_k$
$p_{k} = p_0$
be a nice \deltatour[1] of $G$. For every edge $uv \in E(G)$, we
define
$\Lambda_{uv} \coloneqq \sum\limits_{i \in [k]\colon  P(p_{i-1},p_{i}) \subseteq P(u,v)} d_G(p_{i-1}, p_{i})$,
indicating how much the tour $T_{\deltatour[1]}$ spends inside every edge $uv$.
The vector $\left(\min(2,\Lambda_e)\right)_{e \in E(G)}$ can then be shown to be feasible for
the above LP.
We observe the length of $T_{\deltatour[1]}$ to be at least $\sum_{e \in E(G)} \Lambda_e$, 
	yielding  
	$\opttour[1] \geq \optlp(G)$ and with \Cref{cor:koenemann_ovrvw}, we obtain the following theorem.

\begin{restatable}{linkedtheorem}{ThmApproxUbOneThreeHalves}
\label{ThmApproxUbOneThreeHalves}
\label{thm:approx:ub:one:threehalves}
	For any $\delta \in [1,3/2)$,
	\deltatourprob admits a polynomial-time $3/(3-2\delta)$-approximation algorithm.
\end{restatable}

For the remaining range $\delta \in (33/40, 1)$,
our algorithm first uses \Cref{thm:approx:ub:one:threehalves}
to obtain a $3$-approximate $1$-tour~$T$ that is a vertex cover tour.
Then, for every vertex $v \not \in V(T)$, we choose an arbitrary neighbor $u$. Observe that $u \in V(T)$.
Then we extend $T$ into a tour~$T'$ by replacing an arbitrary occurrence of $u$ in $T$ by the segment
$\seq{u& p(u, v, 1-\delta)& u}$ if $v$ is a  leaf vertex and by  the segment
$\seq{u& p(u, v, 2(1-\delta))& u}$, otherwise.
As $T'$ fulfills the characterizing properties of a \deltatour,
$T'$ is a \deltatour.
To bound its length, using our characterization,
we observe that, given an arbitrary $\delta$-tour $T''$, each non-leaf vertex $v$ of $G$ can be associated to a segment of $T''$ of cost
at least $4(1-\delta)$ as $T''$ either stops at $v$ by traversing an edge,
incurring a cost of at least $1$, or makes two non-vertex stops that can be associated to $v$ with a total cost of at least $2(2-\delta)$.
The previous observation can be used to show that the \deltatour
we construct achieves an approximation ratio of $4$.

\begin{restatable}{linkedtheorem}{ThmApproxUbThreeQuartersOne}
\label{ThmApproxUbThreeQuartersOne}
\label{thm:approx:ub:threequarters:one}
For any $\delta \in [33/40,1)$, there is a polynomial-time $4$-approximation algorithm for \deltatourprob. 
\end{restatable}

\subparagraph{Approximation for \boldmath$\delta > 3/2$.}
Here we design $\polylog(n)$-approximation algorithms.
We consider two different settings:
	one where $\delta$ is fixed and another where $\delta$ is part of the input.
We show that each of the two problems can be reduced to an appropriate
dominating set problem in an auxiliary graph.
Recall that the discretization lemma (\cref{lemma:discretization}) shows, at a high-level, that there
is a shortest \deltatour~$T$ of $G$ whose stopping points on every edge come
from a constant-sized set. Let $P_\delta(G)$ be the set of all such
points in $G$.

In order to define our auxiliary graph, we first describe a collection $\mathcal{I}_{G,\delta}$ of edge segments of $G$. Namely, $\mathcal{I}_{G,\delta}$ is the collection of minimal edge segments each of whose endpoints is either a vertex of $V(G)$ or is of distance exactly $\delta$ to a point in $P_\delta(G)$ in $G$. This definition is suitable due to three properties of $\mathcal{I}_{G,\delta}$: 
\begin{description}
\item[(P1)] If $T$ is a \deltatour in $G$ whose stopping points are all contained in $P_{\delta}(G)$, then every $I \in \mathcal{I}_\delta(G)$ is fully covered by one stopping point of $T$, 
\item[(P2)] every point in $P(G)$ is contained in some $I \in \mathcal{I}_{G,\delta}$, and 
\item[(P3)] the number of segments in $\mathcal{I}_{G,\delta}$ is polynomial in $n$.
\end{description}
We are now ready to describe the auxiliary graph~$\Gamma(G,\delta)$.
We let $V(\Gamma(G,\delta))$ consist of $P_\delta(G)$ and a vertex $x_I$ for every $I \in \mathcal{I}_{G,\delta}$. We further let $E(\Gamma(G,\delta))$ contain edges such that $\Gamma(G,\delta)[P_\delta(G)]$ is a clique and let it contain an edge $px_I$ for $p \in P_\delta(G)$ and $I \in \mathcal{I}_{G,\delta}$ whenever $p$ covers all the points in $I$.
The main connection between $\delta$-tours in $G$ and dominating sets in $\Gamma(G,\delta)$
is due to the following lemma, which we algorithmically exploit in both
settings, when $\delta \geq 3/2$ is fixed and when $\delta$ is part of
the input.

\begin{restatable}{linkedlemma}{DomSetTour}
\label{DomSetTour}
\label{lem:domsettour}
Let $G$ be a graph and $\delta>1$. Further, let $T$ be a tour in $G$ whose stopping points are all in $P_{\delta}(G)$. Then $T$ is a $\delta$-tour in $G$ if and only if the stopping points of $T$ are a dominating set in $\Gamma(G,\delta)$.
\end{restatable}

\subparagraph{Approximation for Fixed \boldmath$\delta > 3/2$.}

By computing a dominating set $Y$ in the auxiliary graph~$\Gamma(G, \delta)$
using a standard $\log{n}$-approximation algorithm
and connecting it into a tour of length $\Oh(\delta |Y|)$, we obtain the
main result in this setting.

\begin{restatable}{linkedtheorem}{ThmApproxUbThreeHalvesLogN}
\label{ThmApproxUbThreeHalvesLogN}
\label{thm:approx:ub:threehalves:logn}
	For any $\delta \geq 3/2$,
	\deltatourprob admits a polynomial-time $\Oh(\log n)$-approximation algorithm.
\end{restatable}

\subparagraph{Approximation for \boldmath$\delta$ as Part of the Input.}
The approach from the previous section does not yield any non-trivial approximation guarantee in this setting
mainly because we get an additional factor of roughly $\delta$ when connecting the dominating set into a tour.
However, we are able to show that a $\polylog(n)$-approximation is attainable when $\delta$ is part of the input.
The algorithm for this is based on a reduction to another problem related to dominating sets. Namely, a dominating tree $U$ of a given graph~$H$ is a subgraph of $H$ which is a tree and such that $V(U)$ is a dominating set of $H$. Kutiel~\cite{Kutiel18} proves that given an edge-weighted graph~$H$, we can compute a dominating tree of $H$ of weight at most $\log^{3}{n}$ times the minimum weight of a  dominating tree of $H$.

In order to make use of this result, we now endow $E(\Gamma(G,\delta))$ with a weight function $w$. For all $p,p' \in P_\delta(G)$, we set $w(pp')=\dist_G(p,p')$, and all other edges get a very large weight. We now compute an approximate dominating tree $U$ of $\Gamma(G,\delta)$ with respect to $w$. By the definition of $w$, we obtain that $U$ does not contain any vertex of $V(\Gamma(G,\delta))-P_\delta(G)$. It follows that we can obtain a tour~$T$ from $U$ that stops at all points of $V(U)$ and whose weight is at most $2 w(U)$. By \Cref{lem:domsettour}, we obtain that $T$ is a $\delta$-tour in $G$. 

Finally, in order to determine the quality of $T$, consider a shortest \deltatour~$T^*$ in $G$. It follows from \Cref{lem:domsettour} that the set $P_{T^*}$ of points of $P_\delta(G)$ passed by $T^*$ forms a dominating set of $\Gamma(G,\delta)$. Further, we can easily find a tree in $\Gamma(G,\delta)$ spanning $P_{T^*}$ whose weight is at most the length of $T^*$. Hence, this tree is a dominating tree in $\Gamma(G,\delta)$, and \Cref{thm:approx:ub:threehalves:lognpthree} follows.

\section{Structural Results}\label{section:structural}
In this section, we provide a collection of structural results on shortest $\delta$-tours, 
which will be used several times throughout the main proofs of our algorithmic results 
in \cref{section:approximation-ub}. 
In \cref{section:general:observations}, we give some simple results preliminary to the 
ones in later sections.  In \cref{nicesec} we then show that, when finding shortest 
$\delta$-tours, we can restrict our search space to tours satisfying certain 
non-degeneracy conditions, more precisely nice tours. \Cref{sec:char} gives 
necessary and sufficient conditions for a given tour to be a $\delta$-tour. 
These conditions are of local nature and can be checked efficiently. 
Next, in Section~\ref{discsec}, we use the results from \cref{sec:char} to show that 
there are shortest $\delta$-tours satisfying some more restrictive conditions. 
Namely, we show that we can find a shortest $\delta$-tour whose stopping points all 
come from a small set. Finally, we give in \cref{algosec} some simple algorithmic 
corollaries of the previous results. While the results in \cref{discsec} rely on a 
connection to the vertex cover polytope, the proofs in the remaining sections are 
conceptually transparent albeit somewhat technical.
\subsection{General Observations}
\label{section:general:observations}
We first give a well-known result of Euler, that will be used several times throughout the article.
A multigraph~$G$ is called {\it Eulerian} if $\deg_G(v)$ is even for all $v \in V(G)$. 
An {\it Euler tour} of $G$ is a sequence $v_0,\ldots, v_z$ of vertices in $G$
such that for each pair $u,v \in V(G)$, the number of indices 
$i \in [z]$ with $\{v_{i-1},v_i\}=\{u,v\}$ is exactly the number of edges in $E(G)$ linking $u$ and~$v$.
\begin{proposition}[Euler's Theorem]\label{euler}
A multigraph~$G$ admits an Euler tour if and only if it is connected and Eulerian. Moreover, if an Euler tour exists, it can be computed in polynomial time.
\end{proposition}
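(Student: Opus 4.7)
The plan is to prove both directions of the characterization and then sketch the polynomial-time algorithm.

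For the necessity direction, I would argue as follows. Suppose $G$ admits an Euler tour $v_0,\ldots,v_z$ with $v_0=v_z$. Connectivity is immediate: any two vertices incident to edges appear in the tour, and consecutive tour vertices are adjacent, so a path in $G$ between them exists; any isolated vertex (no incident edges) is trivially handled since the ``Eulerian'' notion concerns only degrees and $G$ is usually assumed to have no such vertex when we speak of Euler tours (and if it does, connectivity is the real constraint). For the even-degree condition, observe that every occurrence of a vertex $v$ in the interior of the sequence corresponds to one incoming and one outgoing edge traversal, contributing $2$ to $\deg_G(v)$; the closing condition $v_0=v_z$ ensures that $v_0$ also contributes an even number. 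Since the tour uses each edge exactly once, the degree of every vertex is even.

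For the sufficiency direction, I would use Hierholzer's classical argument. Start at any vertex $v_0$ and greedily walk along unused edges. Since every vertex has even degree, each time the walk enters a vertex $u\neq v_0$, at least one unused incident edge remains, so the walk cannot get stuck except at $v_0$; hence it closes into a circuit $C$. If $C$ uses all edges, we are done. Otherwise, since $G$ is connected, there must exist a vertex $w$ on $C$ that is incident to some edge not in $C$. The subgraph of unused edges still has all even degrees (removing the edges of $C$ decreases each vertex's degree by an even number), so applying the same argument starting from $w$ yields a circuit $C'$ on the unused edges, which we then splice into $C$ at $w$. Iterating until no edges remain produces an Euler tour.

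For the algorithmic claim, I would note that Hierholzer's procedure runs in polynomial (in fact linear in $|E(G)|$) time when implemented with appropriate data structures: maintain for each vertex a doubly-linked list of its unused incident edges, pop edges as they are used, and splice subcircuits in $O(1)$ time per insertion using doubly-linked list representations of the circuits. Since each edge is processed a constant number of times, the overall running time is $O(|E(G)|)$, which is polynomial in the size of $G$.

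The main ``obstacle'' is purely one of exposition, since the result is entirely classical: the only delicate point is the splicing step, where one must be careful to argue that the unused-edge subgraph remains Eulerian after extracting $C$, and that connectivity of the original graph guarantees the existence of a splice vertex $w$. Both follow cleanly from the observation that a circuit contributes an even number to every vertex's degree, together with the assumed connectivity of $G$.
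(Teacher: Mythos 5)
Your proof is correct: the necessity direction via the pairing of entries and exits at each vertex, the sufficiency direction via Hierholzer's circuit-extraction-and-splicing argument, and the linear-time implementation are all sound, and you rightly flag the only genuinely delicate points (the unused-edge subgraph staying even-degree, and connectivity supplying a splice vertex). The paper itself offers no proof of this proposition --- it is stated as a classical result of Euler and used as a black box --- so there is nothing to compare against; yours is the standard textbook argument, and your parenthetical about isolated vertices correctly identifies the one place where the bare statement ``admits an Euler tour iff connected and Eulerian'' needs the usual caveat (which is harmless here, since the paper only invokes the proposition for multigraphs it has already verified to be connected without isolated vertices).
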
 
We begin with two simple results.
 As a little warm-up, we start with a simple observation giving an upper bound on the maximum length of $\delta$-tours if $\delta$ is not too small.
\begin{proposition}\label{trgvftzuh}
Let $G$ be a connected graph and $\delta \geq 1/2$ a constant. Then $G$ admits a \deltatour of length at most $2n-2$ and this tour can be found in polynomial time. 
\end{proposition}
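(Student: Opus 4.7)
The plan is to build a $\tfrac{1}{2}$-tour via the classical doubled spanning tree construction; since any $\tfrac12$-tour is automatically a $\delta$-tour for $\delta \geq 1/2$, this suffices.

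\medskip
\noindent\textbf{Step 1: Construction.} First, I would compute a spanning tree $S$ of the connected graph $G$ in polynomial time. Then form the multigraph $S'$ obtained from $S$ by duplicating every edge. Since every vertex of $S'$ has even degree and $S'$ is connected, \cref{euler} yields an Euler tour $v_0, v_1, \ldots, v_z$ of $S'$ in polynomial time, where $z = 2(n-1)$ and $v_0 = v_z$. Interpreting consecutive pairs $v_{i-1} v_i$ as edges of $G$ (which they are, since $E(S) \subseteq E(G)$), this sequence defines a tour $T$ in $G$ with $\len(T) = z = 2n-2$.

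\medskip
\noindent\textbf{Step 2: Verification that $T$ is a \deltatour.} Every vertex of $G$ appears in the Euler tour (because $S$ is spanning and every edge of $S'$ is traversed), so every $v \in V(G)$ is a stopping point of $T$. Now consider an arbitrary point $p = p(u, v, \lambda) \in P(G)$ on some edge $uv \in E(G)$. By definition, $\dist(p, u) = \lambda$ and $\dist(p, v) = 1-\lambda$, so $\dist(p, \{u,v\}) \leq 1/2$. Since $u$ and $v$ are both stopping points of $T$, we get $\dist(p, T) \leq 1/2 \leq \delta$, so $T$ is indeed a \deltatour.

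\medskip
\noindent\textbf{Obstacles.} There is essentially no obstacle; this is a short warm-up statement, as the paper flags. The only minor care needed is to make the connection between the discrete Euler tour on the multigraph $S'$ and a tour in the continuous model of $G$, but this is immediate from the definitions in \cref{section:prel}: each edge of $S'$ corresponds to an edge of $G$, traversing it between its two endpoints is a valid walk step of length $1$, and the Euler tour closes up because $v_0 = v_z$. The total length is $2(n-1) = 2n-2$ as claimed, and the whole procedure runs in polynomial time by \cref{euler}.
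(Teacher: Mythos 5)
Your proposal is correct and follows exactly the paper's own argument: double a spanning tree, take an Euler tour via \cref{euler}, note the length is $2n-2$, and observe that every point of every edge lies within distance $\tfrac12 \leq \delta$ of one of the edge's endpoints, all of which are stopping points. No differences worth noting.
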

\begin{proof}
Let $U$ be an arbitrary spanning tree of $G$ and let $U'$ be obtained from $U$ by doubling every edge. Observe that $\deg_{U'}(v)=2\deg_U(v)$ is even for all $v \in V(G)$, hence by \cref{euler}, there is an Euler tour~$T$ of $U'$. Observe that $\len(T)=\abs{E(U')}=2n-2$. Further, let $p=(u,e,\lambda) \in P(G)$ for some $e=uv \in E(G)$ and $\lambda \in [0,1]$. As $T$ stops at $u$ and $v$, we have  $\dist_G(p,T)\leq \min \{\dist_G(p,u),\dist_G(p,v)\}=\min\{\lambda,1-\lambda\}\leq \frac{1}{2}(\lambda+(1-\lambda))=\frac{1}{2}\leq \delta$. Hence $T$ is a \deltatour in $G$. It follows from Proposition~\ref{euler} that $T$ can be computed in polynomial time.
\end{proof}

We next give one proposition that will be helpful in several places throughout this section.

\begin{proposition}\label{nearstop}
Let $G$ be a connected graph, $T$ a tour in $G$ and $p \in P(G)$ a point which is not passed by $T$.
Then $\dist_G(p,T)=\dist_G(p,q)$ for some point $q \in P(G)$ at which $T$ stops.
\end{proposition}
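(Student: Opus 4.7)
The plan is to show that the infimum $\dist_G(p,T) = \inf_{q' \in P(T)} \dist_G(p,q')$ is attained at a stopping point by reducing to one edge segment at a time. Since $P(T) = \bigcup_{i \in [z]} P(p_{i-1},p_i)$ is a finite union of closed edge segments, it is compact, and $q' \mapsto \dist_G(p, q')$ is continuous, so the infimum is attained by some $q^* \in P(p_{j-1},p_j)$ for an appropriate $j$. It therefore suffices to prove that, for each $j \in [z]$, the minimum of $f(\lambda) := \dist_G(p, p(u,v,\lambda))$ over the sub-interval $[a,b] \subseteq [0,1]$ spanned by $p_{j-1}$ and $p_j$ on their common edge $uv$ is attained at an endpoint.

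The heart of the argument I would give is that $f$ is either concave or monotone on $[a,b]$, depending on whether $p$ lies on the edge $uv$ or not. If $p$ does not lie on $uv$, then any walk from $p$ to a point in the interior of $uv$ must enter $uv$ through $u$ or through $v$, and walks that leave and re-enter $uv$ are never shorter; so $f(\lambda) = \min(\dist_G(p,u)+\lambda,\; \dist_G(p,v)+1-\lambda)$. As a minimum of two affine functions, $f$ is concave, hence attains its minimum on $[a,b]$ at $\lambda=a$ or $\lambda=b$. If instead $p = p(u,v,\lambda_p)$ itself lies on $uv$, then $f(\lambda) = |\lambda_p - \lambda|$; the hypothesis that $T$ does not pass $p$ implies $p \notin P(p_{j-1},p_j)$, so $\lambda_p \notin [a,b]$ and $f$ is monotone on $[a,b]$, again attaining its minimum at an endpoint.

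Combining the two cases across all $j \in [z]$ yields $\dist_G(p,T) = \min_{i \in \{0,\dots,z\}} \dist_G(p, p_i)$, which is realised at a stopping point of $T$, as required. The most delicate part will be justifying the distance formula in the non-edge case: I would need to argue that any walk making U-turns on $uv$, or entering and leaving $uv$ through the same vertex, can be shortened without loss, so that only the two direct approaches via $u$ and via $v$ need to be considered. Everything else reduces to standard compactness and the elementary fact that a concave (or monotone) function on a closed interval attains its minimum at a boundary point.
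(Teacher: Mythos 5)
Your proof is correct and in substance matches the paper's: both arguments reduce to showing that on the edge segment between two consecutive stopping points the distance to $p$ is minimized at one of the two stopping endpoints, using the same key fact that the distance from an off-edge point to a point $p(u,v,\lambda)$ decomposes through $u$ or $v$ (the paper runs this as a contradiction at a hypothetically closer passed point, you run it directly via concavity/monotonicity of $\lambda \mapsto \dist_G(p,p(u,v,\lambda))$). The compactness step is harmless but redundant, since your explicit piecewise formulas already yield attainment of the minimum at a segment endpoint.
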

\begin{proof}
Suppose otherwise, so there is some $q^*=(u,v,\lambda)\in P(G)$ which is passed by $T$ and such that $\dist_G(p,q)>\dist_G(p,q^*)$ for every  point $q \in P(G)$ which $T$ stops at. As $T$ passes but does not stop at $q^*$, we obtain that $\lambda \in (0,1)$ and that there are $\lambda_0,\lambda_1 \in [0,1]$ such that $\lambda_0<\lambda<\lambda_1$, the point $(u,v,\lambda')$ is passed by $T$ for all $\lambda'\in [\lambda_0,\lambda_1]$ and $T$ stops at $q_0$ and $q_1$ where $q_i=(u,v,\lambda_i)$ for $i \in \{0,1\}$. First suppose that $p$ is also on the edge $uv$, so $p=(u,v,\lambda_p)$ for some $\lambda_p \in [0,1]$. As $p$ is not passed by $T$ and by symmetry, we may suppose that $\lambda_p<\lambda_0$. This yields $\dist_G(p,q_0)=\lambda_0-\lambda_p<\lambda-\lambda_p=\dist_G(p,q^*)$, a contradiction to the assumption. Now suppose that $p$ is not on $uv$. Then, by symmetry, we may suppose that $\dist_G(p,q^*)=\dist_G(p,u)+\dist_G(u,q^*)$. It follows that $\dist_G(p,q_0)\leq \dist_G(p,u)+\dist_G(u,q_0)=\dist_G(p,u)+\lambda_0<\dist_G(p,u)+\lambda'=\dist_G(p,u)+\dist_G(u,q^*)<\dist_G(p,q^*)$, again a contradiction to the assumption.
\end{proof}

\subsection{Making a Tour Nice}
\label{nicesec}

This section proves that in all relevant cases, there is a nice shortest \deltatour.
Recall that \textDefNiceTour

We prove \cref{lem:tournice}, which we restate here:

\tournice*\label\thisthm

We prove \cref{lem:tournice} by showing how to transform any given tour lacking any of the niceness properties stated in \cref{lem:tournice} into one of shorter or equal length that has fewer stopping points. 
We do this for all niceness properties listed.
\begin{proposition}\label{no2mp}
Let $G$ be a connected graph and $T=\seq{p_0&\ldots&p_z}$ be a tour in $G$ for some $z \geq 3$ such that  $\{p_{i-1},p_i\}\cap V(G)= \emptyset$ for some $i \in [z]$. Then we can compute in polynomial time a tour~$T'$ in $G$ with $\len(T')\leq \len(T)$ and $\alpha(T')<\alpha(T)$ and such that if $T$ is a \deltatour for some $\delta \geq 0$, then so is $T'$.
\end{proposition}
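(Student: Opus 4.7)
The plan is to identify the maximal cyclic subwalk of $T$ whose stops all lie on the edge containing $p_{i-1}$ and $p_i$, then replace it by a walk of minimum length between the same endpoints that still passes exactly the same portion of the edge. Because $p_{i-1}$ and $p_i$ are consecutive stops that are both non-vertices, they lie strictly interior to a common edge $uv$. Any strictly interior point of $uv$ lies on no other edge, and consecutive stops of a walk must share an edge, so $p_{i-2}$ and $p_{i+1}$ (indices taken modulo $z$, using $p_0=p_z$) must also lie on $uv$. Let $[j,k]$ be the maximal cyclic range of indices with $p_\ell \in P(uv)$ for every $\ell \in [j,k]$; then either $\{i-2,i-1,i,i+1\}\subseteq [j,k]$, so the range contains at least four positions, or the range wraps around and covers all of $T$.

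Writing $p_\ell = p(u,v,\lambda_\ell)$ for $\ell \in [j,k]$ and setting $\alpha_{\min} := \min_\ell \lambda_\ell$ and $\alpha_{\max} := \max_\ell \lambda_\ell$, a short induction on the number of steps in a one-edge subwalk shows that the set of points of $uv$ that $p_j,\dots,p_k$ passes is exactly the edge segment between $p(u,v,\alpha_{\min})$ and $p(u,v,\alpha_{\max})$. This identification is what allows us to rebuild the relevant slice of $T$ locally.

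We now construct $T'$. If $[j,k]$ does not span the whole tour, then $p_j$ and $p_k$ are each adjacent in $T$ to a stop outside $uv$, and since only the two vertices of $uv$ belong to more than one edge, $p_j,p_k \in \{u,v\}$. Replace the subwalk $p_j,\dots,p_k$ by a shortest walk from $p_j$ to $p_k$ on $uv$ that still passes the segment between $p(u,v,\alpha_{\min})$ and $p(u,v,\alpha_{\max})$: since each endpoint $p_j, p_k \in \{u,v\}$ automatically coincides with an extreme of this segment, such a walk needs at most three stopping positions (visiting $p_j$, the opposite extreme of the covered segment if different from $p_j$, and $p_k$), and only two if the covered segment is already bounded by $\{p_j,p_k\}$. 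If instead $[j,k]$ spans all of $T$, we first rotate the closed tour so that it starts at $p(u,v,\alpha_{\min})$, then replace it by the three-position closed walk $p(u,v,\alpha_{\min}),\,p(u,v,\alpha_{\max}),\,p(u,v,\alpha_{\min})$; this is licit because a closed tour can be reindexed from any of its stops.

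It remains to verify the three required properties. First, $\len(T')\le \len(T)$ because the new subwalk has the minimum length among walks between its prescribed endpoints covering the required segment of $uv$, while the old subwalk is one such walk. Second, the old subwalk had at least four positions whereas the new one has at most three, so $\alpha(T')<\alpha(T)$. Third, the modification is confined to $uv$, and the new subwalk passes exactly the same edge segment of $uv$ as the old one, so $P(T')=P(T)$; consequently, if $T$ is a $\delta$-tour for some $\delta\ge 0$, then so is $T'$. All steps are clearly polynomial-time. The main technical obstacle is the careful cyclic bookkeeping and the edge case in which $T$ lies entirely within $uv$; the latter is handled cleanly by exploiting the freedom to rotate the starting point of a closed tour to an extreme of the covered segment.
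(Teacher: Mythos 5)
Your proof is correct and follows essentially the same strategy as the paper's: isolate the maximal stretch of the tour lying on the edge $uv$, handle separately the degenerate case where the whole tour lives inside $uv$ (collapse to a two-step back-and-forth between the extreme positions) and the excursion case where the stretch is bordered by $u$ and/or $v$ (collapse to either a single peek $u\,p(u,v,\alpha_{\max})\,u$ or a direct traversal $u\,v$), checking length, stop count, and preservation of the covered point set. The only cosmetic difference is that your maximal range may also absorb vertex stops on $uv$ and thus collapse several consecutive excursions at once, which the paper instead handles by repeated application.
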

\begin{proof}
By symmetry, we may suppose that there is some $k \geq 1$ such that $\seq{p_0&\ldots&p_k}$ is a segment of $T$ that contains the maximum number of points among all those that are disjoint from $V(G)$.
Hence there are some $u,v\in V(G)$ with $uv \in E(G)$ and for all $i \in [k]\cup \{0\}$ a $\lambda_i\in (0,1)$ such that $p_i=p(u,v,\lambda_i)$.

First suppose that $k=z$; that is, the entire tour is contained inside the edge $uv$.  We now choose some $i_1,i_2 \in [z]$ such that $\lambda_{i_1}=\min\{\lambda_i:i \in [z]\}$ and $\lambda_{i_2}=\max\{\lambda_i:i \in [z]\}$. Let $T'=p_{i_1}p_{i_2}p_{i_1}$. Clearly, $T'$ can be computed in polynomial time. Further, we have $\len(T')=2\dist_G(p_{i_1},p_{i_2})\leq \len(T)$ and $\alpha(T')=2<3\leq \alpha(T)$.

Now consider some $\delta \geq 0$ such that $T$ is a \deltatour. If a point $p \in P(G)$ is passed by $T$, then we have $p=(u,v,\lambda)$ for some $\lambda \in [\lambda_1,\lambda_2]$ and hence $p$ is also passed by $T'$. It follows that $T'$ is a \deltatour in $G$. 

Now suppose that $k<z$. 
As $p_0=p_z$ and by the maximality of $k$, we obtain that $k<z-1$.
By the choice of the segment~$\seq{p_0&\ldots&p_k}$ and as $T$ is a tour, we obtain that $\{p_{z-1},p_{k+1}\}\subseteq \{u,v\}$.

First suppose that $p_{z-1}=p_{k+1}$, say $p_{z-1}=p_{k+1}=u$. We then choose some $i_0 \in [k]\cup \{0\}$ such that $\lambda_{i_0}=\max\{\lambda_i \mid i \in [z]\}$.  Now let $T'=p_{i_0}p_{k+1}\ldots p_{z-1}p_{i_0}$. Clearly, $T'$ can be computed in polynomial time.  Further, we have $\len(T')\leq \len(T)$ and $\alpha(T')=z-k<z=\alpha(T)$. 

Now consider some $\delta \geq 0$ such that $T$ is a \deltatour in $G$. If a point $p \in P(G)$ not on $uv$ is passed by $T$, then it clearly is also passed by $T'$. If a point $p \in P(G)$ on $uv$ is passed by $T$, then either $p$ is also passed by $T'$ by construction or we have $p=(u,v,\lambda)$ for some $\lambda \in [0,\lambda_{i_0}]$ and hence $p$ is also passed by $T'$. It follows that $T'$ is a \deltatour in $G$.

Now suppose that $\{p_{z-1},p_{k+1}\}=\{u,v\}$, say $p_{z-1}=u$ and $p_{k+1}=v$. Let $T'=p_{z-1}p_{k+1}\ldots p_{z-1}$. Clearly, $T'$ can be computed in polynomial time. Further, we have $\len(T')\leq \len(T)$ and $\alpha(T')=z-(k+1)<z=\alpha(T)$. 

Now consider some $\delta \geq 0$ such that $T$ is a \deltatour in $G$. If a point $p \in P(G)$ which is not on $uv$ is passed by $T$, then it clearly is also passed by $T'$. If a point $p \in P(G)$ on $uv$ is passed by $T$, then we have $p=(u,v,\lambda)$ for some $\lambda \in [0,1]$ and hence $p$ is also passed by $T'$. It follows that $T'$ is a \deltatour in $G$. 
\end{proof}

Note that applying \cref{no2mp} repeatedly to a given tour yields a tour that is not longer but either contains at most two stopping points or satisfies the first niceness property. 

\begin{proposition}\label{nointermediate}
Let $G$ be a connected graph, and $T=\seq{p_0&\ldots&p_z}$ be a tour in $G$ for some $z \geq 3$ such that either there is some $i \in [z-1]$ with $p_i \notin V(G)$ and $p_{i-1}\neq p_{i+1}$ or  $p_0 \notin V(G)$ and  $p_{1}\neq p_{z-1}$. Then, in polynomial time, we can compute a tour~$T'$ in $G$ with $\alpha(T')<\alpha(T)$ and $\len(T')\leq \len(T)$ and such that if $T$ is a \deltatour for some $\delta \geq 0$, then so is $T'$.
\end{proposition}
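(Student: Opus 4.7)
The plan is a purely local modification: identify a redundant stopping point in the offending triple $p_{i-1}, p_i, p_{i+1}$ (or $p_{z-1}, p_0, p_1$ in the wraparound case) and delete it, shrinking $\alpha$ by at least one while preserving both length and coverage. Since $p_i$ is interior to some edge $uv$, both $p_{i-1}$ and $p_{i+1}$ must lie on this same edge $uv$, and the three edge positions $\lambda_{i-1}, \lambda_i, \lambda_{i+1}$ are pairwise distinct (the hypothesis $p_{i-1}\neq p_{i+1}$ combined with the walk definition forcing consecutive stops to differ). Let $L<M<H$ denote the sorted values of these three positions.

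\textbf{Case 1 ($\lambda_i = M$).} Here $p_i$ sits on the direct path along $uv$ from $p_{i-1}$ to $p_{i+1}$. I shortcut by replacing the sub-sequence $\seq{p_{i-1}&p_i&p_{i+1}}$ with $\seq{p_{i-1}&p_{i+1}}$. The new step still passes through $p_i$, so the covered point set on $uv$ is unchanged; the length does not increase by the triangle inequality; and $\alpha$ decreases by one.

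\textbf{Case 2 ($\lambda_i \in \{L, H\}$).} Whichever of $p_{i-1}, p_{i+1}$ sits at the middle position $M$ must be strictly interior to $uv$, since $M \in (L, H) \subseteq (0, 1)$. Assume $\lambda_{i+1} = M$; the mirror case is handled by reversing the tour. Being interior, $p_{i+1}$ forces its successor $p_{i+2}$ to lie on $uv$ as well. Provided $p_{i+2} \neq p_i$, I delete $p_{i+1}$, replacing $\seq{p_i&p_{i+1}&p_{i+2}}$ with $\seq{p_i&p_{i+2}}$. The triangle inequality again gives $\len(T') \leq \len(T)$. The key structural observation for coverage is that the segment $\seq{p_{i-1}&p_i}$ already traverses the entire interval $[L, H]$ of $uv$, so the stop $p_{i+1}$ at position $M \in [L, H]$ is redundant and no point of $uv$ is newly uncovered; a short case distinction on the position of $\lambda_{i+2}$ relative to $L, M, H$ formally confirms this.

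The cyclic wraparound situation $i = 0$ with $p_1 \neq p_{z-1}$ is handled identically using $p_{-1} := p_{z-1}$. The subtlety I anticipate is the degenerate sub-case $p_{i+2} = p_i$ in Case 2, where a naive deletion would leave two identical consecutive stops and thus an invalid walk; here the sub-walk $\seq{p_i&p_{i+1}&p_{i+2}}$ is simply a peek from $p_i$ to $p_{i+1}$ and back, whose covered points on $uv$ are already contained in those of $\seq{p_{i-1}&p_i}$, so I instead delete both $p_{i+1}$ and $p_{i+2}$, preserving coverage and decreasing $\alpha$ by two (consecutive-distinctness of the modified walk is guaranteed because the original tour had $p_{i+2} \neq p_{i+3}$). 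The hardest part of the whole proof is essentially a book-keeping exercise: enumerating the sub-cases in Case 2 and checking that the covered subintervals of $uv$ really do coincide.
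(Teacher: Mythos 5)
Your proof is correct, but it takes a genuinely different route from the paper's. The paper first uses cyclic symmetry to move the offending interior stop to position $p_0$, then invokes the preceding Proposition~\ref{no2mp}: if two consecutive stops were interior one is already done, so it may assume $p_1$ and $p_{z-1}$ are vertices, hence (being distinct and on the edge $uv$ containing $p_0$) they are exactly $u$ and $v$; deleting $p_0$ then makes the closing step traverse $uv$ entirely, so coverage and length are preserved with no case analysis at all. You instead perform a self-contained local surgery: you sort the three consecutive positions $L<M<H$ on $uv$ and either shortcut the middle stop (your Case~1) or delete the adjacent interior stop at position $M$ (your Case~2), justifying coverage by the interval-containment observation that everything possibly lost lies in $[L,H]$, which the untouched step $p_{i-1}p_i$ still sweeps; I checked that this containment argument, your degenerate sub-case $p_{i+2}=p_i$ (where deleting two stops is needed and distinctness follows from $p_{i+2}\neq p_{i+3}$), and the wraparound case all go through. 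What each approach buys: the paper's argument is shorter and leans on already-established machinery, while yours avoids any appeal to Proposition~\ref{no2mp} and handles non-vertex neighbours directly, at the price of the case bookkeeping you acknowledge; the only detail you gloss over is purely representational, namely re-anchoring the closed walk at a new start point when the deleted stop happens to be $p_0=p_z$.
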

\begin{proof}
By symmetry, we may suppose that $p_0 \notin V(G)$ and  $p_{1}\neq p_{z-1}$. Further, by Proposition~\ref{no2mp}, we may suppose that $p_{z-1}=u, p_0=p(u,v,\lambda)$, and $p_1=v$ for some $uv \in E(G)$ and some $\lambda \in (0,1)$. Let $T'=\seq{p_1&\ldots&p_{z-1}&p_1}$. Clearly, $T'$ can be computed in polynomial time. Further, we have $\len(T')\leq \len(T)$ and $\alpha(T')=\alpha(T)-1<\alpha(T)$. 

Now consider some $\delta \geq 0$ such that $T$ is a \deltatour. If a point $p \in P(G)$ not on $uv$ is passed by $T$, then it clearly is also passed by $T'$. If a point $p \in P(G)$ on $uv$ is passed by $T$, then, as $T'$ traverses $uv$, it is also passed by $T'$. It follows that $T'$ is a \deltatour in $G$.
\end{proof}
We now show how to transform a tour into one with only two stopping points or one that has the third niceness property without increasing its length.
\begin{proposition}\label{atmost1}
Let $G$ be a connected graph and $T=\seq{p_0&\ldots&p_z}$ be a tour in $G$ for some $z \geq 3$ that stops at least twice in the interior of $uv$ for some $uv \in E(G)$.  Then, in polynomial time, we can compute a tour~$T'$ in $G$ with $\alpha(T')<\alpha(T)$ and $\len(T')\leq \len(T)$ and such that if $T$ is a \deltatour for some $\delta \geq 0$, then so is $T'$.
\end{proposition}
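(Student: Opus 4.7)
The plan is to first apply \cref{no2mp} and \cref{nointermediate} repeatedly to~$T$ as preprocessing. Each application strictly reduces~$\alpha$ and preserves both length and the $\delta$-tour property; if $\alpha$ drops to at most~$2$ during this process, the resulting tour already satisfies the conclusion of the proposition. Otherwise, we may assume that the preprocessed tour---still called~$T$---has every interior stop as a \emph{peek}, i.e., preceded and followed in the sequence by the same endpoint of its edge; in particular, both interior stops of~$T$ in~$uv$ are peeks from either~$u$ or~$v$.

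Denote these two interior stops by $p_i = p(u,v,\lambda_i)$ and $p_j = p(u,v,\lambda_j)$. The proof then proceeds by a three-way case analysis. In \textbf{Case~1}, both peeks come from the same endpoint, say~$u$, with $\lambda_i \leq \lambda_j$. Since the deeper detour around $p_j$ already passes every point passed by the shallower detour around~$p_i$, I would excise the subsequence $p_i,p_{i+1}$ (where $p_{i+1}=u$) to obtain~$T'$. This reduces~$\alpha$ by~$2$, reduces length by~$2\lambda_i$, and preserves $P(T) \subseteq P(T')$, so~$T'$ is a $\delta$-tour whenever~$T$ is.

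In \textbf{Case~2a}, the two peeks come from different endpoints---WLOG $p_i$ from~$u$ and $p_j$ from~$v$---and they cross, i.e., $\lambda_i \geq \lambda_j$. Then the two detours jointly pass the entire edge~$uv$. Letting~$Q$ denote the walk of~$T$ from $p_{i+1}=u$ to $p_{j-1}=v$, I would replace the contiguous subsequence $u, p_i, u, Q, v, p_j, v$ by $u, v$ followed by the reverse of~$Q$ (starting immediately after this inserted $v$) followed by $u, v$. The new subsequence traverses~$uv$ twice and thus passes the entire edge, ensuring $P(T) \subseteq P(T')$; the crossing condition gives a length change of $2 - 2\lambda_i - 2(1-\lambda_j) \leq 0$; and $\alpha(T') = \alpha(T)-2$.

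The main obstacle is \textbf{Case~2b}, the non-crossing subcase $\lambda_i < \lambda_j$: here the two peeks pass disjoint parts of~$uv$, so naive removal would violate $\delta$-covering, and inserting a round-trip traversal to compensate strictly increases the length. The key trick is to \emph{deepen} the remaining peek: I would excise $p_i,p_{i+1}$ and simultaneously replace $p_j$ by $p_j' \coloneqq p(u,v,\lambda_j - \lambda_i)$. The length change $-2\lambda_i + 2(\lambda_j - \lambda_j') = 0$ shows the length is exactly preserved, and $\alpha$ drops by~$2$. For the $\delta$-tour property, I would argue that for every $q \in P(G)$ off~$uv$, the inequalities $\dist_G(q, p_i) \geq \dist_G(q,u) + \lambda_i > \dist_G(q,u)$ and analogously $\dist_G(q, p_j) > \dist_G(q,v)$ imply that $u$ and~$v$ (which remain in $P(T')$) dominate $p_i,p_j$ as candidate nearest stops for~$q$, giving $\dist_G(q,T') \leq \dist_G(q,T)$. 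For $q$ on~$uv$, the maximum distance from~$P(T)$ is $(\lambda_j-\lambda_i)/2$ (achieved at the midpoint of the gap $(\lambda_i,\lambda_j)$), and the maximum distance from~$P(T')$ is again $(\lambda_j-\lambda_i)/2$ (at the midpoint of the new gap $(0,\lambda_j-\lambda_i)$). Hence the maximum over $P(G)$ of $\dist_G(\cdot, T')$ does not exceed that of $\dist_G(\cdot, T)$, so $T'$ is a $\delta$-tour whenever~$T$ is.
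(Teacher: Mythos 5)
Your proposal is correct and follows essentially the same route as the paper: preprocess with Propositions~\ref{no2mp} and~\ref{nointermediate}, drop the shallower peek when both peeks come from the same endpoint, pass the whole edge when the peeks cross, and otherwise consolidate the two peeks into a single deeper peek of exactly the same length, certifying $\delta\geq(\lambda_j-\lambda_i)/2$ via the midpoint of the uncovered gap (the paper performs the mirror-image consolidation and a slightly different crossing surgery, but these differences are cosmetic). Two small repairs are needed: your inequality $\dist_G(q,p_i)\geq \dist_G(q,u)+\lambda_i$ can fail when the shortest walk from $q$ reaches $p_i$ via $v$ — what you actually use, and what holds, is $\dist_G(q,p_i)\geq\min\{\dist_G(q,u),\dist_G(q,v)\}$ with both $u,v$ stops of $T'$ — and in Case~2b the claim that the gap midpoint lies at distance exactly $(\lambda_j-\lambda_i)/2$ from $T$ presupposes that $p_i,p_j$ are the only interior stops of $T$ on $uv$, which you should justify by observing that any further interior stop would peek from $u$ or $v$ and thus yield a same-endpoint pair already handled by Case~1 (this is the paper's ``by the above argument'' remark).
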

\begin{proof}
Let $uv \in E(G)$ and suppose, by symmetry, that there are distinct $i,j \in [z-1]$ such that $p_i=p(u,v,\lambda_1)$ and $p_j=p(u,v,\lambda_2)$ for some $uv \in E(G)$ and some $\lambda_1,\lambda_2 \in (0,1)$. By Propositions~\ref{no2mp} and~\ref{nointermediate}, we obtain that $p_{i-1}=p_{i+1}, p_{j-1}=p_{j+1}$ and $\{p_{i-1},p_{j-1}\}\subseteq \{u,v\}$.

First suppose that $p_{i-1}=p_{j-1}$, say $T$ contains the two segments~$\seq{u&p(u,v,\lambda_1)&u}$ and $\seq{u&p(u,v,\lambda_2)&u}$.
Further, by symmetry, we may suppose that $\lambda_2\geq \lambda_1$.
We now obtain $T'$ by replacing the segment~$\seq{u&p(u,v,\lambda_1)&u}$ by $u$.
Clearly, $T'$ can be computed in polynomial time. Next, we have $\alpha(T')<\alpha(T)$ and $\len(T')\leq \len(T)$. Now consider some $\delta \geq 0$ such that $T$ is a \deltatour in $G$. As every point that is passed by $T$ is also passed by $T'$, we obtain that $T'$ is a \deltatour in $G$.

It hence suffices to consider the case that $p_{i-1}\neq p_{j-1}$, say $p_{i-1}=u$ and $p_{j-1}=v$. We obtain that $T$ contains the segments~$\seq{u&p_i&u}$ and $\seq{v&p_j&v}$.
 By the above argument, we have that $p_i$ and $p_j$ are the only points in the interior of $uv$ that $T$ stops at.

First suppose that $\lambda_1>\lambda_2$. Then, let $T'$ be obtained from replacing the segment~$\seq{u&p_i&u}$ by $\seq{u&v&u}$ and the segment~$\seq{v&p_j&v}$ by $v$. Clearly, $T'$ can be computed in polynomial time.  Further, we have $\len(T')=\len(T)-2(\lambda_1+(1-\lambda_2))+2<\len(T)$ and $\alpha(T')\leq\alpha(T)$.

Now consider some $\delta \geq 0$ such that $T$ is a \deltatour in $G$. Observe that  every point passed by $T$ is also passed by $T'$, so $T'$ is a \deltatour.

We may hence suppose that $\lambda_1\leq \lambda_2$.  Now let $T'$ be obtained from $T$ by replacing the segment  $\seq{u&p_i&u}$ by $\seq{u&p(u,v,1-(\lambda_2-\lambda_1))&u}$ and replacing the segment~$\seq{v&p_j&v}$ by $v$. Clearly, $T'$ can be computed in polynomial time. As $\delta \geq \frac{1}{2}(\lambda_2-\lambda_1)$, we have that $\len(T')-\len(T)=2(1- (\lambda_2-\lambda_1))-2(\lambda_1+(1-\lambda_2))= 0$. Further, it holds $\alpha(T')<\alpha(T)$.

Now consider some $\delta \geq 0$ such that $T$ is a \deltatour.
Let $q_0=p(u,v,\lambda_1+\frac{1}{2}(\lambda_2-\lambda_1))$. As $T$ is a \deltatour and $p_i$ and $p_j$ are the only points in the interior of $uv$ stopped at by $T$, we obtain that $\delta\geq \dist_G(q_0,T)=\min\{\dist_G(q_0,p_i),\dist_G(q_0,p_j)\}=\frac{1}{2}(\lambda_2-\lambda_1)$. In order to see that $T'$ is a \deltatour, consider some $q \in P(G)$ not passed by $T'$. As $T$ is a \deltatour and by Proposition~\ref{nearstop}, there is a point $p$ stopped at by $T$ with $\dist_G(p,q)\leq \delta$. If $p \notin \{p_i,p_j\}$, we obtain $\dist_G(q,T')\leq \dist_G(q,p)\leq \delta$. We may hence suppose that $p \in \{p_i,p_j\}$. Next, if $q$ is not on $uv$, we obtain $\delta\geq \dist_G(q,p)=\min\{\dist_G(q,u)+\dist_G(u,p), \dist_G(q,v)+\dist_G(v,p)\}\geq \min\{\dist_G(q,u), \dist_G(q,v))\}\geq \dist_G(q,T')$.  We may hence suppose that $q=(u,v,\lambda)$ for some $\lambda \in (0,1)$. As $q$ is not passed by $T'$  and by construction,  we obtain $\lambda \geq 1-(\lambda_2-\lambda_1)$. This yields $\dist_G(q,T')\leq \min\{\dist_G(q,p(u,v,1-(\lambda_2-\lambda_1))),\dist_G(q,v)\}=\min\{\lambda-(1-(\lambda_2-\lambda_1)),1-\lambda\}\leq\min\{\lambda-(1-2\delta),1-\lambda\}\leq \frac{1}{2}((\lambda-(1-2\delta))+(1-\lambda))= \delta$. Hence $T'$ is a \deltatour in $G$. 
\end{proof}

\begin{proposition}\label{atmost2}
Let $G$ be a connected graph and $T=\seq{p_0&\ldots&p_z}$ be a tour in $G$ that stops at an edge $uv \in E(G)$ that it traverses.  Then, in polynomial time, we can compute a tour~$T'$ in $G$ with $\alpha(T')<\alpha(T)$ and $\len(T')<\len(T)$ and such that if $T$ is a \deltatour for some $\delta \geq 0$, then so is $T'$.
\end{proposition}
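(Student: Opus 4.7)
The plan is to exploit the traversal of $uv$, which already covers every point of the edge, and simply delete the redundant ``peek'' into the interior of $uv$. First I would apply Propositions~\ref{no2mp}, \ref{nointermediate}, and \ref{atmost1} to $T$ as preprocessing. These reductions only alter degenerate patterns of interior stops and in particular preserve the traversal of $uv$. After preprocessing, I may assume that $T$ has exactly one stopping point $p_i = p(u,v,\lambda)$ with $\lambda \in (0,1)$ inside $uv$, arranged as a peek segment $\seq{p_{i-1}&p_i&p_{i+1}}$ with $p_{i-1} = p_{i+1} \in \{u,v\}$; by symmetry, say $p_{i-1} = p_{i+1} = u$.

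Next I would construct $T'$ by deleting both $p_{i-1}$ and $p_i$ from the cyclic stopping sequence of $T$, collapsing the two copies of $u$ surrounding $p_i$ into one. The resulting sequence is a valid walk because the new neighbor $p_{i-2}$ of the surviving $u$ satisfies $p_{i-2} \neq p_{i-1} = u$, and every two consecutive stops remain on a common edge. We then have $\alpha(T') = \alpha(T) - 2 < \alpha(T)$ and $\len(T') = \len(T) - 2\lambda < \len(T)$, the latter being strict because $\lambda > 0$.

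To finish, I would verify that $T'$ is a \deltatour whenever $T$ is, by showing $P(T) \subseteq P(T')$. The only points $T$ passes that could be missing from $P(T')$ lie on the removed peek, namely the set $\{p(u,v,\mu) : \mu \in [0,\lambda]\}$; but each such point also lies on the traversal of $uv$ preserved in $T'$, so every point $\delta$-covered by $T$ remains $\delta$-covered by $T'$. The main subtlety will be the cyclic boundary case, when the peek wraps around $p_0 = p_z$, handled uniformly by treating the stopping sequence as cyclic throughout; all operations are clearly polynomial-time.
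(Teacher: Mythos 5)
Your proposal is essentially the paper's own proof: both reduce, via Propositions~\ref{no2mp} and~\ref{nointermediate}, to the case where the interior stop on $uv$ is a peek segment $u\,p(u,v,\lambda)\,u$, delete that peek (saving $2\lambda>0$ in length and two stopping points), and note that the surviving traversal of $uv$ ensures every point passed by $T$ is still passed by $T'$, so any $\delta$-covering is preserved. Your extra invocation of Proposition~\ref{atmost1} and the explicit cyclic wrap-around are cosmetic, and your ``after preprocessing I may assume'' step is the same implicit reduction the paper performs with its ``we may suppose''.
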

\begin{proof}
By Propositions~\ref{no2mp}~and~\ref{nointermediate} and by symmetry, we may suppose that $T$ contains the segment~$\seq{u&p(u,v,\lambda)&u}$ for some $\lambda \in (0,1)$. Let $T'$ be obtained by replacing $\seq{u&p(u,v,\lambda)&u}$ by $u$. Clearly, $T'$ can be computed in polynomial time.  Further, we clearly have $\len(T')<\len(T)$ and $\alpha(T')<\alpha(T)$.

Now consider some $\delta \geq 0$ such that $T$ is a \deltatour. As $T$ traverses $uv$, so does $T'$. It follows that every point passed by $T$ is also passed by $T'$. Hence $T'$ is a \deltatour.
\end{proof}

\begin{proposition}\label{edgetwice}
Let $G$ be a connected graph and $T=\seq{p_0&\ldots&p_z}$ a tour in $G$ that traverses an edge $uv \in E(G)$ at least 3 times. Then, in polynomial time, we can compute a tour~$T'$ in $G$ with $\alpha(T')<\alpha(T)$ and $\len(T') < \len(T)$ and such that if $T$ is a \deltatour for some $\delta \geq 0$, then so is $T'$.
\end{proposition}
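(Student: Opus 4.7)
The plan is to locate two $uv$-traversals of $T$ that go in the same direction and splice them out. Since $T$ traverses $uv$ at least three times and every traversal is directed either $u\to v$ or $v\to u$, the pigeonhole principle produces two traversals in the same direction; by symmetry I may assume that two of them are $u$-to-$v$ traversals at steps $i_1<i_2$, so $p_{i_1-1}=p_{i_2-1}=u$ and $p_{i_1}=p_{i_2}=v$. The intermediate subwalk $\seq{p_{i_1}&\ldots&p_{i_2-1}}$ is then a $v$-to-$u$ walk, and the idea is to reverse it so that both chosen $uv$-traversals become superfluous.

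Concretely, I would set $T'$ to be the concatenation of $\seq{p_0&\ldots&p_{i_1-1}}$, the reverse of the middle subwalk with its initial point $p_{i_2-1}$ dropped (so it starts at $p_{i_2-2}$ and ends at $p_{i_1}$), and $\seq{p_{i_2+1}&\ldots&p_z}$. Checking that $T'$ is a valid walk reduces to verifying the two new adjacencies: $(p_{i_1-1},p_{i_2-2})$ is valid because $p_{i_1-1}=u=p_{i_2-1}$ was adjacent to $p_{i_2-2}$ in $T$, and $(p_{i_1},p_{i_2+1})$ is valid because $p_{i_1}=v=p_{i_2}$ was adjacent to $p_{i_2+1}$ in $T$; consecutive-distinctness transfers from $T$ analogously. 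A direct count gives $\alpha(T')=\alpha(T)-2$, and since the reversed subwalk contributes the same total length as the original subwalk while the two omitted $uv$-traversals each had length $1$, we obtain $\len(T')=\len(T)-2<\len(T)$.

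For the $\delta$-covering property, I would argue that every point passed by $T$ is still passed by $T'$. All edge segments covered by steps of $T$ outside the two omitted traversals survive, possibly reversed, in $T'$; the omitted steps covered only points of $P(u,v)$, and since $T$ traversed $uv$ at least three times while $T'$ retains at least one $uv$-traversal, all of $uv$ remains covered by $T'$. Hence $T'$ is a \deltatour whenever $T$ is, and the construction is plainly polynomial-time. There is no deep obstacle; the main care lies in the bookkeeping to ensure that the splice is a syntactically valid walk and that at least one $uv$-traversal survives the surgery.
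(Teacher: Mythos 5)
Your proof is correct, but it takes a genuinely different route from the paper. The paper's proof builds the multigraph~$H$ on the stopping points of $T$ with one parallel edge per step, observes that $T$ is an Euler tour of $H$, deletes two of the at least three parallel edges between $u$ and $v$ (which keeps all degrees even and, since a third copy remains, keeps $H$ connected), and then invokes Euler's theorem (\cref{euler}) to obtain the shorter tour~$T'$ as an Euler tour of the reduced multigraph. You instead argue by pigeonhole that two of the at least three $uv$-traversals have the same orientation and perform an explicit local surgery: reverse the subwalk between them and splice out both traversals. The two arguments accomplish the same thing — the steps of $T'$ are exactly the steps of $T$ minus two unit-length $uv$-steps, so $\alpha$ and $\len$ each drop by exactly $2$, and the set of passed points is preserved because a surviving third traversal still covers all of $P(u,v)$ — and your boundary cases (the two chosen traversals being two steps apart, or the second one being the last step) check out, since the new junction steps $(p_{i_1-1},p_{i_2-2})$ and $(p_{i_1},p_{i_2+1})$ coincide with (reverses of) steps of $T$ and are therefore valid and non-degenerate. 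What the paper's route buys is that it needs neither the orientation/pigeonhole observation nor any index bookkeeping, delegating the reconnection entirely to Euler's theorem, which is already set up as a tool; what your route buys is a more elementary, fully constructive modification of $T$ itself, without passing through an auxiliary multigraph and a freshly computed Euler tour.
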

\begin{proof}
We create a multigraph~$H$ in the following way: We let $V(H)$ consist of the points in $P(G)$ which were stopped at at least once by $T$ and we let $E(H)$ contain an edge linking two points $p,p' \in V(H)$ for every $i \in [z]$ with $\{p_{i-1},p_i\}=\{p,p'\}$. Observe that $H$ can be computed in polynomial time. By construction, $T$ is an Euler tour in $H$, so by \cref{euler} and as $H$ clearly does not contain isolated vertices, we have that $H$ is Eulerian and connected. Further, by assumption and construction, we have that $H$ contains at least three parallel edges linking $u$ and $v$. Now let $H'$ be obtained from $H$ by deleting two of the edges linking $u$ and $v$. We have $\deg_{H'}(u)=\deg_H(u)-2, \deg_{H'}(v)=\deg_H(v)-2$, and $\deg_{H'}(w)=\deg_H(w)$ for all $w \in V(H)-\{u,v\}$. Hence $H'$ is Eulerian. Further, as $H$ is connected, so is $H'$. It hence follows by Proposition~\ref{euler} that $H'$ has an Euler tour~$T'=p'_0\ldots p'_{z-2}$ and $T'$ can be computed in polynomial time.  Further, we have 
\begin{align*}
\len(T')&=\sum_{i \in [z-2]}\dist_G(p'_{i-1},p'_i)\\
&=\sum_{pp'\in E(H')}\dist_G(p,p')\\
&<\sum_{pp'\in E(H)}\dist_G(p,p')\\
&=\sum_{i \in [z]}\dist_G(p_{i-1},p_i)\\
&=\len(T).
\end{align*}
Next, we have $\alpha(T')=z-2<z=\alpha(T)$.

Finally, consider some $\delta \geq 0$ such that $T$ is a \deltatour in $G$. We will show that $T'$ is a \deltatour 
in $G$. First observe that $T'$ is a tour in $G$. Let $q \in P(G)$. As $T$ is a \deltatour in $G$, there is some 
$p^* \in P(G)$ that $T$ passes and that satisfies $\dist_G(q,p^*)\leq \delta$. By construction, we obtain that $T'$ 
also passes $p^*$. This yields $\dist_G(q,T')\leq \dist_G(q,p^*)\leq \delta$. Hence $T'$ is a \deltatour in~$G$.
\end{proof}

We are now ready to prove \cref{lem:tournice}.
\begin{proof}
Let a tour~$T$ in $G$ be given. As long as $T$ is not nice, we can recursively apply one of Propositions~\ref{no2mp},~\ref{nointermediate},~\ref{atmost1},~\ref{atmost2}, and~\ref{edgetwice} and update $T$. In the end of this procedure, we obtain a tour~$T'$ which is either nice or contains at most two stopping points. Also, for any $\delta \geq 0$ such that $T$ is a \deltatour, we have that $T'$ is a \deltatour. Further, in every iteration, the modifications can be executed in polynomial time. Finally, as the discrete length of the tour is decreased in every iteration, we obtain that the algorithm terminates after at most $\alpha(T)$ iterations, so its total running time is polynomial. 
\end{proof}

\subsection{Characterization}\label{sec:char}
In this section, we give a collection of results allowing to decide whether a given tour is actually a \deltatour for a certain $\delta$.
More precisely, we describe when a tour covers all points on a certain edge.
The result is cut into three parts: one for the case that the tour stops at no vertex of the edge, one for the case that it stops at exactly one vertex of the edge and for the case that it stops at both vertices of the edge.
In the following, we say that a tour {\it $\delta$-covers} an edge if it covers all the points on the edge. We speak about {\it covering} when $\delta$ is clear from the context.
We first deal with the case that the tour stops at no vertex of the edge. 
\begin{lemma}\label{nostopchar}
Let $G$ be a connected graph, $\delta\geq 0$ a real, $T$ a tour in $G$ and $x_1x_2 \in E(G)$ such that $T$ stops at none of $x_1$ and $x_2$. Then $T$ covers $x_1x_2$ if and only if one of the following holds:
\begin{enumerate}[(i)]
\item for $i \in [2]$, there are stopping points $p_i=p(u_i,v_i,\lambda_i)$ of $T$ with $\lambda_i \in [0,1)$ 
such that $\lambda_1+\lambda_2\geq \dist_G(x_1,v_1)+\dist_G(x_2,v_2)+3-2 \delta$ holds,
\item $T$ stops at points $p(x_1,x_2,\lambda_1)$ and $p(x_1,x_2,\lambda_2)$ for some $\lambda_1 \in (0,\delta]$ and some $\lambda_2 \in [1-\delta,1)$.
\end{enumerate}
\end{lemma}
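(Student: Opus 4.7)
The plan is to first establish a structural dichotomy: since $T$ stops at neither $x_1$ nor $x_2$ and consecutive stopping points of a walk must share a common edge, either no stopping point of $T$ lies on the edge $x_1x_2$ (call this Regime B), or every stopping point of $T$ lies strictly in the interior of $x_1x_2$ (Regime A). Indeed, once $T$ makes an interior stop on $x_1x_2$, the next stop must lie on $x_1x_2$ too and, avoiding the endpoints, must be interior; induction around the tour closes the argument.

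For the forward direction, assume $T$ covers $x_1x_2$. In Regime B, use \cref{nearstop} to pick stopping points $p_i^* = p(u_i, v_i, \lambda_i)$ realizing $D_i \coloneqq \dist(x_i, T)$, parameterized so that $v_i$ is the endpoint through which the shortest $p_i^*$-to-$x_i$ path exits; then $D_i = (1-\lambda_i) + \dist(v_i, x_i)$. Since every shortest path from an interior point $q = p(x_1, x_2, \mu)$ to a stop in Regime B must leave through $x_1$ or $x_2$, coverage of $q$ reduces to $\min(\mu + D_1, (1-\mu) + D_2) \leq \delta$ for every $\mu \in [0,1]$. The triangle inequality $|D_1 - D_2| \leq \dist(x_1, x_2) = 1$ ensures the maximum of the left side over $\mu$ is attained at the crossover and equals $(1 + D_1 + D_2)/2$, so coverage is equivalent to $D_1 + D_2 \leq 2\delta - 1$, which rearranges precisely to condition~(i) with these witnesses. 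In Regime A, covering $x_1$ forces a stop $p(x_1, x_2, \lambda_1)$ with $\lambda_1 \leq \delta$, and $\lambda_1 > 0$ because $T$ avoids $x_1$; symmetrically, some stop $p(x_1, x_2, \lambda_2)$ with $\lambda_2 \in [1-\delta, 1)$ exists, giving condition~(ii).

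For the backward direction, condition~(i) implies coverage via a summed triangle inequality: for any $q = p(x_1, x_2, \mu)$ we have $\dist(q, p_i) \leq \mu_i + (1-\lambda_i) + \dist(v_i, x_i)$ with $\mu_1 = \mu$ and $\mu_2 = 1 - \mu$, so adding and using the inequality in~(i) gives $\dist(q, p_1) + \dist(q, p_2) \leq 3 - (\lambda_1 + \lambda_2) + \dist(v_1, x_1) + \dist(v_2, x_2) \leq 2\delta$, forcing at least one of the two distances to be at most $\delta$. Condition~(ii) implies coverage by reapplying the dichotomy in reverse: the two given interior stops place $T$ in Regime A, so every stop of $T$ is interior to $x_1x_2$ and the entire tour stays on this edge; it therefore covers the closed interval $[\min(\lambda_1, \lambda_2), \max(\lambda_1, \lambda_2)]$, and the $\delta$-neighborhoods of $p(x_1, x_2, \lambda_1)$ and $p(x_1, x_2, \lambda_2)$ reach $x_1$ and $x_2$ thanks to $\lambda_1 \leq \delta$ and $1 - \lambda_2 \leq \delta$.

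The main subtlety is the forward direction in Regime B, where the parameterization choices must be handled uniformly---in particular, $v_i$ may coincide with $x_i$, and the two witnesses $p_1^*$, $p_2^*$ may coincide as points---but the triangle inequality bound $|D_1 - D_2| \leq 1$ eliminates any boundary behavior in the maximization over $\mu$ and makes the reduction to condition~(i) clean.
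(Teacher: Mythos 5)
Your proof is correct and takes essentially the same route as the paper's: the same dichotomy on whether $T$ stops on $x_1x_2$, the same nearest-stopping-point witnesses with the crossover/worst-point computation in the no-stop case (the paper's contradiction point $p(x_1,x_2,\tfrac{1}{2}(1+\dist_G(x_2,p_2)-\dist_G(x_1,p_1)))$ is exactly your crossover maximizer), and the same triangle-inequality and on-edge-interval arguments for the backward direction. The only loose end---re-labelling a vertex witness so that $\lambda_i<1$ as condition (i) requires---is the boundary bookkeeping you already flag, which the paper settles by choosing the labeling with $\max\{\lambda_1,\lambda_2\}<1$.
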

\begin{proof}
First suppose that $(i)$ holds. Consider $p=p(x_1,x_2,\lambda)$ for some $\lambda \in [0,1]$. 
We obtain \begin{align*}
\dist_G(T,p)&\leq \min\{\dist_G(p,x_1)+\dist_G(x_1,T),\dist_G(p,x_2)+\dist_G(x_2,T)\}\\
&\leq \frac{1}{2}(\dist_G(p,x_1)+\dist_G(x_1,T)+\dist_G(p,x_2)+\dist_G(x_2,T))\\
&\leq \frac{1}{2}(1+\dist_G(x_1,p_1)+\dist_G(x_2,p_2))\\
&\leq \frac{1}{2}(1+\dist_G(x_1,v_1)+\dist_G(v_1,p_1)+\dist_G(x_2,v_2)+\dist_G(v_2,p_2))\\
&= \frac{1}{2}(1+\dist_G(x_1,v_1)+\dist_G(x_2,v_2)+(1-\lambda_1)+(1-\lambda_2))\\
&\leq  \frac{1}{2}(3+(\lambda_1+\lambda_2+2 \delta-3)-(\lambda_1+\lambda_2))\\
&=\delta.
\end{align*}
Hence $x_1x_2$ is covered by $T$.

Now suppose that $(ii)$ holds. Consider $p=p(x_1,x_2,\lambda)$ for some $\lambda \in [0,1]$. If $\lambda_1 \leq \lambda \leq \lambda_2$, then, as $T$ stops at none of $x_1$ and $x_2$, we obtain that $T$ passes $p$. Next suppose that $\lambda<\lambda_1$. Then we have $\dist_G(p,T)\leq \dist_G(p,p(x_1,x_2,\lambda_1))=\lambda_1-\lambda\leq \lambda_1\leq \delta$. A similar argument shows that $\dist_G(p,T)\leq \delta$ if $\lambda>\lambda_2$. It follows that $T$ covers $x_1x_2$.

Now suppose that $T$ covers $x_1x_2$. First suppose that $T$ does not stop at any point on $x_1x_2$ and for $i \in 
[2]$, let $p_i=p(u_i,v_i,\lambda_i)$ be a stopping point of $T$ with $\dist_G(p_i,x_i)=\dist_G(x_i,T)$. As $T$ 
stops at neither $x_1$ nor $x_2$, we may choose the labeling of $p_1$ and $p_2$ such that 
$\dist_G(x_i,p_i)=\dist_G(x_i,v_i)+\dist_G(p_i,v_i)$ for $i \in [2]$ and $\max\{\lambda_1,\lambda_2\}<1$ holds. Suppose for the sake of a contradiction that $\lambda_1+\lambda_2< 
\dist_G(x_1,v_1)+\dist_G(x_2,v_2)+3-2 \delta$. Consider the point 
$p=p(x_1,x_2,\frac{1}{2}(1+\dist_G(x_2,p_2)-\dist_G(x_1,p_1))$. By the choice of $p_1$ and $p_2$ and as $x_1x_2 \in 
E(G)$, we have
\begin{align*}
1+\dist_G(x_2,p_2)-\dist_G(x_1,p_1)&\geq 1+\dist_G(x_2,p_2)-\dist_G(x_1,p_2)\\
&\geq 1+\dist_G(x_2,p_2)-(\dist_G(x_1,x_2)+\dist_G(x_2,p_2))\\
&=1-\dist_G(x_1,x_2)\\
&=0.
\end{align*}

We further have
\begin{align*}
1+\dist_G(x_2,p_2)-\dist_G(x_1,p_1)&\leq 1+\dist_G(x_2,p_1)-\dist_G(x_1,p_1)\\
&\leq 1+(\dist_G(x_1,p_1)+\dist_G(x_1,x_2))-\dist_G(x_1,p_1)\\
&=1+\dist_G(x_1,x_2)\\
&=2.
\end{align*}

Hence $p$ is well-defined.

As $T$ stops at no point on $x_1x_2$ and by assumption, we obtain
\begin{align*}
\dist_G(p,T)&=\min\{\dist_G(p,x_1)+\dist_G(x_1,T),\dist_G(p,x_2)+\dist_G(x_2,T)\}\\
&=\min\{(1+\dist_G(x_2,p_2)-\dist_G(x_1,p_1))/2+\dist_G(x_1,p_1),\\
&\qquad1-((1+\dist_G(x_2,p_2)-\dist_G(x_1,p_1)))/2+\dist_G(x_2,p_2)\}\\
&=(1+\dist_G(x_1,p_1)+\dist_G(x_2,p_2))/2\\
&=(1+(\dist_G(x_1,v_1)+1-\lambda_1)+(\dist_G(x_2,v_2)+1-\lambda_2))/2\\
&=(3+\dist_G(x_1,v_1)+\dist_G(x_2,v_2)-(\lambda_1+\lambda_2))/2\\
&>\delta,
\end{align*}
a contradiction to $p$ being covered by $T$. We hence obtain $\lambda_1+\lambda_2< \dist_G(x_1,v_1)+\dist_G(x_2,v_2)+3-2 \delta$ and so $(i)$ holds.

Finally suppose that $T$ stops at some point on $x_1x_2$. Let $p_1=p(x_1,x_2,\lambda_1)$ and $p_2=p(x_1,x_2,\lambda_2)$ be stopping points of $T$ on $x_1x_2$ which are chosen so that $\lambda_1$ is minimized and $\lambda_2$ is maximized. As $T$ stops at none of $x_1$ and $x_2$ and $T$ is a tour, we obtain that all stopping points of $T$ are on $x_1x_2$. As $T$ covers $x_1x_2$, this yields $\delta\geq \dist_G(x_1,T)=\dist_G(x_1,p_1)=\lambda_1$. A similar argument shows that $\lambda_2 \leq 1-\delta$. Hence $(ii)$ holds.
\end{proof}
We next deal with the case that the tour stops at exactly one vertex of the edge.
\begin{lemma}\label{char1stop}
Let $G$ be a connected graph, $\delta\geq 0$ a real, $T$ a tour in $G$ and $x_1x_2 \in E(G)$ such that $T$ stops at $x_1$ but not at $x_2$. Then $T$ covers $x_1x_2$ if and only if one of the following holds:
\begin{enumerate}[(i)]
\item There are stopping points $p_1=p(x_1,x_2,\lambda_1)$ and $p_2=(v,x_2,\lambda_2)$ of $T$ with $\lambda_1,\lambda_2 \in [0,1)$ and $v \in N_G(x_2)$ such that  $\lambda_1+\lambda_2\geq 2-2 \delta$ holds.
\item $T$ stops at the point $p(x_1,x_2,\lambda)$ for some $\lambda\in [1-\delta,1]$.
\end{enumerate}
\end{lemma}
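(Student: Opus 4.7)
I would follow the template of the proof of \cref{nostopchar}, handling the sufficiency (``if'') and necessity (``only if'') directions separately. The key geometric observation used throughout is that since $T$ does not stop at $x_2$, any interior stop of $T$ on the edge $x_1x_2$ forces the tour to enter and leave the edge through $x_1$: tracing the walk around such a stop, each consecutive stop stays on $x_1x_2$, the chain cannot terminate at $x_2$, and therefore it terminates at $x_1$. Hence the entire segment from $x_1$ up to the farthest interior stop of $T$ on $x_1x_2$ lies in $P(T)$ and is covered at distance $0$.

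For sufficiency under (i), the traversal observation covers $[0,\lambda_1]$ (vacuous if $\lambda_1=0$), the stop $p_1$ covers $[\lambda_1-\delta,\lambda_1+\delta]$, and $p_2$ covers positions $\ge 2-\lambda_2-\delta$ on $x_1x_2$ via $x_2$. The hypothesis $\lambda_1+\lambda_2\ge 2-2\delta$ exactly ensures $\lambda_1+\delta\ge 2-\lambda_2-\delta$, so together these intervals exhaust $[0,1]$. Under (ii), the traversal argument covers $[0,\lambda]$, while the stop at position $\lambda\ge 1-\delta$ covers $[\lambda,1]$ because $1-\lambda\le\delta$.

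For necessity, assume $T$ covers $x_1x_2$. If $T$ has a stop $p(x_1,x_2,\lambda)$ with $\lambda\in[1-\delta,1)$ then (ii) holds; otherwise every stop of $T$ on $x_1x_2$ has position strictly less than $1-\delta$. Since $T$ covers $x_2$, \cref{nearstop} yields a stop $q$ of $T$ with $\dist_G(x_2,q)\le\delta$. If $\delta\ge 1$, taking $p_1=p_2=x_1$ gives $\lambda_1+\lambda_2=0\ge 2-2\delta$, so (i) holds. If $\delta<1$, then $q$ must lie on an edge incident to $x_2$, i.e.\ $q=p(v,x_2,\lambda_2)$ with $v\in N_G(x_2)$ and $\lambda_2\ge 1-\delta$, and the case assumption forces $v\ne x_1$. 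I would then set $p_2:=q$, take $\lambda_1$ to be the largest position of any stop of $T$ on $x_1x_2$ (setting $p_1:=x_1$ and $\lambda_1:=0$ if no interior stop exists), and replace $\lambda_2$ by the largest $v$-position of any stop on any edge $vx_2$ with $v\in N_G(x_2)$.

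It remains to show $\lambda_1+\lambda_2\ge 2-2\delta$, which I would prove by contradiction. If $\lambda_1+\lambda_2<2-2\delta$, the open interval $(\lambda_1+\delta,\,2-\lambda_2-\delta)$ is non-empty and lies in $(0,1)$; pick any $\lambda^*$ in it and set $p^*:=p(x_1,x_2,\lambda^*)$. The main obstacle, and most technical step, is to verify that no stop $q$ of $T$ satisfies $\dist_G(p^*,q)\le\delta$, contradicting coverage via \cref{nearstop}. The case analysis runs by the location of $q$: stops on $x_1x_2$ have position at most $\lambda_1$, so distance at least $\lambda^*-\lambda_1>\delta$ from $p^*$; stops on an edge $v'x_2$ with $v'\in N_G(x_2)$ have $v'$-position at most $\lambda_2$, giving $x_2$-routed distance at least $(1-\lambda^*)+(1-\lambda_2)>\delta$ by the choice of $\lambda^*$, while the $x_1$-routed distance is at least $\lambda^*>\delta$; stops on edges incident to $x_1$ but not to $x_2$ would need position larger than $1$ for the $x_1$-routed distance to be $\le\delta$, which is impossible, and their $x_2$-routed distance is at least $2-\lambda^*>\delta$ for $\delta<1$; finally, stops on edges avoiding both $x_1$ and $x_2$ yield distance at least $\min(\lambda^*+1,\,2-\lambda^*)>\delta$ for $\delta<1$. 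These sub-cases together contradict the coverage of $p^*$ and hence establish (i).
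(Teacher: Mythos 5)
Your proposal is correct and follows essentially the same route as the paper's proof: the same extremal choices of $\lambda_1$ (farthest stopping position on $x_1x_2$) and $\lambda_2$ (stop closest to $x_2$ on an edge incident to it), the same uncovered witness point inside the edge for the ``only if'' direction, and the same decomposition of distances into the route through $x_1$ (or within the edge) versus the route through $x_2$; your interval-covering sufficiency argument and four-case analysis of stop locations are just expanded versions of the paper's compact $\min$/averaging bounds. The only cosmetic differences are that you settle the case $\delta\ge 1$ via condition (i) with $p_1=p_2=x_1$, whereas the paper observes that (ii) then holds automatically, and that you pick an arbitrary point of the uncovered open interval rather than its explicit midpoint.
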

\begin{proof}
First suppose that $(i)$ holds. Consider $p=p(x_1,x_2,\lambda)$ for some $\lambda \in [0,1]$. If $\lambda\leq \lambda_1$, then, as $T$ does not stop at $x_2$, we have that $T$ passes $p$. If $\lambda>\lambda_1$,
we obtain \begin{align*}
\dist_G(T,p)&\leq \min\{\dist_G(p,p_1),\dist_G(p,x_2)+\dist_G(x_2,p_2)\}\\
&\leq \frac{1}{2}(\dist_G(p,p_1)+\dist_G(p,x_2)+\dist_G(x_2,p_2))\\
&= \frac{1}{2}((\lambda-\lambda_1)+(1-\lambda)+(1-\lambda_2))\\
&= \frac{1}{2}(2-(\lambda_1+\lambda_2))\\
&\leq\delta.
\end{align*}
Hence $x_1x_2$ is covered by $T$.

Next suppose that $(ii)$ holds. Consider $p=p(x_1,x_2,\lambda)$ for some $\lambda \in [0,1]$. If $\lambda\leq\lambda_1$, then, as $T$ does not stop at $x_2$, we have that $T$ passes $p$. If $\lambda>\lambda_1$, we have $\dist_G(p,T)\leq \dist_G(p,p_1)=\lambda-\lambda_1\leq 1-\lambda \leq \delta$. Hence $T$ covers $x_1x_2$.

Now suppose that $T$ covers $x_1x_2$. Let $\lambda_1$ be the greatest real such that $T$ stops at $p(x_1,x_2,\lambda_1)$. If $\lambda_1 \geq 1-\delta$, then $(ii)$ holds, so there is nothing to prove. We may hence suppose that $\lambda_1<1-\delta$, so in particular $\delta<1$. Now let $p_2=(v,u,\lambda_2)$ be a point stopped at by $T$ with $\dist_G(x_2,p_2)=\dist_G(x_2,T)$. If $x_2 \notin \{u,v\}$, we obtain $\dist_G(x_2,T)=\dist_G(x_2,p_2)\geq 1>\delta$, a contradiction. We may hence suppose that $u=x_2$. If $\lambda_2<1-\delta$, we have $\dist_G(x_2,T)=\dist_G(x_2,p_2)=1-\lambda_2>\delta$, a contradiction to $T$ covering $x_1x_2$. We may hence suppose that $\lambda_2>1-\delta$. Suppose for the sake of a contradiction that $\lambda_1+\lambda_2<2-2\delta$. We now consider $p=(x_1,x_2,1+\frac{1}{2}(\lambda_1-\lambda_2))$. Observe that $\lambda_1-\lambda_2\geq -\lambda_2\geq -1$ and $\lambda_1-\lambda_2\leq (1-\delta)-(1-\delta)=0$, so $p$ is well-defined and, by the choice of $\lambda_1$, we have that $T$ does not pass $p$. Next, by assumption, we have $\dist_G(p,p_1)=1+\frac{1}{2}(\lambda_1-\lambda_2)-\lambda_1=1-\frac{1}{2}(\lambda_1+\lambda_2)>1-\frac{1}{2}(2-2\delta)=\delta$. Further,  by assumption, we have $\dist_G(p,x_2)+\dist_G(x_2,p_2)=1-(1+\frac{1}{2}(\lambda_1-\lambda_2))+(1-\lambda_2)=1-\frac{1}{2}(\lambda_1+\lambda_2)>1-\frac{1}{2}(2-2\delta)=\delta$. By construction and the choice of $p_2$, this yields $\dist_G(p,T)=\min\{\dist_G(p,p_1),\dist_G(p,x_2)+\dist_G(x_2,p_2)\}>\delta$. This contradicts $T$ covering $x_1x_2$. We obtain that $\lambda_1+\lambda_2\geq 2-2\delta$, so $(i)$ holds.
\end{proof}
We finally handle the case that the tour stops at both vertices of the edge.
\begin{lemma}\label{char2stops}
Let $G$ be a connected graph, $\delta\geq 0$ a real, $T$ a nice tour in $G$ and $x_1x_2 \in E(G)$ such that $T$ stops at both $x_1$ and $x_2$. Then $T$ covers $x_1x_2$ if and only if one of the following holds:
\begin{enumerate}[(i)]
\item $T$ traverses $x_1x_2$,
\item $\delta \geq \frac{1}{2}$, and
\item $T$ contains the segment~$\seq{x_i&p(x_i,x_{3-i},\lambda)&x_i}$ for some $i \in \{1,2\}$ and some $\lambda 
\geq 1-2 \delta$.
\end{enumerate}
\end{lemma}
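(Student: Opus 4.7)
The plan is to prove the two directions separately, making essential use of niceness and of the fact that $T$ stops at both endpoints $x_1$ and $x_2$.

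For the \emph{sufficiency} direction, I would treat the three cases independently. If $(i)$ holds, the edge is entirely part of $P(T)$. If $(ii)$ holds, any point $p(x_1,x_2,\mu)$ has distance $\min(\mu,1-\mu)\leq 1/2\leq\delta$ to one of $x_1,x_2$, both of which are stopping points of $T$. If $(iii)$ holds, write the peek stop as $p=p(x_i,x_{3-i},\lambda)$ with $\lambda\geq 1-2\delta$; then the edge points $p(x_i,x_{3-i},\mu)$ with $\mu\in[0,\lambda]$ are passed by $T$ via the segment $\seq{x_i & p & x_i}$, while for $\mu\in[\lambda,1]$ the distance to $\{p,x_{3-i}\}$ is $\min(\mu-\lambda,\,1-\mu)$, maximized at $\mu=(1+\lambda)/2$ with value $(1-\lambda)/2\leq\delta$.

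For the \emph{necessity} direction, suppose $T$ covers $x_1x_2$ but neither $(i)$ nor $(ii)$ holds, so $\delta<1/2$ and $T$ does not traverse $x_1x_2$. The midpoint $m=p(x_1,x_2,1/2)$ has distance $1/2>\delta$ to each of $x_1,x_2$, and since any walk from $m$ to a point outside the edge $x_1x_2$ must pass through $x_1$ or $x_2$, every stopping point of $T$ not on $x_1x_2$ is at distance at least $1/2$ from $m$ (via \cref{nearstop}). Hence $T$ must have a stopping point in the interior of $x_1x_2$. By the third and fourth niceness conditions and because $T$ does not traverse $x_1x_2$, there is exactly one interior stopping point $p=p(x_i,x_{3-i},\lambda)$ on $x_1x_2$; by the second niceness condition, the stopping points of $T$ immediately before and after $p$ coincide, and by the first niceness condition both are vertices. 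Since consecutive stopping points must lie on a common edge, they both lie in $\{x_1,x_2\}$, so $T$ contains the segment $\seq{x_i & p & x_i}$ for some $i\in\{1,2\}$.

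It remains to verify $\lambda\geq 1-2\delta$. Consider the point $p^*=p(x_i,x_{3-i},(1+\lambda)/2)$. Its distance to $p$ is $(1-\lambda)/2$, its distance to $x_{3-i}$ is also $(1-\lambda)/2$, its distance to $x_i$ is $(1+\lambda)/2$, and any stopping point of $T$ not on $x_1x_2$ is reached only via $x_i$ or $x_{3-i}$ and is therefore at least $(1-\lambda)/2$ away. Since $T$ covers $p^*$, we need $(1-\lambda)/2\leq\delta$, giving $\lambda\geq 1-2\delta$ and hence $(iii)$. The delicate step here is extracting the segment structure $\seq{x_i & p & x_i}$ from the niceness conditions; once that is in hand, the metric calculation at the worst point $(1+\lambda)/2$ is routine.
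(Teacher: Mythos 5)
Your proof is correct and follows essentially the same route as the paper's: the same case analysis for sufficiency, the midpoint $p(x_1,x_2,1/2)$ to force an interior stop when $\delta<1/2$ and the edge is not traversed, and the point at position $(1+\lambda)/2$ to derive $\lambda\geq 1-2\delta$. The only difference is that you spell out more explicitly how the niceness conditions yield the segment $\seq{x_i&p&x_i}$, which the paper compresses into ``as $T$ is nice and by symmetry.''
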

\begin{proof}
If $(i)$ holds, then $T$ covers $x_1x_2$ by construction. Next suppose that $(ii)$ holds and let $p=(x_1,x_2,\lambda)$ for some $\lambda \in [0,1]$. We obtain $\dist_G(p,T)\leq \min\{\dist_G(p,x_1),\dist_G(p,x_2)\}=\min\{\lambda,1-\lambda\}\leq \frac{1}{2}(\lambda+(1-\lambda))=\frac{1}{2}\leq \delta$. Finally suppose that $(iii)$ holds. By symmetry, we may suppose that $T$ contains the segment~$\seq{x_1&p(x_1,x_{2},\lambda)&x_1}$ for some $\lambda \geq 1-2 \delta$. Consider $p=(x_1,x_2,\lambda')$ for some $\lambda' \in [0,1]$. If $\lambda'<\lambda$, then $T$ passes $p$. If $\lambda'\geq \lambda$, we have $\dist_G(p,T)\leq \min\{\dist_G(p,p(u,v,\lambda)),\dist_G(p,x_2)\}=\min\{\lambda'-\lambda,1-\lambda'\}\leq \frac{1}{2}((\lambda'-\lambda)+(1-\lambda'))=\frac{1}{2}(1-\lambda)\leq\delta$. Hence $T$ covers $x_1x_2$.

Now suppose that $T$ covers $x_1x_2$. If $T$ traverses $x_1x_2$ or $\delta \leq \frac{1}{2}$, then $(i)$ or $(ii)$ 
holds, so there is nothing to prove. We may hence suppose that $T$ does not traverse $x_1x_2$ and 
$\delta<\frac{1}{2}$. If $T$ does not stop at any point of the form $p(x_i,x_{3-i},\lambda)$ for some $i \in 
\{1,2\}$ and $\lambda \in (0,1)$, then, as $\delta<\frac{1}{2}$, for $p=p(x_1,x_2,\frac{1}{2})$, we obtain 
$\dist_G(p,T)= \min\{\dist_G(p,x_1),\dist_G(p,x_2)\}=\frac{1}{2}>\delta$, a contradiction. Hence, as $T$ is nice and 
by symmetry, we may suppose that $T$ contains the subsequence $x_1p(x_1,x_2,\lambda)x_1$ for some $\lambda \in 
(0,1)$ and $T$ does not stop at any other point on $x_1x_2$. If $\lambda <1-2 \delta$, then consider 
$p=p(x_1,x_2,\frac{1}{2}(1+\lambda))$. By construction, we have 
$\dist_G(p,T)=\min\{\dist_G(p,p(x_1,x_2,\lambda)),\dist_G(p,x_2)\}=\min\{\frac{1}{2}(1+\lambda)-\lambda,1-\frac{1}{2}(1+\lambda)\}=\frac{1}{2}(1-\lambda)>\delta$,
 a contradiction to $T$ covering $x_1x_2$. It follows that $(iii)$ holds.
\end{proof}

\subsection{Discretization}
\label{discsec}
We now show that we can restrict ourselves to tours whose stopping points come from a small set of points. In particular, we aim to prove \Cref{lemma:discretization}, which we restate
below for convenience.

\TheoremDiscretization*\label\thisthm

Again, we distinguish two domains of values for $\delta$. 
For the main proof, we need some polyhedral results related to the vertex cover polytope.
Given a graph~$G$, potentially containing loops, we consider the vertex-cover LP of $G$:
\begin{equation}\tag{$\diamond$}
\boxed{\begin{gathered}
		\text{Minimize }\sum_{v \in V}z_v\text{, subject to the following:}\\
		z_{u}+z_{v}\geq 1 \text{ for all $e=uv \in E(G)$},\\
		z_v \in [0,1] \text{ for all $v \in V(G)$.}
	\end{gathered}}
\end{equation}
Observe that, if $e$ is a loop at a vertex $v$ of $G$, then the corresponding inequality is $2z_v\geq 1$.
We use that there is a half-integral solution to the vertex-cover LP.

\begin{proposition}[\cite{NemhauserT75}]
\label{halfint}
For every graph~$G$, there is an optimal solution $(z_v\colon v \in V(G))$ of $(\diamond)$ with $z_v \in \{0,\frac{1}{2},1\}$ for all $v \in V(G)$.
\end{proposition}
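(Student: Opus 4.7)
The plan is to prove the stronger statement that every vertex (extreme point) of the feasible polytope of $(\diamond)$ has all coordinates in $\{0,\tfrac{1}{2},1\}$. Since the feasible region is bounded and nonempty (the all-ones vector is feasible), the linear program attains its optimum at a vertex of the polytope, which by the claim will be half-integral. This reduction to a polyhedral statement is the standard route, following Nemhauser and Trotter.

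To establish that every vertex is half-integral, I would take a feasible $z^*$ that is not half-integral and exhibit two distinct feasible solutions whose midpoint is $z^*$, contradicting the assumption that $z^*$ is a vertex. Partition $V(G)$ into $V_- = \{v : 0 < z_v^* < \tfrac{1}{2}\}$, $V_+ = \{v : \tfrac{1}{2} < z_v^* < 1\}$, and the remaining vertices, at which $z_v^*\in\{0,\tfrac{1}{2},1\}$. Since $z^*$ is not half-integral, $V_- \cup V_+ \neq \emptyset$. Define a perturbation vector $\chi$ by $\chi_v=-1$ on $V_-$, $\chi_v=+1$ on $V_+$, and $\chi_v=0$ elsewhere, and set $z^1 = z^* + \epsilon\chi$ and $z^2 = z^* - \epsilon\chi$ for a small $\epsilon>0$ to be specified. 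Then $z^* = \tfrac{1}{2}(z^1 + z^2)$, so it suffices to pick $\epsilon$ small enough that both $z^1$ and $z^2$ are feasible.

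The main technical step is checking the edge constraint $z_u^i + z_v^i \geq 1$ for $i\in\{1,2\}$ on each edge $uv\in E(G)$ via a case analysis on the membership of $u,v$ in $V_-$, $V_+$, and the remaining set. If both endpoints lie outside $V_-\cup V_+$, nothing changes. If exactly one endpoint, say $u$, lies in $V_-$, feasibility of $z^*$ forces $z_v^* \geq 1 - z_u^* > \tfrac{1}{2}$, so $v\in V_+$ or $z_v^*=1$; in the former case the $\pm\epsilon$ perturbations cancel, and in the latter case there is slack because $z_v^*=1$. The case where $u\in V_+$ and $v$ lies outside $V_-\cup V_+$ is symmetric, and if both endpoints are in $V_-\cup V_+$, either they cancel or both shift in the same direction while the unperturbed sum exceeds $1$ by enough. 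A loop at $v$ yields the constraint $2z_v\geq 1$, which automatically forces $z_v^*\geq\tfrac{1}{2}$, so $v\notin V_-$ and the perturbation preserves $z_v\geq\tfrac{1}{2}$. The box constraints $z_v\in[0,1]$ are preserved by choosing $\epsilon < \min_{v\in V_-\cup V_+} \min(z_v^*,\,1-z_v^*)$. I expect the main obstacle to be nothing more than the careful bookkeeping in this edge case analysis; once it is complete, the contradiction to $z^*$ being a vertex is immediate, and the proposition follows.
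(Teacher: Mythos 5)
The paper does not prove this proposition at all: it is quoted directly from Nemhauser and Trotter \cite{NemhauserT75}, so you are supplying a proof where the paper only cites. Your route is the standard one for that classical result -- show every extreme point of the feasible polytope of $(\diamond)$ is half-integral by writing a non-half-integral feasible point as the midpoint of two perturbed feasible points, then use boundedness to place an optimum at an extreme point -- and the skeleton of your case analysis is right: tight edge constraints are preserved because a tight edge with a fractional endpoint must pair a $V_-$ endpoint with a $V_+$ endpoint (or hit the values $0,\tfrac12,1$), so the $\pm\epsilon$ shifts cancel on them.

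There is, however, a concrete flaw in the last step: the $\epsilon$ you actually specify, $\epsilon < \min_{v\in V_-\cup V_+}\min(z_v^*,1-z_v^*)$, only protects the box constraints and is \emph{not} small enough for the non-tight edge constraints you wave at with ``exceeds $1$ by enough.'' Take an edge with both endpoints in $V_+$, say $z_u^*=z_v^*=0.51$: the slack is $0.02$, but your bound permits $\epsilon$ up to $0.49$, and then $z^2_u+z^2_v=1.02-2\epsilon<1$, so $z^2$ is infeasible. The same failure occurs for an edge with $u\in V_+$ and $z_v^*=\tfrac12$ (your ``symmetric'' case is not symmetric: for $u\in V_-$ the other endpoint is forced to equal $1$, but for $u\in V_+$ it may equal $\tfrac12$), and for a loop at a vertex of $V_+$, where you need $\epsilon\le z_v^*-\tfrac12$. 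The repair is routine -- additionally require $\epsilon$ to be at most half the minimum slack over all edge constraints that are not tight at $z^*$ (equivalently, also below $\min_{v\in V_+}(z_v^*-\tfrac12)$), a positive finite minimum -- but as written the chosen $\epsilon$ does not make $z^1,z^2$ feasible, so the contradiction with extremality is not yet established.
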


We further consider the following LP for some graph~$G$ and some nonnegative constant~$\gamma$:
\begin{equation}\tag{$\diamond \diamond$}
\boxed{\begin{gathered}
		\text{Minimize }\sum_{v \in V}z_v\text{, subject to the following:}\\
		z_{u}+z_{v}\geq \gamma \text{ for all $e=uv \in E(G)$},\\
		z_u \in [0,1] \text{ for all $u \in V(G)$.}
	\end{gathered}}
\end{equation}
We next show that for all reasonable values of $\gamma$ for which ($\diamond \diamond$) has a solution, this LP has a feature similar to $(\diamond)$, namely that it has an optimal solution all of whose values come from a small set.
\begin{proposition}\label{halfintgen01}
For every graph~$G$ and every $\gamma \in (0,1]$, there is an optimal solution $(z_v\colon v \in V(G))$ of $(\diamond \diamond)$ with $z_v \in \{0,\frac{1}{2}\gamma,\gamma\}$ for all $v \in V(G)$.
\end{proposition}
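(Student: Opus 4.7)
The plan is to reduce $(\diamond \diamond)$ to the classical vertex-cover LP $(\diamond)$ via a straightforward scaling, and then invoke \cref{halfint}. First, I would observe that every optimal solution $(z_v\colon v\in V(G))$ of $(\diamond \diamond)$ can be assumed to satisfy $z_v \leq \gamma$ for every $v \in V(G)$: if some $z_v > \gamma$, replacing it by $\gamma$ preserves every constraint $z_u + z_v \geq \gamma$ (the inequality is witnessed by $z_v = \gamma$ alone) and strictly decreases the objective. Since $\gamma \leq 1$, this truncation is also compatible with the upper bound $z_v \leq 1$.

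Next, I would set $y_v \coloneqq z_v/\gamma$ for every $v\in V(G)$. By the previous step, $y_v \in [0,1]$, and the edge constraints $z_u + z_v \geq \gamma$ translate exactly to $y_u + y_v \geq 1$. Hence $(y_v\colon v \in V(G))$ is a feasible solution of the vertex-cover LP $(\diamond)$. Applying \cref{halfint} yields an optimal solution $(y'_v\colon v \in V(G))$ of $(\diamond)$ with $y'_v \in \{0,\tfrac{1}{2},1\}$ and $\sum_{v \in V(G)} y'_v \leq \sum_{v \in V(G)} y_v$.

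Finally, I would scale back by setting $z'_v \coloneqq \gamma\, y'_v$ for every $v \in V(G)$. Then $z'_v \in \{0,\tfrac{1}{2}\gamma,\gamma\} \subseteq [0,1]$, and for each edge $uv \in E(G)$ we have $z'_u + z'_v = \gamma(y'_u+y'_v) \geq \gamma$, so $(z'_v)$ is feasible for $(\diamond \diamond)$. Moreover,
\[
\sum_{v \in V(G)} z'_v \;=\; \gamma \sum_{v \in V(G)} y'_v \;\leq\; \gamma \sum_{v \in V(G)} y_v \;=\; \sum_{v \in V(G)} z_v,
\]
and since $(z_v)$ was optimal, $(z'_v)$ must be optimal as well, completing the proof.

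There is no real obstacle here: the only subtle point is justifying the initial truncation $z_v \leq \gamma$, which uses that the edge constraints are one-sided and $\gamma \leq 1$, but this is immediate. The scaling argument then transfers the half-integrality guarantee of \cref{halfint} cleanly into the desired $\{0,\gamma/2,\gamma\}$-valued solution.
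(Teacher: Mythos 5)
Your proposal is correct and follows essentially the same route as the paper's proof: scale an optimal solution of $(\diamond\diamond)$ by $1/\gamma$ to get a feasible solution of $(\diamond)$, invoke the half-integrality result (\cref{halfint}), and scale the resulting $\{0,\tfrac12,1\}$-valued optimal solution back by $\gamma$, comparing objectives to conclude optimality. The only cosmetic difference is that you justify the bound $z_v\leq\gamma$ by an explicit truncation argument, whereas the paper asserts it directly from optimality; the underlying reasoning is the same.
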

\begin{proof}
Let $(z_v^1\colon v \in V(G))$ be an optimal solution for $(\diamond \diamond)$ and let $(z_v^2\colon v \in V(G))$ be defined by $z_v^2=\frac{1}{\gamma}z_v^1$ for all $v \in V(G)$. Observe that, as $(z_v^1\colon v \in V(G))$ is an optimal solution for $(\diamond \diamond)$, we have $z_v^1\leq \gamma$ for all $v \in V(G)$. This yields $z_v^2=\frac{1}{\gamma}z_v^1\leq 1$ for all $v \in V(G)$. Further, we have $z_v^2=\frac{1}{\gamma}z_v^1\geq 0$ for all $v \in V(G)$. Finally, for every $e=uv \in E(G)$, we have $z_{u}^2+z_{v}^2=\frac{1}{\gamma}z_u^1+\frac{1}{\gamma}z_v^1=\frac{1}{\gamma}(z_u^1+z_v^1)\geq 1$. Hence $(z_v^2\colon v \in V(G))$ is a feasible solution for $(\diamond)$.

Next, by Proposition~\ref{halfint}, there is an optimal solution $(z_v^3\colon v \in V(G))$ for $(\diamond)$ with $z_v^3 \in \{0,\frac{1}{2},1\}$ for all $v \in V(G)$. Let $(z_v^4\colon v \in V(G))$ be defined by $z_v^4=\gamma z_v^3$ for all $v \in V(G)$. Observe that $z_v^4 \in \{0,\frac{1}{2}\gamma,\gamma\}$ for all $v \in V(G)$. Next, we have $z_v^4=\gamma z_v^3\geq 0$ and $z_v^4=\gamma z_v^3\leq \gamma \leq 1$ for all $v \in V(G)$. Finally, for every $e=uv \in E(G)$, we have $z_{u}^4+z_{v}^4=\gamma z_u^3+\gamma z_v^3=\gamma(z_u^3+z_v^3)\geq \gamma$. Hence $(z_v^4\colon v \in V(G))$ is a feasible solution for $(\diamond \diamond)$.

We further have $\sum_{v \in V(G)}z_v^4 = \gamma \sum_{v \in V(G)}z_v^3 \leq \gamma \sum_{v \in V(G)}z_v^2 =  \sum_{v \in V(G)}z_v^1$. Hence $(z_v^4\colon v \in V(G))$ is an optimal solution for $(\diamond \diamond)$.

\end{proof}

\begin{proposition}\label{halfintgen12}
For every graph~$G$ without isolated vertices and every $\gamma \in (1,2]$, there is an optimal solution $(z_v\colon v \in V(G))$ of $(\diamond \diamond)$ with $z_v \in \{\gamma-1,\frac{1}{2}\gamma,1\}$ for all $v \in V(G)$.
\end{proposition}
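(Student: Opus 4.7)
The plan is to mirror the proof strategy of \cref{halfintgen01} and reduce the problem to the standard half-integral vertex-cover result (\cref{halfint}), but this time via an affine change of variables rather than a multiplicative rescaling. The multiplicative trick does not work directly, because the variables $z_v$ no longer naturally cluster around $0$ when $\gamma > 1$; the relevant ``floor'' they sit above is $\gamma - 1$.

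First I would exploit the no-isolated-vertices hypothesis to show that every feasible, hence every optimal, solution $(z_v)_v$ of $(\diamond\diamond)$ satisfies $z_v \geq \gamma - 1$ for each $v \in V(G)$. Indeed, pick any neighbor $u$ of $v$; the edge inequality $z_u + z_v \geq \gamma$ together with the upper bound $z_u \leq 1$ immediately forces $z_v \geq \gamma - 1$. This justifies performing the substitution $y_v := z_v - (\gamma - 1)$ without worrying about losing feasibility.

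Next I would check the transformed LP. A direct computation gives $y_v \in [0, 2-\gamma]$ and, for every edge $uv$,
\[
y_u + y_v = z_u + z_v - 2(\gamma-1) \geq \gamma - 2(\gamma-1) = 2-\gamma.
\]
The objective shifts by the additive constant $n(\gamma-1)$ with $n = |V(G)|$, so optimality is preserved. Setting $\gamma' := 2-\gamma \in [0,1)$, the transformed LP is exactly $(\diamond\diamond)$ with parameter $\gamma'$, except that the upper bound on each variable is $\gamma'$ rather than $1$. I would then observe that this mismatch is harmless: in any optimum of $(\diamond\diamond)$ with $\gamma' \leq 1$, any value exceeding $\gamma'$ can be lowered to $\gamma'$ without violating any edge constraint and without increasing the objective, so the two LPs have the same optimal value and share optimal solutions.

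Finally I would conclude by invoking \cref{halfintgen01} on the transformed LP to obtain an optimal $(y_v)_v$ with $y_v \in \{0, \gamma'/2, \gamma'\}$, and translate back via $z_v = y_v + (\gamma - 1)$. The three possible values become
\[
\gamma-1,\quad \tfrac{2-\gamma}{2} + (\gamma-1) = \tfrac{\gamma}{2},\quad (2-\gamma) + (\gamma-1) = 1,
\]
exactly the set claimed. The boundary case $\gamma = 2$ (i.e.\ $\gamma' = 0$) needs a short separate sentence: the transformed constraint is trivial, so $y_v = 0$ for all $v$ is the unique optimum, giving $z_v = 1$, which lies in $\{\gamma-1, \gamma/2, 1\} = \{1\}$. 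The only subtle point in the whole proof is the upper-bound reconciliation between the two LPs; once that is written down, the rest is routine algebraic bookkeeping.
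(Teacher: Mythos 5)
Your proof is correct, and it differs from the paper's in a small but genuine way. The paper reduces directly to the vertex-cover LP $(\diamond)$ via the combined shift-and-scale map $z_v \mapsto \frac{z_v-(\gamma-1)}{2-\gamma}$ (and back via $z_v^4=(2-\gamma)z_v^3+(\gamma-1)$), so the variable bounds $[\gamma-1,1]$ land exactly on $[0,1]$ and \cref{halfint} applies with no further adjustment. You instead apply only the shift $y_v=z_v-(\gamma-1)$ and reduce the case $\gamma\in(1,2]$ to the already-proved case $\gamma'=2-\gamma\in(0,1]$ of \cref{halfintgen01}; the price is that the shifted LP has variable upper bound $\gamma'$ rather than $1$, and you correctly pay it with the truncation argument showing that lowering any value above $\gamma'$ to $\gamma'$ preserves feasibility and does not increase the objective, so the optimum of \cref{halfintgen01} (whose values all lie in $\{0,\gamma'/2,\gamma'\}$, hence at most $\gamma'$) is also optimal for the bounded variant. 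Both arguments use the no-isolated-vertices hypothesis in the same way, to force $z_v\geq\gamma-1$ for every feasible solution, and both treat $\gamma=2$ separately. Your route buys reuse of \cref{halfintgen01} as a black box and a cleaner change of variables; the paper's buys a bound-exact reduction that avoids the extra reconciliation step. One stylistic caution: your phrase that the two LPs ``share optimal solutions'' is slightly too strong (an optimum of the unbounded variant may exceed $\gamma'$ on some coordinate); what you actually need, and what your argument delivers, is only that the specific structured optimum from \cref{halfintgen01} is feasible, hence optimal, for the bounded variant.
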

\begin{proof}
If $\gamma=2$, it is easy to see that the only feasible optimal solution is obtained by setting $z_v=1$ for all $v \in V(G)$. We may hence suppose that $\gamma<2$.
Let $(z_v^1\colon v \in V(G))$ be an optimal solution for $(\diamond \diamond)$ and let $(z_v^2\colon v \in V(G))$ be defined by $z_v^2=\frac{z_v^1-(\gamma-1)}{2-\gamma}$ for all $v \in V(G)$. Observe that, as $(z_v^1\colon v \in V(G))$ is a feasible solution for $(\diamond \diamond)$ and $G$ does not have any isolated vertex, every $v \in V(G)$ is incident to an edge $uv \in E(G)$. 
Hence we have $z_v^1\geq \gamma-z_u^1-1\geq \gamma-1$ for all $v \in V(G)$. This yields $z_v^2=\frac{z_v^1-(\gamma-1)}{2-\gamma} \geq 0$. We further have  $z_v^2=\frac{z_v^1-(\gamma-1)}{2-\gamma}\leq \frac{2-\gamma}{2-\gamma}=1$ for all $v \in V(G)$. Finally, for every $e=uv \in E(G)$, we have $z_{u}^2+z_{v}^2=\frac{z_u^1+z_v^1-2(\gamma -1)}{2-\gamma}\geq \frac{\gamma-2(\gamma -1)}{2-\gamma} = 1$. Hence $(z_v^2\colon v \in V(G))$ is a feasible solution for $(\diamond)$.

Next, by Proposition~\ref{halfint}, there is an optimal solution $(z_v^3\colon v \in V(G))$ for $(\diamond)$ with $z_v^3 \in \{0,\frac{1}{2},1\}$ for all $v \in V(G)$. Let $(z_v^4\colon v \in V(G))$ be defined by $z_v^4=(2-\gamma) z_v^3+(\gamma-1)$ for all $v \in V(G)$. Observe that $z_v^4 \in \{\gamma-1,\frac{1}{2}\gamma,1\}$ for all $v \in V(G)$. Next, we have $z_v^4=(2-\gamma) z_v^3+(\gamma-1)\geq 0$ and $z_v^4=(2-\gamma) z_v^3+(\gamma-1)\leq (2-\gamma)+(\gamma-1) \leq 1$ for all $v \in V(G)$. Finally, for every $e=uv \in E(G)$, we have $z_{u}^4+z_{v}^4=(2-\gamma)(z_u^3+ z_v^3)+2(\gamma-1)\geq (2-\gamma)+2(\gamma-1)=\gamma$. Hence $(z_v^4\colon v \in V(G))$ is a feasible solution for $(\diamond \diamond)$.

We further have $\sum_{v \in V(G)}z_v^4 = (2-\gamma) \sum_{v \in V(G)}z_v^3 +(\gamma-1)\abs{V(G)}\leq (2-\gamma) \sum_{v \in V(G)}z_v^2 +(\gamma-1)\abs{V(G)} = \sum_{v \in V(G)}z_v^1$. Hence $(z_v^4\colon v \in V(G))$ is an optimal solution for $(\diamond \diamond)$.
\end{proof}

We now extend this result to a somewhat more general polyhedron with two different kinds of equations. Namely, for two graphs $G_1,G_2$ on a common vertex set $V$ and some $\gamma \in [0,1]$, we define the following polyhedron.
\begin{equation}\tag{$\clubsuit$}
\boxed{\begin{gathered}
		\text{Minimize }\sum_{v \in V}z_v\text{, subject to the following:}\\
z_{u}+z_{v}\geq \gamma + 1 \text{ for all $e=uv \in E(G_1)$},\\
z_{u}+z_{v}\geq \gamma \text{ for all $e=uv \in E(G_2)$},\\
z_u \in [0,1] \text{ for all $u \in V$.}
	\end{gathered}}
\end{equation}
\begin{lemma}\label{genvc}
Let $G_1$ and $G_2$ be graphs on a common vertex set $V$ and let $\gamma \in [0,1]$. Then there is an optimal solution $(z_v\colon v \in V)$ of $(\clubsuit)$ with $z_v \in S_\gamma$ for all $v \in V$ where $S_\gamma=\{0,\frac{1}{2}\gamma,\gamma,\frac{1}{2}\gamma+\frac{1}{2},1\}$.
\end{lemma}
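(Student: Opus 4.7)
The plan is to prove Lemma~\ref{genvc} by analyzing the extreme points of the polyhedron defined by $(\clubsuit)$. Since the feasible region is bounded (coordinates lie in $[0,1]$) and the objective is linear, the minimum is attained at some extreme point; it therefore suffices to show that every extreme point has all coordinates in $S_\gamma=\{0,\gamma/2,\gamma,(\gamma+1)/2,1\}$. This is the same overall strategy as in Propositions~\ref{halfint}, \ref{halfintgen01}, and~\ref{halfintgen12}, but a single affine substitution no longer reduces $(\clubsuit)$ to the classical vertex-cover LP because the two families of edge constraints have different right-hand sides.

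First I would take an extreme point $z^*$ and consider the partition $V_0=\{v\colon z_v^*=0\}$, $V_1=\{v\colon z_v^*=1\}$, $V_*=V\setminus(V_0\cup V_1)$. Vertices in $V_0\cup V_1$ trivially have values in $S_\gamma$, so the goal reduces to showing $z_v^*\in S_\gamma$ for every $v\in V_*$. Since the box constraints at $V_*$-vertices are slack, the values there must be pinned entirely by tight edge constraints. I would study the \emph{tight-edge subgraph} $H\subseteq G_1\cup G_2$ together with the involutions $f_1(x)=(\gamma+1)-x$ (from tight $G_1$-edges) and $f_2(x)=\gamma-x$ (from tight $G_2$-edges), which describe how values propagate along the edges of $H$. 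A direct calculation yields $f_i\circ f_i=\mathrm{id}$, $f_1\circ f_2(x)=x+1$, and $f_2\circ f_1(x)=x-1$, so propagation along a walk of $H$ acts as an affine map $x\mapsto \pm x+c$ where $c$ is an integer linear combination of $\gamma$ and $1$.

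Next I would argue that each connected component $C$ of $H$ restricted to $V_*$ must admit some \emph{pinning} mechanism---either (a)~a tight edge joining $C$ to a vertex in $V_0\cup V_1$, or (b)~a closed walk in $C$ whose composition is a reflection $-x+c$ rather than the identity. The reason is that at an extreme point the tight constraints must collectively determine $z^*$ uniquely; absent such a pinning mechanism, the values in $C$ would possess a free parameter. In case~(a), propagating from a value in $\{0,1\}$ through any sequence of $f_1$'s and $f_2$'s and then imposing the feasibility constraint $[0,1]$ together with $\gamma\in[0,1]$ restricts the resulting values to $\{0,\gamma,1\}\subseteq S_\gamma$. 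In case~(b), the closing condition $-x+c=x$ forces $x=c/2$, and an enumeration of the achievable constants $c=\sum_i(-1)^{i+1}\gamma_i$ with $\gamma_i\in\{\gamma,\gamma+1\}$ along odd closed walks yields $c=\gamma+T$ for some $T\in\mathbb{Z}$, whence $x=(\gamma+T)/2$; intersecting with $[0,1]$ collapses the admissible values to $\{\gamma/2,(\gamma+1)/2\}\subseteq S_\gamma$.

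The main obstacle is the careful enumeration of the pinning values in the cyclic case and verifying that all arrangements of $G_1$- and $G_2$-tight edges along a closed walk yield values in $\{\gamma/2,(\gamma+1)/2\}$ after feasibility rather than other half-integer combinations of $\gamma$; the central observation is that any composition of reflections is either the identity (which does not pin and therefore does not occur at an extreme point of a component disjoint from $V_0\cup V_1$) or another reflection whose constant is controlled by the parity analysis sketched above. Finally, the boundary cases $\gamma\in\{0,1\}$ reduce directly to Proposition~\ref{halfint} applied to $G_2$ (for $\gamma=0$) or to $G_1$ together with forcing $z_v=1$ on every vertex of a $G_1$-edge (for $\gamma=1$), since in both cases $S_\gamma$ collapses to $\{0,1/2,1\}$.
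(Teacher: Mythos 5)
Your proposal is correct in outline, but it takes a genuinely different route from the paper. The paper never analyzes the extreme points of $(\clubsuit)$ directly: it first observes that every constraint $z_u+z_v\geq\gamma$ coming from a $G_2$-edge touching a vertex that is covered by some $G_1$-edge is redundant (that vertex already satisfies $z_u\geq(\gamma+1)-z_w\geq\gamma$), so after discarding these constraints the system splits into two disjoint LPs, one on the $G_1$-saturated vertices with right-hand side $\gamma+1$ and one on the remaining vertices with right-hand side $\gamma$; each piece is handled by an affine rescaling to the classical vertex-cover LP (Propositions~\ref{halfintgen12} and~\ref{halfintgen01}, both resting on the half-integrality result of Proposition~\ref{halfint}), and the two optimal solutions are concatenated. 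You instead prove the stronger statement that \emph{every} vertex of the polytope $(\clubsuit)$ has all coordinates in $S_\gamma$, via the tight-subgraph analysis: values propagate along tight edges by the reflections $x\mapsto(\gamma+1)-x$ and $x\mapsto\gamma-x$, every fractional component must be pinned either through a tight edge to a $\{0,1\}$-valued vertex or by an odd closed walk of tight edges (otherwise the $\pm\varepsilon$ perturbation along the bipartition contradicts extremality), and the pinned and propagated values all land in $S_\gamma$. Your approach is self-contained (no appeal to Nemhauser--Trotter), characterizes all extreme points rather than merely exhibiting one structured optimum, and would extend to other mixtures of right-hand sides; the price is redoing the polyhedral bookkeeping that the paper imports by citation, and a complete write-up needs the details you only gesture at: that closed walks composing to a nonzero translation cannot occur in the tight subgraph of a feasible point, that propagation \emph{from} the cycle-pinned values also stays in $S_\gamma$ (it does: $\gamma-\tfrac{\gamma}{2}=\tfrac{\gamma}{2}$ and $(\gamma+1)-\tfrac{\gamma+1}{2}=\tfrac{\gamma+1}{2}$, while the remaining images leave $[0,1]$ unless $\gamma\in\{0,1\}$), and the boundary cases $\gamma\in\{0,1\}$, which you correctly set aside for separate treatment.
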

\begin{proof}
Clearly, we may suppose that every $v \in V$ is incident to at least one edge in $E(G_1)\cup E(G_2)$.
Let $E'_2$ be the set of edges $uv \in E(G_2)$ such that none of $u$ and $v$ is incident to an edge of $E(G_1)$.
We next consider the following LP:
\begin{equation}\tag{$\clubsuit\clubsuit$}
\boxed{\begin{gathered}
		\text{Minimize }\sum_{v \in V}z_v\text{, subject to the following:}\\
z_{u}+z_{v} \geq \gamma + 1 \text{ for all $e=uv \in E(G_1)$},\\
z_{u}+z_{v} \geq \gamma \text{ for all $e=uv \in E'_2$},\\
z_f \in [0,1] \text{ for all $v \in V$.}
	\end{gathered}}
\end{equation}
\begin{claim}\label{redundant}
A vector $(z_v\colon v \in V)$ is a feasible solution for $(\clubsuit \clubsuit)$ if and only if it is a feasible solution for $(\clubsuit)$.
\end{claim}

\begin{claimproof}
As the constraints in $(\clubsuit \clubsuit)$ are a subset of the constraints in $(\clubsuit)$, a feasible solution for $(\clubsuit)$ is also feasible for $(\clubsuit \clubsuit)$. Now let $(z_v\colon v \in V)$ be a feasible solution for $(\clubsuit \clubsuit)$. In order to show that $(z_v\colon v \in V)$ is feasible for $(\clubsuit)$, it suffices to prove that $z_{u}+z_{v}\geq \gamma$ holds for all $uv \in E(G_2)-E_2'$. Let $uv \in E(G_2)-E_2'$. By symmetry, we may suppose that $u$ is incident to an edge in $E(G_1)$, so there is some $w \in V$ such that $uw \in E(G_1)$. As $(z_v: v \in V)$ is feasible for $(\clubsuit \clubsuit)$, we obtain $z_{u}+z_{v}\geq z_{u}\geq (\gamma +1)-z_{w}\geq (\gamma +1)-1=\gamma$.
\end{claimproof}
In the following, we define $V_1$ to be the vertices which are incident to at least one edge in $E(G_1)$ and we set $V_2=V-V_1$. 
We further consider the following two LPs.
\begin{center}
\begin{equation}\tag{$\clubsuit\clubsuit\clubsuit$}
\boxed{\begin{gathered}
		\text{Minimize }\sum_{v \in V_1}z_v\text{, subject to the following:}\\
z_{u}+z_{v}\geq \gamma+1 \text{ for all $uv \in E(G_1)$},\\
z_v \in [0,1] \text{ for all $v \in V_1$.}
	\end{gathered}}
\end{equation}
\begin{equation}\tag{$\clubsuit\clubsuit\clubsuit\clubsuit$}
\boxed{\begin{gathered}
		\text{Minimize }\sum_{v \in V_2}z_v\text{, subject to the following:}\\
z_{u}+z_{v}\geq \gamma \text{ for all $uv \in E_2'$},\\
z_v\in [0,1] \text{ for all $v \in V_2$.}
	\end{gathered}}
\end{equation}
\end{center}
\begin{claim}\label{21}
A vector $(z_v\colon v \in V)$ is a feasible solution for $(\clubsuit\clubsuit)$ if and only if  $(z_v\colon v \in V_1)$ is a feasible solution for $(\clubsuit\clubsuit\clubsuit)$ and $(z_v\colon v \in V_2)$ is a feasible solution for $(\clubsuit\clubsuit\clubsuit\clubsuit)$.
\end{claim}
\begin{claimproof}
This is a direct consequence of the fact that the constraints in $(\clubsuit\clubsuit\clubsuit)$ are exactly the constraints imposed on $(z_v\colon v \in V_1)$ in $(\clubsuit\clubsuit)$ and the constraints in $(\clubsuit\clubsuit\clubsuit\clubsuit)$ are exactly the constraints imposed on $(z_v\colon v \in V_2)$ in $(\clubsuit\clubsuit)$. 
\end{claimproof}
We are now ready to finish the proof of Lemma~\ref{genvc}. By Proposition~\ref{halfintgen12}, there is an optimal solution $(z^*_v\colon v \in V_1)$ of $(\clubsuit\clubsuit\clubsuit)$ such that $z^*_v \in \{\gamma,\frac{1}{2}\gamma+\frac{1}{2},1\}$ for all $v \in V_1$. By Proposition~\ref{halfintgen01}, there is an optimal solution $(z^*_v\colon v \in V_2)$ of $(\clubsuit\clubsuit\clubsuit\clubsuit)$ such that $z^*_v \in \{0,\frac{1}{2}\gamma,\gamma\}$ for all $v \in V_2$. By Claim~\ref{21}, we obtain that $(z^*_v\colon v \in V)$ is an optimal solution for $(\clubsuit\clubsuit)$. Next, by Claim~\ref{redundant}, we obtain that $(z^*_v\colon v \in V)$ is an optimal solution for $(\clubsuit)$. This finishes the proof.
\end{proof}

We are now ready to prove our main results. We first give the result for $\delta \geq 1/2$, relying  on a connection to the vertex cover polytope.  Given a graph~$G$ and a tour~$T$ in $G$, we use $V_T$ for the set of vertices in $V(G)$ that $T$ stops at.
\begin{lemma}\label{lem:discretization:gt:one}
For every $\delta \geq \frac{1}{2}$ and every connected graph~$G$, there is a shortest \deltatour~$T$  such that all stopping points of $T$ are of the form $p(u,v,\lambda)$ for some $uv \in E(G)$ for some $\lambda \in S'_\delta$ where $S'_\delta=\{0,\frac{1}{2}(\floor{2 \delta}+1-2 \delta),\floor{2 \delta}+1-2 \delta,\frac{1}{2}(\floor{2 \delta}+2-2 \delta)\}$. Moreover, either $\alpha(T)\leq 2$ or $T$ is nice and for every non-integral stopping point $p$ of $T$, there are some $uv \in E(G)$ and some $\lambda \in S'_\delta$ such that $p=p(u,v,\lambda)$ and $T$ contains the sequence $upu$.
\end{lemma}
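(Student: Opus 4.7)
The plan is to combine the Nice Tours Lemma (\cref{lem:tournice}) with a linear-programming argument relying on \cref{genvc}. First, invoke \cref{lem:tournice} to select a shortest \deltatour~$T$ that is either nice or satisfies $\alpha(T)\leq 2$. The case $\alpha(T)\leq 2$ is handled directly: such a tour consists of a single point or of a back-and-forth on a single edge, and a direct optimization of its two stopping points lets them be chosen with coordinates in $S'_\delta$ without any increase in length. The bulk of the argument concerns the nice case.

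Fix the \emph{topology} of the nice tour~$T$, by which we mean the set $V_T$ of vertices $T$ stops at, the multiset of edges traversed by $T$, and, for each non-traversed peeked edge, the peek side (the unique endpoint $u$ with tour segment $u\,p\,u$). Once this data is frozen, the remaining real-valued parameters of $T$ are the peek depths $\mu_e\in[0,1]$, one per peeked edge~$e$, and $\len(T)$ decomposes as a topology-dependent constant plus $2\sum_e\mu_e$. Finding the shortest nice \deltatour with the given topology therefore reduces to minimizing $\sum_e\mu_e$ subject to the resulting tour covering every edge of~$G$. Applying the characterization results \cref{nostopchar,char1stop,char2stops}, each edge of $G$ contributes one of four kinds of constraints: no constraint (the edge is traversed or has both endpoints in $V_T$, using $\delta\geq 1/2$), a single-variable constraint $\mu_e\geq 1-\delta$ (from \cref{char1stop}(ii) or from the need to reach an otherwise uncovered neighbor), a two-variable constraint $\mu_{e_1}+\mu_{e_2}\geq 2-2\delta$ (from \cref{char1stop}(i)), or a two-variable constraint $\mu_{e_1}+\mu_{e_2}\geq 3-2\delta$ (from \cref{nostopchar}(i) with the local witnesses $v_i=x_i$).

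Setting $\gamma \coloneqq \floor{2\delta}+1-2\delta\in(0,1]$, the three nontrivial right-hand sides above become $\gamma/2$, $\gamma$, and $\gamma+1$ when $\delta\in[1/2,1)$ (where $\floor{2\delta}=1$), and collapse to $\gamma/2$ and $\gamma$ with the $\gamma+1$-type vacuous when $\delta\in[1,3/2)$ (where $\floor{2\delta}=2$). Rewriting each single-variable constraint $\mu_e\geq c$ as a self-loop $2\mu_e\geq 2c$, the resulting LP coincides with the polytope $(\clubsuit)$ preceding \cref{genvc}, with $G_1$ collecting the constraints of value $\gamma+1$ and $G_2$ collecting those of value $\gamma$. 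Invoking \cref{genvc} produces an optimal LP solution in which every $\mu_e\in S_\gamma=\{0,\gamma/2,\gamma,(\gamma+1)/2,1\}$. Instantiating $T$ with these depths while keeping the topology fixed yields a nice tour~$T'$ with $\len(T')\leq\len(T)$, hence still a shortest \deltatour. Each non-vertex stopping point of $T'$ is a peek $p(u,v,\mu_e)$ with $u$ on the peek side; the extremal values $\mu_e=0$ and $\mu_e=1$ collapse to integral stopping points (fitting $S'_\delta$ with $\lambda=0$ after a suitable reorientation of the edge), while $\mu_e\in\{\gamma/2,\gamma,(\gamma+1)/2\}\subseteq S'_\delta$ produces the non-integral stopping points with the required coordinate. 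The ``moreover'' clause then follows from the niceness of $T'$, since every non-vertex stop arises from a peek $u\,p\,u$.

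The principal technical obstacle is verifying that every covering constraint really can be written with right-hand side in $\{\gamma/2,\gamma,\gamma+1\}$. The characterizations \cref{nostopchar,char1stop} permit covering witnesses at arbitrary graph distance from the uncovered edge, whereas the identification with $(\clubsuit)$ requires purely \emph{local} witnesses (peeks on incident edges or endpoints in $V_T$). A careful case analysis is thus needed to argue that, in a shortest nice \deltatour with $\delta<3/2$, the coverage of every edge can always be witnessed locally, so that the bounds $1-\delta$, $2-2\delta$, and $3-2\delta$ suffice and no larger combined-distance bounds appear.
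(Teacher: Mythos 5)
Your overall strategy is the paper's: keep the nice tour's combinatorial structure fixed, treat the peek depths as LP variables, and invoke \cref{genvc} to obtain an optimal solution with values in $\{0,\tfrac{\gamma}{2},\gamma,\tfrac{\gamma+1}{2},1\}$. However, there is a genuine gap in the step you yourself flag at the end, and it is not a technicality that a local case analysis can repair. Your constraint taxonomy assumes every edge of $G$ is covered \emph{locally}, i.e.\ by peeks on incident edges or by endpoints in $V_T$, giving right-hand sides only in $\{1-\delta,\,2-2\delta,\,3-2\delta\}$; and you declare the $\gamma+1$-type constraints vacuous for $\delta\in[1,3/2)$ and restrict the whole discussion to $\delta<3/2$. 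Both points fail. First, the lemma is claimed for all $\delta\ge\tfrac12$, and for $\delta\ge\tfrac32$ coverage witnesses are genuinely non-local (a pendant path of length exceeding $\delta$ forces a peek whose required depth is dictated by a point at distance about $\delta$ from it), so your LP would become trivial and wrongly allow all peeks to be set to $0$. Second, even for $\delta\in[1,3/2)$ the right-hand side $4-2\delta=\gamma+1$ does occur: take $\delta=1.2$ and an edge $x_1x_2$ with $\dist_G(x_1,V_T)=1$, $\dist_G(x_2,V_T)=2$, covered by a single peek of depth $\mu$ on the edge between $V_T$ and $x_1$; \cref{nostopchar} forces $2\mu\ge 4-2\delta$, i.e.\ $\mu\ge\tfrac12(\gamma+1)$ --- which is precisely why the element $\tfrac12(\floor{2\delta}+2-2\delta)$ sits in $S'_\delta$. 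An LP that omits these constraints has an ``optimum'' that is not a $\delta$-tour, so the discretized tour you produce need not cover $G$.

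The paper resolves exactly this difficulty by \emph{not} localizing: it stratifies the edges by $j=\dist_G(x_1,V_{T_0})+\dist_G(x_2,V_{T_0})-\floor{2\delta}$, shows that strata $j\ge 2$ cannot occur and strata $j\le -1$ are covered automatically whatever the new peek depths are, and associates to each edge in the two critical strata $j\in\{0,1\}$ a pair of (possibly distant) witness peek edges, yielding constraints with right-hand sides $\gamma$ and $\gamma+1$ uniformly for all $\delta\ge\tfrac12$; only then does \cref{genvc} apply. Your proposal would need this distance-stratified bookkeeping to be correct, so the ``principal technical obstacle'' you mention is in fact the heart of the proof rather than a verification step. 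A smaller but real omission: the degenerate case where the tour has at most two stopping points (all inside one edge) is not a ``direct optimization''; the two positions must still cover the whole graph, and the paper needs a second LP (again discretized via \cref{genvc}) to place them in $S'_\delta$.
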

\begin{proof}
Let $T_0$ be a shortest \deltatour in $G$. Further, let $k=\floor{2 \delta}$. For technical reasons, we need to distinguish two cases, with the second one being degenerate. 
\setcounter{Case}{0}
\begin{Case}\label{nondeg}
$T_0$ stops at a vertex of $V(G)$.
\end{Case}
By \cref{lem:tournice}, we may suppose that $T_0$ is nice.
For $j \in \mathbb{Z}$, let $E_j$ be the set of  edges $x_1x_2 \in E(G)$ with 
$\dist_G(x_1,V_{T_0})+\dist_G(x_2,V_{T_0})-k=j$. Now consider some $e=x_1x_2 \in E_0 \cup E_1$. Further, for $i \in 
[2]$, let $p_i=p(u_i,v_i,\lambda_i)$ be a stopping point of $T_0$ such that $\dist_G(x_i,p_i)=\dist_G(x_i,T_0)$. 
First suppose that $T_0$ stops at none of $x_1$ and $x_2$.  By Lemma~\ref{nostopchar}, we may suppose that 
$\lambda_1+\lambda_2 \geq \dist_G(x_1,v_1)+\dist_G(x_2,v_2 )+ 3-2 \delta$ hold and that $T_0$ stops at none of $v_1$ 
and $v_2$ by the choice of $p_1$ and $p_2$. We say that $(f_1,f_2)$ is associated to $e$ where $f_i=u_iv_i$ for $i 
\in [2]$. Now suppose that $T_0$ stops at exactly one of $x_1$ and $x_2$, say $x_1$. We obtain by 
Lemma~\ref{char1stop} that one of $(i)$ and $(ii)$ of Lemma~\ref{char1stop} holds. First suppose that $(i)$ holds, 
so $T_0$ stops at some points $p(x_1,x_2,\lambda_1)$ and $p(v,x_2,\lambda_2)$ with $\lambda_1+\lambda_2 \geq 2-2 
\delta$. We say that $(f_1,f_2)$ is associated to $e$ where $f_1=e$ and $f_2=vx_2$. We use $F$ for the set of edges 
contained in a pair associated to some $e \in E_0\cup E_1$ and for $j \in \{0,1\}$, we use $F_j$ for the set of 
edges contained in a pair associated to some $e \in E_j$. Further, let $F^*$ be the set of edges in $E_0 \cup E_1$ 
exactly one of whose endpoints is stopped at by $T_0$ and for which $(ii)$ of Lemma~\ref{char1stop} holds. Observe 
that for every $f \in F\cup F^*$, exactly one of its vertices is contained in $V_{T_0}$. We now consider the 
following linear program. For some $f=uv \in F \cup F^*$ with $u \in V_{T_0}$, we refer by $\lambda_f$ to the 
largest value such that $T_0$ stops $(u,v,\lambda_f)$. Observe that $\lambda_f$ is uniquely defined as $T_0$ is 
nice. We consider the following linear program.

\medskip

\begin{equation}\tag{$\ast$}
\boxed{\begin{gathered}
		\text{Minimize }\sum_{f \in F \cup F^*}z_f\text{, subject to the following:}\\
2z_{f}\geq 2-2 \delta \text{ if $f \in F^*$},\\
z_{f_1}+z_{f_2}\geq k+1-2 \delta \text{ if $(f_1,f_2)$ is associated to some $e \in E_0$},\\
z_{f_1}+z_{f_2}\geq k+2-2 \delta \text{ if $(f_1,f_2)$ is associated to some $e \in E_1$},\\
z_f \in [0,1] \text{ for all $f \in F\cup F^*$.}
	\end{gathered}
}
\end{equation}
\begin{claim}\label{lambdafeas}
$(\lambda_f\colon f \in F)$ is a feasible solution for $(\ast)$.
\end{claim}
\begin{claimproof}
By definition, we have $\lambda_f\in [0,1]$ for all $f \in F$. Now let $(f_1,f_2)$ be a pair associated to some $e \in E_j$ for some $j \in \{0,1\}$ with $f_i=u_iv_i$ with $u_i \in V_{T_0}$. By the choice of $(f_1,f_2)$, we have $\lambda_{f_1}+\lambda_{f_2} \geq \dist_G(x_1,v_1)+\dist_G(x_2,v_2)+3-2\delta\geq \dist_G(x_1,V_{T_0})+\dist_G(x_2,V_{T_0})+1-2 \delta=(k+j+1)-2 \delta$. Finally, for every $f \in F^*$, we have $z_f \geq 1-\delta$ by definition.
\end{claimproof}
In the following, for every $f=uv \in F \cup F^*$ with $u \in V_{T_0}$, let $S_f$ be defined by $\seq{u&p(u,v,\lambda_f)&u}$ if $\lambda_f>0$ and by $S_f=\seq{u}$ if $\lambda_f=0$. Observe that $S_f$ is a segment of $T_0$ for all $f \in F \cup F^*$ as $T_0$ is nice. Now let $z=(z_f\colon f \in F \cup F^*)$ be a vector in $[0,1]^{F\cup F^*}$. We  let $S^z_f$ be defined by $S^z_f=up(u,v,z_f)u$ if $z_f>0$ and by $S'_f=u$ if $z_f=0$. We further let the tour~$T^z$ be the tour obtained from $T_0$ by, for all $f=uv \in F\cup F^*$ with $u \in V_{T_0}$, replacing the sequence $S_f$ by $S^z_f$.
\begin{claim}\label{feaslambda}
Let $(z_f\colon f \in F \cup F^*)$ be a feasible solution for $(\ast)$. Then  $T^z$  is a nice \deltatour in $G$. 
\end{claim}

\begin{claimproof}
As $T_0$ is nice and stops at most one vertex of every $f \in F \cup F^*$ and by construction, we obtain that $T^z$ is nice.

Let $e=x_1x_2 \in E(G)$. First suppose that $T_0$ stops at both $x_1$ and $x_2$. Then $T^z$ stops at both $x_1$ and $x_2$. As $T^z$ is nice, by Lemma~\ref{char2stops}, we obtain that $e$ is covered by $T^z$. We may hence in the following suppose that $T_0$ stops at at most one of $x_1$ and $x_2$. Let $j=\dist_G(x_1,V_{T_0})+\dist_G(x_2,V_{T_0})-k$.

 First suppose that $j \geq 2$. As $k \geq 0$, we have $\dist_G(x_1,V_{T_0})+\dist_G(x_2,V_{T_0})\geq 2$, so in 
 particular, we obtain that $T_0$ stops at none of $x_1$ and $x_2$. For $i \in \{1,2\}$, let 
 $p_i=p(u_i,v_i,\lambda_i)$ be stopping points of $T_0$ with $\lambda_i \in [0,1)$. We then have 
 $\lambda_1+\lambda_2 < 2\leq \dist_G(x_1,V_{T_0})+\dist_G(x_2,V_{T_0})-k< 
 \dist_G(x_1,v_1)+\dist_G(x_2,v_2)+3-2\delta$. By Lemma~\ref{nostopchar}, this contradicts $T_0$ being a \deltatour.
 
  Now suppose that $j \leq -1$. If $T_0$ stops at one of $x_1$ and $x_2$, say $x_1$, we have 
  $k\geq\dist_G(x_1,V_{T_0})+\dist_G(x_2,V_{T_0})+1=2$. As $k=\floor{2 \delta}$, we obtain that $\delta\geq 1$ and 
  hence $0 \geq 1-\delta$. Since $T^z$ stops at $x_1$, it follows that $(ii)$ of Lemma~\ref{char1stop} holds, so $e$ 
  is covered by $T^z$. Now suppose that $T_0$ stops at none of $x_1$ and $x_2$. For $i \in \{1,2\}$, let $v_i \in 
  V_{T_0}$ with $\dist_G(v_i,x_i)=\dist_G(x_i,V_{T_0})$ and observe that $T$ stops at $v_i$. Further, let $w_i$ be 
  the first vertex on a shortest $v_ix_i$-path in $G$. We have $0+0= 0 = \dist_G(x_1,v_1)+\dist_G(x_2,v_2)-(k+j)\geq 
  \dist_G(x_1,w_1)+\dist_G(x_2,w_2)+3-k\geq \dist_G(x_1,w_1)+\dist_G(x_2,w_2)+3-2 \delta$. As $T^z$ stops at $v_1$ 
  and $v_2$, it follows that $e$ is covered by $T^z$ by Lemma~\ref{nostopchar}.
  
   We may hence suppose that $j \in \{0,1\}$. If $e \in F^*$, we obtain $z_e \geq 1-\delta$, so $e$ is covered by 
   $T^z$ by Lemma~\ref{char1stop} $(ii)$. Otherwise, let $(f_1,f_2)$ be the pair associated to $e$ such that 
   $f_i=u_iv_i$ with $u_i \in V_{T_0}$ for $i \in \{1,2\}$. As $(z_f\colon f \in F)$ is a feasible solution for $(\ast)$, 
   we have $z_{f_1}+z_{f_2}\geq k+j+1-2 
   \delta=\dist_G(x_1,V_{T_0})+\dist_G(x_2,V_{T_0})+1-2\delta=\dist_G(x_1,v_1)+\dist_G(x_2,v_2)+3-2\delta$. We now 
   obtain that $T^z$ is a \deltatour by Lemmas~\ref{nostopchar},~\ref{char1stop}, and~\ref{char2stops}. 
\end{claimproof}
We are now ready to conclude Case~\ref{nondeg}. By Lemma~\ref{genvc}, there is an optimal solution $z^*=(z^*_f\colon f \in F \cup F^*)$ of $(\ast)$ such that $z^*_f\in S'_\delta$ for all $f \in F \cup F^*$. By Claim~\ref{feaslambda}, it follows that $T^{z^*}$ is a nice $\delta$-tour in $G$. It then follows from Claim~\ref{lambdafeas} that $T^{z^*}$ is a nice shortest \deltatour in $G$. 
\begin{Case}\label{degen}
$T_0$ does not stop at a vertex in $V(G)$.
\end{Case}
As $T_0$ is a nice tour, we obtain that $T_0$ has at most two stopping points and there is a unique edge $uv \in E(G)$ such that all stopping points of $T$ are on $uv$.

If $\delta<1$, then for any $w \in V(G)-\{u,v\}$, we have $\dist_G(w,T_0)>\dist_G(w,\{u,v\})\geq 1>\delta$. It hence follows that $V(G)=\{u,v\}$. We obtain that the tour consisting of $p(u,v,\frac{1}{2})$ is a \deltatour in $G$ which is obviously a shortest one. In the following, we may hence suppose that $\delta\geq 1$.

For $j \in \mathbb{Z}$, let $E_j$ be the set of  edges $xy \in E(G)$ with 
$\dist_G(x,\{u,v\})+\dist_G(y,\{u,v\})-k=j$. Let $\lambda_u$ be the largest real such that $p_u=p(u,v,\lambda_u)$ is 
a stopping point of $T_0$ and let $\lambda_v$ be the largest real such that $p_v=p(v,u,\lambda_v)$ is a stopping 
point of $T_0$. For every $w \in V(G)-\{u,v\}$, we define $\alpha(w)=u$ if $\dist_G(w,T_0)=\dist_G(w,p_u)$ and 
$\alpha(w)=v$, otherwise. We now consider the following linear program.
\begin{equation}\tag{$\Box$}
\boxed{\begin{gathered}
		\text{Minimize }z_u+z_v\text{, subject to the following:}\\
z_{\alpha(x)}+z_{\alpha(y)}\geq k+1-2 \delta \text{ for every $xy \in E_{-2}$},\\
z_{\alpha(x)}+z_{\alpha(y)}\geq k+2-2 \delta \text{ for every $xy \in E_{-1}$},\\
z_u,z_v \in [0,1].
	\end{gathered}}
\end{equation}

\begin{claim}\label{lambdafeas2}
$(\lambda_u,\lambda_v)$ is a feasible solution for $(\Box)$.
\end{claim}
\begin{claimproof}
By definition, we have $\lambda_u,\lambda_v\in [0,1]$. Now consider some $e=xy \in E_j$ for some $j \in \{-2,-1\}$. By the definition of $\alpha$ and Lemma~\ref{nostopchar}, we have $\lambda_{\alpha(x)}+\lambda_{\alpha(y)} \geq \dist_G(x,\alpha(x))+\dist_G(y,\alpha(y))+3-2 \delta=\dist_G(x,\{u,v\})+\dist_G(y,\{u,v\})+3-2 \delta=(k+j+3)-2 \delta$. 
\end{claimproof}
\begin{claim}\label{1point}
Let $(z_u,z_v)$ be a feasible solution for $(\Box)$ with $z_u+z_v \leq 1$. Then the tour~$T$ consisting of $\seq{p(u,v,z_u)}$ is a $\delta$-tour in $G$.
\end{claim}
\begin{claimproof}
Let $z'_u=z_u$ and $z'_v=1-z_u$. Let $xy \in E(G)$ and let $j=\dist_G(x,\{u,v\})+\dist_G(y,\{u,v\})-k$.  If $T$ stops at one of $x$ and $y$, we have that Lemma~\ref{char1stop} $(ii)$ holds as $0 \geq 1-\delta$. We may hence suppose that $T$ stops at none of $x$ and $y$. First suppose that $j \geq 0$. This yields $\lambda_{\alpha(x)}+\lambda_{\alpha(y)}\leq 2\leq j+2=\dist_G(x,\{u,v\})+\dist_G(y,\{u,v\})-k+2<\dist_G(x,\{u,v\})+\dist_G(x,\{u,v\})+3-2 \delta$, a contradiction to $T_0$ being a $\delta$-tour in $G$ by Lemma~\ref{nostopchar}. We may hence suppose that $j \leq -1$. If $j \leq -3$, we have $z'_{\alpha(x)}+z'_{\alpha(y)}\geq z_{\alpha(x)}+z_{\alpha(y)}\geq 0+0=0\geq j+3=\dist_G(x,\{u,v\})+\dist_G(y,\{u,v\})-k+3\geq \dist_G(x_1,\alpha(x))+\dist_G(y,\alpha(y))+3-2 \delta$, so $e$ is covered by $T$ by Lemma~\ref{nostopchar}. We may hence suppose that $j \in \{-2,-1\}$.

As $z$ is feasible for $(\Box)$,we obtain $z'_{\alpha(x)}+z'_{\alpha(y)}\geq z_{\alpha(x)}+z_{\alpha(y)}\geq k+(j+3)-2 \delta\geq \dist_G(x,\{u,v\})+\dist_G(y,\{u,v\})+3-2 \delta$. It follows by Lemma~\ref{nostopchar} that $e$ is covered by $T$.
\end{claimproof}
\begin{claim}\label{2points}
Let $(z_u,z_v)$ be a feasible solution for $(\Box)$ with $z_u+z_v \geq 1$. Then the tour~$p(u,v,z_u)p(v,u,z_v)p(u,v,z_u)$ is a $\delta$-tour in $G$.
\end{claim}

\begin{claimproof}
Let $xy \in E(G)$ and let $j=\dist_G(x,\{u,v\})+\dist_G(y,\{u,v\})-k$.  If $T$ stops at one of $x$ and $y$, we have that Lemma~\ref{char1stop} $(ii)$ holds as $0 \geq 1-\delta$. We may hence suppose that $T$ stops at none of $x$ and $y$. First suppose that $j \geq 0$. This yields $\lambda_{\alpha(x)}+\lambda_{\alpha(y)}\leq 2\leq j+2=\dist_G(x,\{u,v\})+\dist_G(y,\{u,v\})-k+2<\dist_G(x,\{u,v\})+\dist_G(x,\{u,v\})+3-2 \delta$, a contradiction to $T_0$ being a $\delta$-tour in $G$ by Lemma~\ref{nostopchar}. We may hence suppose that $j \leq -1$. If $j \leq -3$, we have $z_{\alpha(x)}+z_{\alpha(y)}\geq 0+0=0\geq j+3=\dist_G(x,\{u,v\})+\dist_G(y,\{u,v\})-k+3\geq \dist_G(x,\alpha(x))+\dist_G(y,\alpha(y))+3-2 \delta$, so $e$ is covered by $T$ by Lemma~\ref{nostopchar}. We may hence suppose that $j \in \{-2,-1\}$.

As $z$ is feasible for $(\Box)$, we obtain $ z_{\alpha(x)}+z_{\alpha(y)}\geq k+(j+3)-2 \delta\geq \dist_G(x,\{u,v\})+\dist_G(y,\{u,v\})+3-2 \delta$. It follows that $e$ is covered by $T$.
\end{claimproof}

We are now ready to finish the proof of Case~\ref{degen}. By Lemma~\ref{genvc}, we have that there is an optimal solution $z^*=(z_u^*,z_v^*)$ of $(\Box)$ with $z_u^*,z_v^* \in S'_\delta$. If $z_u^*+z_v^*\leq 1$, let $T^*$ be the tour consisting of $p(u,v,z_u^*)$. By Claim~\ref{1point}, it follows that $T^*$ is a \deltatour in $G$. Clearly, it follows that $T^*$ is a shortest $\delta$-tour in $G$.

If $z_u^*+z_v^*\leq 1$, let $T^*=\seq{p(u,v,z_u^*)&p(v,u,z_v^*)&p(u,v,z_u^*)}$. By Claim~\ref{2points}, we obtain that $T^*$ is a \deltatour in $G$. It follows by Claim~\ref{lambdafeas2} that $T^*$ is a shortest $\delta$-tour in $G$.
\end{proof}

We finally give a somewhat stronger result for $\delta<1/2$. Its proof is elementary.

\begin{lemma}\label{discreteklein}
Let $\delta \in [0,1/2]$ and let $G$ be a connected graph. Then there is a shortest \deltatour~$T$ that either consists of only two stopping points or is nice and such that for every $x_1x_2 \in E(G)$, one of the following holds:
\begin{enumerate}[(a)]
\item $V(G)=\{x_1,x_2\}$ and \[
    T= 
\begin{cases}
    p(x_1,x_2,\delta)p(x_1,x_2,1-\delta)p(x_1,x_2,\delta),& \text{if } \delta<\frac{1}{2}\\
    p(x_1,x_2,\frac{1}{2}),              & \text{if $\delta=\frac{1}{2}$,}
\end{cases}
\]
\item $\deg_G(x_i)=1$ and $\deg_G(x_{3-i})\geq 2$ for some $i \in [2]$, and $T$ contains the tour segment~$\seq{x_{3-i}&p(x_{3-i},x_i,1-\delta)&x_{3-i}}$ and does not stop at any other points on $x_1x_2$,
\item $\min\{\deg_G(x_1),\deg_G(x_2)\}\geq 2$, and $T$ traverses $x_1x_2$,
\item $\min\{\deg_G(x_1),\deg_G(x_2)\}\geq 2$, $T$ contains the two tour segments\\
 \[
 \seq{x_{3-i}}\text{ and }
 \begin{cases}{\seq{x_{i}&p(x_{i},x_{3-i},1-2\delta)&x_{i}}}, &\text{if $\delta<\frac{1}{2}$}\\{\seq{x_i}}, &\text{if $\delta=\frac{1}{2}$}\end{cases},
 \]
 and $T$ does not stop at any other points on $x_1x_2$.
\end{enumerate}
\begin{proof}
First suppose that $G$ admits a \deltatour $T$ that consists of at most two stopping points. It follows directly from Lemmas \ref{char1stop} and \ref{char2stops} and the fact that $\delta \leq \frac{1}{2}$ that $E(G)$ consists of a single edge $uv$. It is easy to see that $(a)$ holds for this edge when $T$ is a shortest tour.

We may hence suppose that every \deltatour of $G$ consists of at least 3 stopping points. Hence, by \cref{lem:tournice}, we may suppose that $T$ is nice. It follows that $|V(G)|\geq 3$.
Among all nice shortest tours of $G$, we choose $T$ to be one which minimizes the number of edges for which none of $(b),(c)$, and $(d)$ hold. We will show that one of $(b),(c)$, and $(d)$ holds for $T$ for every edge in $E(G)$.

First consider an edge $x_1x_2\in E(G)$ with $\deg(x_1)=1$ and $\deg(x_2)\geq 2$. By Lemma~\ref{nostopchar} and as $T$ is a tour, we obtain that $T$ stops at $x_2$. We obtain, as $\deg(x_1)=1$ and as $T$ is nice tour that Lemma~\ref{char1stop} $(ii)$ with $i=2$ holds for $x_1x_2$. Hence $T$ contains the segment~$\seq{x_2&p(x_2,x_1,\lambda)&x_2}$ for some $\lambda \in [1-\delta,1]$. If $\lambda>1-\delta$, let $T'$ be obtained from $T$ by replacing $p(x_2,x_1,\lambda)$ by $p(x_2,x_1,1-\delta)$. It follows that $x_1x_2$ satisfies Lemma~\ref{char1stop} $(ii)$ for $T'$. Further, clearly every $e \in E(G)-x_1x_2$ satisfies one of $(b),(c)$, and $(d)$ for $T'$. As $T$ is nice, so is $T'$. It follows by Lemma~\ref{char2stops} that $T'$ is a \deltatour in $G$. As $\len(T')<\len(T)$, we obtain a contradiction to the choice of $T$.

Next suppose for the sake of a contradiction that there is some $x \in V(G)$ with $\deg_G(x)\geq 2$ that is not stopped at by $T$. Let $Y=N_G(x)$. By the above, we obtain that $\deg_G(y)\geq 2$ for all $y \in Y$. We obtain by Lemma~\ref{char1stop} that $yx$ satisfies one of $(i)$ and $(ii)$ for $T$ for all $y \in Y$. As $T$ is nice, this yields that $T$ contains the segment~$\seq{y&p_y&y}$ for every $y \in Y$ where $p_y=p(y,x,\lambda_y)$ for some $\lambda_y\in (0,1)$. We now choose $y_1 \in Y$ such that $\lambda_{y_1}=\max\{\lambda_y\colon y \in Y\}$ and some arbitrary $y_2 \in Y-y_1$. Observe that $y_2$ is well-defined as $\deg(x)\geq 2$. If $y_2x$ satisfies $(i)$ for $T$, we obtain that $\lambda_{y_1}+\lambda_{y_2}\geq 2-2\delta$ by the choice of $y_1$. If $y_2x$ satisfies $(ii)$ for $T$, by the choice of $y_1$, we obtain $\lambda_{y_1}\geq \lambda_{y_2}\geq 1-\delta$. In either case, we have $\lambda_{y_1}+\lambda_{y_2}\geq 2-2\delta$. Now let $T'$ be obtained from $T$ by replacing $p_{y_1}$ by $x$ and replacing $y_2p_{y_2}y_2$ by $y_2p(y_2,x,1-2 \delta)y_2$ if $\delta<\frac{1}{2}$ and by $y_2$ if $\delta=\frac{1}{2}$. Observe that $xy_1$ satisfies Lemma~\ref{char2stops} $(i)$ for $T'$ and $xy_2$ satisfies Lemma~\ref{char2stops} $(ii)$ or $(iii)$ for $T'$. Now consider some $y_3 \in Y-\{y_1,y_2\}$. If $y_3x$ satisfies Lemma~\ref{char1stop} $(ii)$ for $T$, we have $\lambda_{y_3}\geq 1-\delta\geq 1-2 \delta$. If $y_3x$  satisfies Lemma~\ref{char1stop} $(i)$ for $T$, by the choice of $y_1$ and $\lambda_{y_1}<1$, we have $\lambda_{y_3}\geq 2-2\delta-\lambda_{y_1}\geq 1-2 \delta$. This yields that $y_3x$ satisfies Lemma~\ref{char2stops} $(iii)$ for $T'$. Finally, clearly any $e \in E(G)$ which is not incident to $x$ is covered by $T'$. Further, as $T$ is nice, so is $T'$. It hence follows by Lemma~\ref{char2stops} that $T'$ is a \deltatour in $G$. Further, we have $\len(T')-\len(T)=2(1-\lambda_{y_1})-2(\lambda_{y_2}-(1-2\delta))=2(2-2\delta-(\lambda_1+\lambda_2))\leq 0$, so $T'$ is a shortest \deltatour.  Finally, $y_1x$ satisfies $(b)$ for $T'$ and none of $(b),(c)$, and $(d)$ for $T$. As every $e \in E(G)$ that satisfies one of $(b),(c)$, and $(d)$ for $T$, also satisfies one of $(b),(c)$, and $(d)$ for $T'$, we obtain a contradiction to the choice of $T$. Hence $T$ stops at every $x \in V(G)$ with $\deg(x)\geq 2$.

In particular, by Lemma~\ref{char2stops}, every $x_1x_2 \in E(G)$ with $\min\{\deg_G(x_1),\deg_G(x_2)\}\geq 2$ satisfies $(i),(ii)$ or $(iii)$. Consider some  $x_1x_2 \in E(G)$ with $\min\{\deg_G(x_1),\deg_G(x_2)\}\geq 2$. If $x_1x_2$ satisfies $(i)$, it also satisfies $(c)$, so there is nothing to prove. If $x_1x_2$ satisfies $(ii)$ and not $(iii)$, then, as $T$ is nice, we obtain that $T$ stops at $x_1$ and $x_2$ and at no other points on $x_1x_2$. Then $(d)$ holds. We may hence suppose that $x_1x_2$ satisfies $(iii)$. By symmetry, we may suppose that $T$ contains the segments~$\seq{x_1&p(x_1,x_2,\lambda)&x_1}$ for some $\lambda \in [1-2\delta,1)$ and $x_2$. If $\lambda>1-2\delta$, then let $T'$ be obtained from $T$ by replacing $x_1p(x_1,x_2,\lambda)x_1$ by $x_1p(x_1,x_2,1-2\delta)x_1$ if $\delta<\frac{1}{2}$ and by $x_1$ if $\delta=\frac{1}{2}$. Clearly, $x_1x_2$ satisfies Lemma~\ref{char2stops} $(iii)$. Further, as $T$ is nice, so is $T'$. Hence, as $T$ is a \deltatour and by Lemma~\ref{char2stops}, we have that $T'$ is a \deltatour in $G$. As $\len(T')<\len(T)$, we obtain a contradiction to the choice of $T$. It hence follows that $\lambda=1-2\delta$ and so $x_1x_2$ satisfies $(c)$.

We hence obtain that every $e \in E(G)$ satisfies one of $(a)$, $(b)$, $(c)$, and $(d)$ and so the statement holds for~$T$.
\end{proof}
\end{lemma}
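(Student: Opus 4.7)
The plan is to combine the niceness reduction of \cref{lem:tournice} with the local characterizations of \cref{nostopchar}, \cref{char1stop}, and \cref{char2stops}, using a minimality argument over bad edges. First, if there exists a shortest \deltatour with at most two stopping points, then a direct application of \cref{char1stop} and \cref{char2stops} combined with the constraint $\delta \leq 1/2$ will force $G$ to consist of the single edge $x_1x_2$; a case distinction on whether $\delta < 1/2$ or $\delta = 1/2$ then yields precisely case (a). So from now on I assume every shortest \deltatour has at least three stopping points, and by \cref{lem:tournice} I may pick a nice shortest \deltatour. Among all such tours, choose $T$ to minimize the number of edges for which none of (b), (c), (d) holds, and aim to show this count is zero.

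For an edge $x_1x_2$ with $\deg_G(x_1) = 1$ and $\deg_G(x_2) \geq 2$, \cref{nostopchar} cannot apply (there is no ``other side'' of $x_1$ to serve as $p_1$), so $T$ must stop at $x_2$. Then \cref{char1stop} forces $T$ to contain a segment $\seq{x_2 & p(x_2, x_1, \lambda) & x_2}$ with $\lambda \geq 1-\delta$; if $\lambda > 1-\delta$, truncating this peek to exactly $1-\delta$ yields a strictly shorter nice \deltatour (covering of $x_1x_2$ being preserved by \cref{char1stop}(ii) and all other edges unaffected), contradicting the shortness of $T$. Hence every such edge satisfies (b).

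Next I would prove that $T$ stops at every vertex $x$ with $\deg_G(x) \geq 2$. Assume otherwise; by the previous paragraph every neighbor $y$ of $x$ is a non-leaf. Applying \cref{char1stop} to each edge $yx$ (with $x$ as the unstopped endpoint) yields, since $T$ is nice, peek segments $\seq{y & p(y, x, \lambda_y) & y}$. Pick $y_1$ maximizing $\lambda_{y_1}$ and any other neighbor $y_2$; either case of \cref{char1stop} gives $\lambda_{y_1} + \lambda_{y_2} \geq 2-2\delta$. I would then replace $p_{y_1}$ with $x$ and shorten the $y_2$-peek to depth $1-2\delta$ (or remove it if $\delta = 1/2$), producing a tour $T'$ that is still nice, has length $\len(T') \leq \len(T)$ by the inequality, and whose coverage of each edge incident to $x$ is verified via \cref{char2stops}. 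The key point is that for any third neighbor $y_3$, the maximality of $\lambda_{y_1}$ combined with \cref{char1stop} forces $\lambda_{y_3} \geq 1 - 2\delta$, so $xy_3$ satisfies \cref{char2stops}(iii) under $T'$. Thus $T'$ is also a nice shortest \deltatour and $xy_1$ now satisfies (b), strictly decreasing the count of bad edges and contradicting the choice of $T$.

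With every non-leaf vertex stopped at, \cref{char2stops} classifies each remaining edge $x_1x_2$ between non-leaves into (i) traversal, yielding case (c); (ii) the degenerate $\delta \geq 1/2$ situation with no peek, yielding case (d) for $\delta = 1/2$; or (iii) a peek $\seq{x_i & p(x_i, x_{3-i}, \lambda) & x_i}$ with $\lambda \geq 1-2\delta$, which after a further truncation of $\lambda$ down to $1-2\delta$ (or removal if $\delta = 1/2$) gives case (d). The main obstacle I anticipate is the rebalancing step in the degree-$\geq 2$ argument: one must keep track of \emph{all} edges incident to $x$ simultaneously and confirm that the simultaneous modifications at $y_1$ and $y_2$ do not break the covering guarantee at some $y_3$; handling this cleanly requires separating the cases where $y_3x$ previously satisfied \cref{char1stop}(i) versus (ii) and using the maximality of $\lambda_{y_1}$ in each.
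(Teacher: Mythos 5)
Your proposal is correct and follows essentially the same route as the paper's proof: the two-stopping-point case via the characterization lemmas, then a nice shortest tour minimizing the number of edges violating (b)--(d), the truncation argument for leaf edges, the exchange at an unstopped vertex of degree at least two (replacing the deepest peek $p_{y_1}$ by $x$ and rebalancing the $y_2$-peek to depth $1-2\delta$, with the $y_3$-edges handled via maximality of $\lambda_{y_1}$), and finally the classification of edges between stopped vertices via \cref{char2stops} with one more truncation. The only difference is presentational: the paper phrases the final truncation as a contradiction to shortness rather than as a modification, which is the same argument.
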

We are now ready to conclude \Cref{lemma:discretization}.

\begin{proof}[Proof of \Cref{lemma:discretization}]
Let $G$ be a connected graph and $\delta \geq 0$ a constant. Further, let $S_{G,\delta}$ be the set of points that can be expressed as $p(u,v,\lambda)$ for some $uv \in E(G)$ and $\lambda \in S_\delta$. It suffices to prove that there is a shortest \deltatour in $G$ all of whose stopping points are in $S_{G,\delta}$.

First suppose that $\delta \geq \frac{1}{2}$. It follows from \cref{lem:discretization:gt:one} that there is a shortest \deltatour of $G$ that is either nice or contains at most two stopping points and such that all stopping points of $T$ can be expressed in the form $p(u,v,\lambda)$ for some $uv \in E(G)$ and $\lambda \in S_\delta'$. 

Let $p=p(u,v,\lambda)$ be a stopping point of $T$ with $\lambda \in S_\delta'$. If $\lambda=0$, we clearly have $p \in S_{G,\delta}$. If $\lambda=\frac{1}{2}(\lfloor 2 \delta\rfloor+1-2 \delta)$ and $\delta-\lfloor \delta\rfloor< \frac{1}{2}$, we have $1-\lambda=\frac{1}{2}+\delta-\frac{1}{2}\lfloor 2 \delta\rfloor=\frac{1}{2}+\delta-\lfloor \delta\rfloor=\frac{1}{2}+\delta-\lfloor \delta+\frac{1}{2}\rfloor$, so $p \in S_{G,\delta}$. If $\lambda=\frac{1}{2}(\lfloor 2 \delta\rfloor+1-2 \delta)$ and $\delta-\lfloor \delta\rfloor\geq  \frac{1}{2}$, we have $1-\lambda=\frac{1}{2}+\delta-\frac{1}{2}\lfloor 2 \delta\rfloor=\frac{1}{2}+\delta-(\lfloor \delta\rfloor+\frac{1}{2})=\delta-\lfloor  \delta\rfloor$, so $p \in S_{G,\delta}$.

If $\lambda=\lfloor 2 \delta\rfloor+1-2 \delta$, we have $1-\lambda=2\delta-\lfloor 2\delta\rfloor$, so $p \in S_{G,\delta}$.

If $\lambda=\frac{1}{2}(\lfloor 2 \delta\rfloor+2-2 \delta)$ and $\delta-\lfloor \delta\rfloor< \frac{1}{2}$, we have $1-\lambda=\delta-\frac{1}{2}\lfloor 2 \delta\rfloor=\delta-\lfloor \delta\rfloor$, so $p \in S_{G,\delta}$. If $\lambda=\frac{1}{2}(\lfloor 2 \delta\rfloor+2-2 \delta)$ and $\delta-\lfloor \delta\rfloor\geq  \frac{1}{2}$, we have $1-\lambda=\delta-\frac{1}{2}\lfloor 2 \delta\rfloor=\delta-(\lfloor \delta\rfloor+\frac{1}{2})=\frac{1}{2}+\delta-(\lfloor \delta\rfloor+1)=\frac{1}{2}+\delta-(\lfloor \delta+\frac{1}{2}\rfloor)$, so $p \in S_{G,\delta}$.
\medskip

Now suppose that $\delta < \frac{1}{2}$. It follows from \cref{discreteklein} that there is a shortest \deltatour of $G$ that is either nice or contains at most 2 stopping points and such that all stopping points of $T$ can be expressed in the form $p(u,v,\lambda)$ for some $uv \in E(G)$ and $\lambda \in \{\delta,1-\delta,1-2\delta\}$. 

Let $p=p(u,v,\lambda)$ be a stopping point of $T$ with $\lambda \in S_\delta'$. If $\lambda=\delta$, we have $\lambda=\delta-\lfloor\delta \rfloor$, so $p \in S_{G,\delta}$. If $\lambda=1-\delta$, we have $1-\lambda=\delta-\lfloor\delta \rfloor$, so $p \in S_{G,\delta}$. If $\lambda=1-2\delta$, we have $1-\lambda=2 \delta=2\delta-\lfloor2\delta \rfloor$, so $p \in S_{G,\delta}$.
\end{proof}

\subsection{Immediate Consequences}\label{algosec}
In this section, we conclude~\cref{section:structural} with two useful algorithmic corollaries. 
The following result is an immediate consequence of Lemmas~\ref{nostopchar},~\ref{char1stop}, \cref{lem:tournice} and~\ref{char2stops}.
\begin{corollary}\label{check}
Given a graph~$G$, a constant $\delta$ and a tour~$T$ in $G$, we can decide in polynomial time whether $T$ is a \deltatour.
\end{corollary}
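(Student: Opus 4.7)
The plan is to test $\delta$-coverage of $T$ edge by edge. For each edge $e = x_1 x_2 \in E(G)$, let $k = |V_T \cap \{x_1, x_2\}|$, where $V_T$ is the (polynomial-sized) set of vertices at which $T$ stops. Depending on whether $k = 0$, $k = 1$, or $k = 2$, I would apply \cref{nostopchar}, \cref{char1stop}, or \cref{char2stops}, respectively. Each of these lemmas characterizes coverage of $e$ by a finite disjunction of conditions on a bounded number of stopping points of $T$; every such condition can be verified in polynomial time once all-pairs shortest-path distances in $G$ are precomputed. As there are polynomially many edges and polynomially many stopping points, the full check runs in polynomial time.

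The main obstacle is that \cref{char2stops} requires the tour to be nice, while the given $T$ need not be. To overcome this, I would first apply \cref{lem:tournice} to replace $T$ by a tour $T^*$ that is either nice or has at most two stopping points. In the degenerate latter case, $G$ must essentially consist of the single edge on which $T^*$ stops, and the check reduces to a constant-size inspection. Otherwise, $T^*$ is nice, so \cref{char2stops} applies uniformly to every edge.

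The remaining step is to argue that the verification on $T^*$ yields the correct answer for $T$. By \cref{lem:tournice}, if $T$ is a $\delta$-tour then so is $T^*$, hence a failure of the edge-by-edge test on $T^*$ implies that $T$ is not a $\delta$-tour. For the converse direction, rather than relying on a one-sided implication, I would additionally run a direct check on $T$ itself in each edge with $k = 2$: compute the union of $\delta$-covered subintervals of $e$—namely $[0, \delta]$ from $x_1$, $[1-\delta, 1]$ from $x_2$, the $\delta$-neighborhoods of stopping points of $T$ lying inside $e$, and the subintervals of $e$ traversed by $T$—and test whether this union equals $[0, 1]$. This direct check is essentially \cref{char2stops} unpacked without the niceness hypothesis, runs in polynomial time, and together with the applications of \cref{nostopchar} and \cref{char1stop} in the $k \leq 1$ cases completes the algorithm.
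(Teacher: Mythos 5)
Your proposal is correct and follows essentially the same route as the paper, which dispenses with a written proof and declares the corollary an immediate consequence of \cref{nostopchar}, \cref{char1stop}, \cref{char2stops} and \cref{lem:tournice} -- exactly the edge-by-edge check you describe, split according to how many endpoints of the edge the tour stops at. Your direct interval-union check for edges with both endpoints stopped at is a sound way to discharge the niceness hypothesis of \cref{char2stops} on the \emph{given} tour (a point the paper glosses over, since \cref{lem:tournice} only guarantees that niceness-preserving modifications keep the $\delta$-tour property in one direction); with it, the detour through $T^*$ becomes redundant, and your side claim that a two-stop $T^*$ forces $G$ to be essentially a single edge is false for large $\delta$ (see \cref{fig:example2}), but since your final decision rests on the direct checks applied to $T$ itself, this does not affect correctness.
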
 

We further obtain the following result.
\begin{corollary}\label{decide}
Given a graph~$G$ and a constant $\delta>0$, there is an algorithm that computes a shortest \deltatour in $G$ and runs in $f(n)$.
\end{corollary}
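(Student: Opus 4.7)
The plan is to combine the Discretization Lemma with the polynomial-time validity checker of \cref{check} to obtain a brute-force algorithm over a finite set of candidate tours, so that the algorithm is guaranteed to terminate within a computable number of steps depending only on $n$.

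First, I would invoke \cref{lemma:discretization} to guarantee that there exists a shortest \deltatour $T^*$ that is either nice or consists of at most two stopping points, and whose every stopping point lies in the set $\mathcal{P} := \{p(u,v,\lambda) \mid uv \in E(G),\ \lambda \in S_\delta\}$. Since $|S_\delta| \leq 4$, we have $|\mathcal{P}| \leq 4|E(G)| = O(n^2)$. The degenerate case of at most two stopping points is easy: there are only $O(|\mathcal{P}|^2) = O(n^4)$ such sequences, and each can be checked with \cref{check}.

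Next, I would bound the discrete length $\alpha(T^*)$ in the nice case. Unpacking the definition of a nice tour, each edge of $G$ contributes only a bounded number of positions to the stopping-point sequence: any edge is traversed at most twice (accounting for at most $4$ consecutive positions altogether), each edge contains at most one interior stopping point, and any interior stop is sandwiched between two copies of the same adjacent vertex. A direct counting argument then gives $\alpha(T^*) \leq c\,|E(G)| = O(n^2)$ for some small constant $c$.

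The algorithm then enumerates every sequence of length at most this $O(n^2)$ bound using points from $\mathcal{P}$, applies \cref{check} to each candidate to test in polynomial time whether it is a valid \deltatour, and returns a shortest valid one found (including the trivial candidates of length at most two). The number of sequences inspected is at most $|\mathcal{P}|^{O(n^2)} = n^{O(n^2)}$, and each is processed in time polynomial in $n$, so the total running time is bounded by a computable function~$f(n)$, establishing the claim.

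The only step that is not an immediate invocation of previously established results is the length bound on $\alpha(T^*)$, and this is routine since it follows directly from the niceness properties. The main ``obstacle'', such as it is, is simply being careful that the enumeration uses the correct discretized point set from \cref{lemma:discretization} and that the checker from \cref{check} is indeed applicable to arbitrary candidate sequences of stopping points, not just to tours arising from some structural construction.
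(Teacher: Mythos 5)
Your proposal is correct and follows essentially the same route as the paper: invoke the discretization lemma to restrict stopping points to the set $\{p(u,v,\lambda)\mid uv\in E(G),\,\lambda\in S_\delta\}$, bound the discrete length of a nice tour by $O(n^2)$ via the niceness properties, and brute-force over all candidate sequences, checking each with the polynomial-time validity test of \cref{check}. The only cosmetic difference is that you treat the at-most-two-stopping-point case explicitly and the paper folds it into the same enumeration, which changes nothing of substance.
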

\begin{proof}
By \cref{lemma:discretization}, it suffices to consider tours all of whose stopping points are contained in $S_{G,\delta}$ where $S_{G,\delta}=\{p(u,v,\lambda) \mid uv \in E(G),\lambda \in S_\delta\}$. As a nice tour traverses every edge at most twice and stops at every point at most once, we obtain that it stops at most $\alpha n^2$ times for an absolute constant $\alpha$. We now enumerate all possible sequences of points in $S_{G,\delta}$ of length at most $\alpha n^2$. For each of these sequences, we first check whether it is a tour which is clearly possible in linear time. We next check whether it is a \deltatour which is possible by Corollary~\ref{check}. If this is the case, we compute its length. We output the shortest \deltatour we obtain during this procedure, which is a shortest \deltatour.
\end{proof}

\section{Approximation Algorithms}
\label{section:approximation-ub}
In this section, we deal with finding approximation algorithms for shortest \deltatour{}s. The behavior of this problem is strongly influenced by the range $\delta$ comes from.

The simplest case is when $\delta = 0$. 

Let $G$ be a connected graph. For a $0$-tour $T$ of $G$, we have $\dist_G(p, T) = 0$ for every
$p \in P(G)$. That is, $P(T) = P(G)$. A Chinese Postman tour is an integral tour in $G$ traversing every
edge at least once. The crucial insight for finding a shortest 0-tour in $G$ is that this problem is essentially equivalent
to finding a shortest Chinese Postman tour in $G$. Since a shortest Chinese Postman tour can
be found in polynomial time (see, e.g.,~\cite{3134208,EdmondsJ73}), the problem \deltatourprob[0] is 
also polynomial-time solvable. Formally, we prove the following result.
\begin{observation}\label{obs:chinese-postman-zero-tour}
	There is a polynomial-time algorithm solving \deltatourprob[0].
\end{observation}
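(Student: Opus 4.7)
The plan is to reduce \deltatourprob[0] to the classical Chinese Postman Problem by showing that, for every connected graph $G$, the length of a shortest $0$-tour equals the length of a shortest Chinese Postman tour. Since the Chinese Postman Problem admits a polynomial-time algorithm (see, e.g., \cite{EdmondsJ73}), the observation follows.

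One direction is immediate: any Chinese Postman tour of $G$, viewed as a tour in the continuous model, traverses every edge, so $P(T) = P(G)$ and it is in particular a $0$-tour of the same length. For the reverse direction, I would start with an arbitrary $0$-tour $T$ and apply \cref{lem:tournice} to obtain, in polynomial time, a $0$-tour $T'$ with $\len(T') \leq \len(T)$ that is either nice or has at most two stopping points. The degenerate sub-case forces $P(T')$ to lie inside a single edge, which in turn forces $G$ to consist of just that edge $uv$; then $T'$ is essentially $\seq{u&v&u}$ and is already a Chinese Postman tour.

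The main step is to show that a nice $0$-tour must traverse every edge. For this I would invoke the four-case classification from \cref{fig:fourbehaviors} and rule out the two non-traversing behaviors~(a) and~(b). Case~(a) leaves the entire interior of $uv$ uncovered, contradicting $\delta = 0$. In case~(b), the unique peek segment $\seq{u&p(u,v,\lambda)&u}$ with $\lambda < 1$ leaves every point $p(u,v,\mu)$ with $\lambda < \mu < 1$ uncovered, because niceness prohibits any further interior stop on $uv$ and the other endpoint $v$ covers only itself. Hence $T'$ traverses every edge and is a Chinese Postman tour, giving $\opt_{\text{CP}} \leq \len(T') \leq \len(T)$ for every $0$-tour $T$. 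The one subtlety to watch is the exclusion of case~(b), which relies critically on the niceness conditions forbidding simultaneous peeks from both sides of the same edge and additional interior stops inside a traversed edge.
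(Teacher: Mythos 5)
Your proposal is correct and follows essentially the same route as the paper: reduce to the Chinese Postman Problem, observe that any Chinese Postman tour is a $0$-tour of the same length, and show conversely that some shortest $0$-tour is a Chinese Postman tour. The only difference is in how the converse is justified: the paper simply cites \cref{discreteklein} (whose cases all force traversal when $\delta=0$), whereas you re-derive this special case directly from \cref{lem:tournice} by ruling out the staying-out and peeking-in behaviors, which is a perfectly valid, slightly more self-contained variant of the same argument.
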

\begin{proof}
Let $G$ be a connected graph.
We compute and output a shortest Chinese Postman Tour $T_{\text{CP}}$ of $G$.
This can be done in polynomial time; see for example~\cite{3134208,EdmondsJ73}.

Let $\opt_{\text{CP}}$ be the length of $T_{\text{CP}}$,
	and let $\opttour[0]$ be the length of a shortest \deltatour[0].
Now we observe that a Chinese Postman tour given by a sequence
	of adjacent vertices that traverses every edge is a $0$-tour of the same
	length, so $\opttour[0] \leq  \opt_{\text{CP}}$.
Conversely, it follows directly from \cref{discreteklein} that there is a
shortest \deltatour[0] that is a Chinese Postman tour, so $\opt_{\text{CP}} \leq \opttour[0]$.
\end{proof}

For any $\delta > 0$, the problem \deltatourprob unfortunately becomes
\np-hard and even \apx-hard as we show in~\cite{FreiGHHM24}.
Therefore, we resort to the design of approximation algorithms for each $\delta > 0$.

We consider the ranges in increasing order. More concretely, in Sections~\ref{sec:appro1},~\ref{sec:appro2},~\ref{sec:appro3},~\ref{sec:appro4} and~\ref{sec:appro5}, we consider the cases $\delta \in (0,1/6]$, $\delta \in (1/6,1/2)$, $\delta=1/2$, $\delta \in (1/2,33/40)$, and $\delta \in [33/40,3/2)$, respectively. The remaining part of this section is dedicated to dealing with large $\delta$. In Section~\ref{sec:appro6}, we give the construction of an auxiliary graph which will be helpful for this purpose. We exploit this in Section~\ref{sec:appro7}, where we give an approximation result for fixed, large $\delta$ and in Section~\ref{sec:appro8}, where we deal with the case that $\delta$ is part of the input.

\subsection[\texorpdfstring{Covering Range $\delta \in (0, 1/6]$}{Covering Range delta in (0, 1/6]}]{Covering Range \boldmath$\delta \in (0, 1/6]$}\label{sec:appro1}

We start with the smallest covering range, that is, $\delta \in (0,1/6]$. It turns out that in this range,
	a shortest Chinese Postman Tour is a good approximation of a \deltatour.
That is, our algorithm consists of computing a shortest Chinese Postman tour.
Intuitively speaking, for this small $\delta$, there is a nice shortest \deltatour
	that stops at every vertex and passes large parts of every edge. We can hence show that
	a shortest Chinese Postman Tour is only larger by small factor than such a \deltatour.
Observe that the approximation ratio of our algorithm approaches $1$ when $\delta$ goes to $0$. More precisely, we prove Theorem \ref{ThmApproxUbZeroSixth}, which we restate here for convenience.

\ThmApproxUbZeroSixth*\label\thisthm
\begin{proof}

Let $G$ be a connected graph. By Corollary \ref{decide}, we may suppose that $|V(G)|\geq 3$.  We compute and output a shortest Chinese Postman Tour
	$T_{\text{CP}}$ of $G$.
Let $\opt_{\text{CP}}$ be the length of $T_{\text{CP}}$,
	and let $\opttour$ be the length of a shortest \deltatour.

	By definition, $T_{\text{CP}}$ traverses every edge and hence is a \deltatour.
	It remains to bound its length.
	By \cref{discreteklein} and as $|V(G)|\geq 3$, there is a
		minimum length nice \deltatour $T=\seq{p_0&p_1&\dots&p_z}$ which, for every edge $uv \in E(G)$,
		either traverses $uv$ or contains the segment $\seq{u'&p(u',v,'\lambda)&u'}$
		for some $u',v'$ with $\{u',v'\}=\{u,v\}$ and some $\lambda \geq 1-2\delta$.
It follows that the extension $\ceil{T}$ of $T$ is a Chinese Postman tour and that
	$\len(\ceil{T})\leq \frac{1}{1-2\delta}\len(T_{\text{CP}})$ holds.
Hence $\len(T_{\text{CP}}) = \opt_{\text{CP}}  \leq \len(\ceil{T}) \leq \len(T)/({1-2\delta})=
	{\opttour}/({1-2\delta})$.
\end{proof}

\subsection[\texorpdfstring{Covering Range $\delta \in (1/6, 1/2)$}{Covering Range, delta in (1/6, 1/2)}]{Covering Range \boldmath$\delta \in (1/6, 1/2)$}\label{sec:appro2}

We next consider the case that $\delta\in(1/6, 1/2)$.
It turns out that if $\delta$ is in this range, we can benefit from a close connection to a deeply studied related algorithmic problem, namely \MetricTSP (where TSP stands for Traveling Salesman Problem).
Formally, given an edge weighted graph~$(H,w)$ and a tour $T=\seq{p_0&\dots&p_z}$ in $H$ such that $p_i \in V(H)$ for all $i \in [z]$,
we define the length of a tour as $\len(T)=\sum_{i \in [z]}w(p_{i-1}p_i)$.
The problem can now be defined as follows.
\begin{myproblem}[\MetricTSP]
\label{prob:metrictsp}%
Instance&A connected graph~$H$ with a weight function $w\colon E(H)\rightarrow \mathbb{R}_{\geq 0}$.\\
Solution&Any tour $T$ in $H$ stopping at all vertices of $V(H)$.\\
Goal& Minimize the tour length $\len(T)$.
\end{myproblem}
We acknowledge that \MetricTSP is often defined in a slightly different form where $H$ is restricted to be complete, $w$ is required to be metric and a Hamiltonian cycle is sought rather than an arbitrary tour stopping at all vertices. However, both versions are easily seen to be equivalent and so we use the above version which is more convenient for our purposes.

We heavily rely on a well-known result of Christofides stating the existence of a $1.5$-approximation algorithm for metric \TSP. We use the following more formal restatement of the result of Christofides.
\begin{lemma}[\cite{Christofides76, Christofides2022}]
\label{lem:tsp_approx}
There is a polynomial-time algorithm that computes a TSP tour~$T$ of a given
weighted connected graph~$(H,w)$ such that $\len(T) \leq 1.5 \cdot \opttsp$,
where $\opttsp$ is the length of a shortest TSP tour of $(H,w)$.
\end{lemma}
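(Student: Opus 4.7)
The plan is to implement the classical algorithm of Christofides. Because the problem as formulated permits tours that revisit vertices, I will first reduce to the metric setting by passing to the shortest-path completion. Concretely, let $(H',w')$ be the complete graph on $V(H)$ with $w'(uv)$ equal to the $w$-shortest-path distance from $u$ to $v$ in $H$ (computable in polynomial time). Then $w'$ satisfies the triangle inequality, any Hamilton cycle $C$ in $(H',w')$ lifts to a tour in $(H,w)$ of the same length by replacing each edge with a shortest path in $H$, and conversely any tour in $(H,w)$ visiting all vertices of $V(H)$ shortcuts to a Hamilton cycle in $(H',w')$ of no larger weight. In particular, $\opttsp$ equals the minimum weight of a Hamilton cycle in $(H',w')$, so it suffices to produce a Hamilton cycle in $(H',w')$ of weight at most $1.5\cdot\opttsp$.

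The algorithm has three main steps. First, I would compute a minimum spanning tree $U$ of $(H',w')$ in polynomial time. Because deleting any edge of an optimal TSP tour in $(H',w')$ yields a spanning subgraph containing a spanning tree, we obtain $w'(U)\le\opttsp$. Second, I would let $O\subseteq V(H)$ be the set of vertices of odd degree in $U$; by the handshake lemma, $|O|$ is even. I would then compute a minimum-weight perfect matching $M^*$ on $O$ inside the complete graph with weights $w'$, using Edmonds' polynomial-time algorithm. Third, I would form the multigraph $U\cup M^*$: every vertex has even degree (since $M^*$ toggles parity exactly on $O$) and it is connected (since $U$ is), so by \cref{euler} it admits an Euler tour $T_E$, computable in polynomial time. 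Traversing $T_E$ and skipping previously visited vertices (shortcutting) yields a Hamilton cycle $C$ in $(H',w')$; by the triangle inequality of $w'$, $w'(C)\le w'(T_E)=w'(U)+w'(M^*)$.

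The crux of the proof is the bound $w'(M^*)\le\opttsp/2$. For this I would take an optimal TSP tour in $(H',w')$ and shortcut it to a Hamilton cycle $C^*$ on the vertex set $O$; triangle inequality gives $w'(C^*)\le\opttsp$. Since $|O|$ is even, $C^*$ is an even cycle and decomposes into two edge-disjoint perfect matchings on $O$; the lighter of these two matchings has weight at most $w'(C^*)/2\le\opttsp/2$, so by the minimality of $M^*$ we obtain $w'(M^*)\le\opttsp/2$. Combining everything, $w'(C)\le w'(U)+w'(M^*)\le\opttsp+\opttsp/2=1.5\cdot\opttsp$.

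Finally, I would lift the Hamilton cycle $C$ to a tour $T$ in $(H,w)$ by replacing each edge $uv$ of $C$ by a $w$-shortest $u$--$v$ path in $H$; this yields a tour stopping at every vertex of $V(H)$ with $\len(T)=w'(C)\le 1.5\cdot\opttsp$, as required. The main obstacle is the matching bound described above; the remaining ingredients (MST, Edmonds' weighted matching, Eulerian tour, shortest-path completion) are all standard polynomial-time subroutines, so the overall running time is polynomial.
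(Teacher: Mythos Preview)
Your proof is correct and is the standard argument for Christofides' algorithm, adapted appropriately to the paper's formulation of \MetricTSP (which allows non-Hamiltonian tours). The paper itself does not prove this lemma; it is stated as a citation to \cite{Christofides76, Christofides2022} and used as a black box. Your reduction to the shortest-path metric completion is exactly the standard way to bridge the paper's ``tour visiting all vertices'' formulation with the usual Hamiltonian-cycle formulation, and the subsequent MST-plus-matching argument is the classical one. There is nothing to compare against in the paper beyond the citation.
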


We are now ready to describe our algorithm which is based on a reduction to an instance of \MetricTSP and applying Lemma~\ref{lem:tsp_approx}. Some results from Section~\ref{section:structural} will be convenient in the analysis of the quality of the algorithm. Formally, we prove Theorem~\ref{thm:approx:ub:sixth_half}, which we restate here for convenience.
\ThmApproxUbSixthHalf*\label\thisthm

\begin{proof}
Let a connected graph~$G$ and $\delta \in (1/6,1/2)$ be given. By Corollary~\ref{decide}, we may suppose that $|V(G)|\geq 3$. Let $V_1$ be the vertices $v \in V(G)$ with $d_G(v)=1$ and let $V_{\geq 2}=V(G)-V_1$. We now construct an auxiliary weighted graph~$(H,w)$. First, we let $V(H)$ contain $V_{\geq 2}$. Next, for every $uv \in E(G)$ with $u \in V_{\geq 2}$ and $v \in V_1$, we let $V(H)$ contain the point $p(u,v,1-\delta)$ and we let $E(H)$ contain the edge $up(u,v,1-\delta)$. For the remaining description of $H$, we need to make a finer distinction of the range $\delta$ comes from. First suppose that $\delta<\frac{1}{4}$. For every $uv \in E(G)$ with $\{u,v\}\subseteq V_{\geq 2}$, we let $V(H)$ contain $p(u,v,2 \delta)$ and $p(v,u,2 \delta)$ and we let $E(H)$ contain $up(u,v,2\delta), p(u,v,2\delta)p(v,u,2\delta)$, and $p(v,u,2\delta)v$. Now suppose that $\delta \geq \frac{1}{4}$ and let $\{v_1,\ldots,v_q\}$ be an arbitrary ordering of $V_{\geq 2}$. For every $i,j \in [q]$ with $i<j$ and $v_iv_j \in E(G)$, we let $V(H)$ contain $p(v_i,v_j,2\delta)$ and we let $E(H)$ contain $v_ip(v_i,v_j,2\delta)$ and $p(v_i,v_j,2\delta)v_j$. Finally, we define $w\colon E(H)\rightarrow \mathbb{R}_{\geq 0}$ by $w(pp')=\dist_G(p,p')$ for all $pp' \in E(H)$. This finishes the description of $(H,w)$. 

The tight relationship between $\delta$-tours in $G$ and \TSP-tours in $H$ is described in the following two claims.

\begin{claim}\label{deltatsp}
Let $T$ be a $\delta$-tour in $G$. Then there is a \TSP-tour~$T_0$ in $H$ with $\len(T_0)\leq \len(T)$.
\end{claim}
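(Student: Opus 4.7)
The plan is to reduce to a shortest $\delta$-tour with the rigid structure guaranteed by \cref{discreteklein} and then to build $T_0$ by a length-preserving substitution of each edge interaction of $T$ by a walk in $H$. Since the claim depends only on $\len(T)$, it suffices to prove it when $T$ is a shortest $\delta$-tour: given any $\delta$-tour $T$ in $G$, a shortest one $T^\ast$ satisfies $\len(T^\ast)\leq \len(T)$, so any $T_0$ witnessing the claim for $T^\ast$ automatically witnesses it for $T$. Assuming $|V(G)|\geq 3$ (the case $|V(G)|=2$ being degenerate and easily handled directly), \cref{discreteklein} applied with $\delta<1/2$ yields a shortest $T$ that is nice and stops at every non-leaf vertex, with each edge $uv$ falling into exactly one of the cases (b)--(d) of that lemma: leaf edges are covered by a peek $\seq{u & p(u,v,1-\delta) & u}$, and every interior edge is either fully traversed or covered by a single peek $\seq{x & p(x,y,1-2\delta) & x}$ with $\{x,y\}=\{u,v\}$.

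Next, I would align the peek sides of $T$ with the construction of $H$. For $\delta<1/4$ every such peek on an interior edge automatically passes through both special points $p(x,y,2\delta)$ and $p(y,x,2\delta)$ because $2\delta\leq 1-2\delta$, so no modification is needed. For $\delta\geq 1/4$ only one special point $p(v_i,v_j,2\delta)$ (with $i<j$) lives in $V(H)$; of the two possible peeks only the one from $v_j$ actually visits this point, since its endpoint is $p(v_j,v_i,1-2\delta)=p(v_i,v_j,2\delta)$, whereas the symmetric peek from $v_i$ ends at $p(v_i,v_j,1-2\delta)$ with $1-2\delta<2\delta$ and misses it. Whenever $T$ peeks from the ``wrong'' side I would relocate that peek to $v_j$, which is a stopping point of $T$ since $v_j\in V_{\geq 2}$: the new peek has the same length $2(1-2\delta)$, still covers the edge (both peeks have identical coverage by \cref{char2stops}), and the walk structure remains valid because one merely deletes the old peek segment and inserts the new one at an existing occurrence of $v_j$.

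Finally, I would define $T_0$ by walking through the stopping-point sequence of the modified $T$ and replacing each consecutive pair of stops by the walk between them in $H$ along the $H$-edges sitting inside their common $G$-edge. A leaf-edge peek becomes the $H$-edge $u\,p(u,v,1-\delta)$ traversed twice, of weight $2(1-\delta)$; an interior traversal becomes the unique $H$-path through the special point(s) of that edge, of weight exactly $1$; and an interior peek becomes an analogous peek in $H$ that now explicitly visits every special point of that edge, for instance $\seq{x & p(x,y,2\delta) & p(y,x,2\delta) & p(x,y,2\delta) & x}$ when $\delta<1/4$, whose weight $2\delta+(1-4\delta)+(1-4\delta)+2\delta=2-4\delta$ matches the $G$-peek length $2(1-2\delta)$. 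In every case the substitution preserves length, so $\len(T_0)=\len(T)$, and since every non-leaf vertex of $G$ appears as a stop of $T$ and every special point of $V(H)$ is visited by one of these canonical $H$-walks, the closed walk $T_0$ visits all of $V(H)$ and is thus a TSP tour. The main obstacle will be the peek-relocation argument for $\delta\geq 1/4$: one has to verify carefully that moving a peek to a different stopping point of $T$ yields a legitimate tour of the same length that still satisfies \cref{char2stops} on the affected edge; once that is established, the remaining work consists of routine length-preserving substitutions in the auxiliary graph $H$.
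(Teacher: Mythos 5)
Your proposal is correct and takes essentially the same route as the paper's own proof: reduce via \cref{discreteklein} to a nice shortest tour with per-edge structure (b)--(d), relocate a ``wrong-side'' interior peek to the other endpoint (adjacent in $H$ to the unique special point) when $\delta\geq 1/4$, and then replace each traversal or peek by a length-preserving walk in $H$ through the special points of that edge. The only nitpick is the boundary case $\delta=1/4$, where $1-2\delta=2\delta$ so the peek from $v_i$ already reaches the special point and relocation is unnecessary (though still harmless), so nothing breaks.
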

\begin{claimproof}
By Lemma~\ref{discreteklein} and the assumption that $|V(G)|\geq 3$, we may suppose that one of Lemma~\ref{discreteklein}$(b),(c)$, and $(d)$ holds for every $uv \in E(G)$ for $T$, so in particular $T$ is nice. We now obtain a tour~$T_0$ from $T$ in the following way. If $\delta<\frac{1}{4}$, for all edges $uv \in E(G)$ for which $(c)$ holds, we replace all segments~$\seq{u&v}$ and $\seq{v&u}$ of $T$ by $\seq{u&p(u,v,2 \delta)&p(v,u,2\delta)&v}$ and $\seq{v&p(v,u,2 \delta)&p(u,v,2 \delta)&u}$, respectively. If $\delta\geq \frac{1}{4}$, for all edges $v_iv_j \in E(G)$ with $i<j$ for which $(c)$ holds, we replace all segments~$\seq{v_i&v_j}$ and $\seq{v_j&v_i}$ of $T$ by $\seq{v_i&p(v_i,v_j,2\delta)&v_j}$ and $\seq{v_j&p(v_i,v_j,2\delta)&v_i}$, respectively. If $\delta<\frac{1}{4}$ and $(d)$ holds for an edge $uv$ of $G$, say $T$ contains the segment~$\seq{u&p(v,u,2 \delta)&u}$, we replace the segment~$\seq{u&p(v,u,2 \delta)&u}$ by $\seq{u&p(u,v,2\delta)&p(v,u,2 \delta)&p(u,v,2\delta)&u}$. Finally, if $\delta \geq \frac{1}{4}$ and for some $v_iv_j \in E(G)$ with $i<j$, $(d)$ holds and $T$ contains the segments~$\seq{v_i&p(v_j,v_i,2 \delta)&v_i}$ and $\seq{v_j}$, then we replace these segments by $\seq{v_i}$ and $\seq{v_j&p(v_i,v_j,2 \delta)&v_j}$, respectively. This finishes the description of $T_0$. By construction, we have $\len(T_0)=\len(T)$. Observe that, as $T$ is nice, so is $T_0$. Further, by Lemmas~\ref{char1stop} and~\ref{char2stops} and construction, we obtain that $T_0$ is a \TSP-tour in $H$.
\end{claimproof}

\begin{claim}\label{tspdelta}
Let $T_0$ be \TSP-tour in $H$. Then $T_0$ is a \deltatour of the same length in $G$.
\end{claim}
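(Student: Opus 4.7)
My plan is to show two things: that $T_0$ has the same length as a tour in $G$ as it does as a tour in $H$, and that $T_0$ is a $\delta$-tour in $G$. The length equality is immediate from the construction, since every $H$-edge $pp'$ is weighted $w(pp') = \dist_G(p,p')$, and an inspection of the edges of $H$ shows that each $H$-adjacent pair of points lies on a common edge of $G$. Consequently, $T_0$ interpreted in $G$ is a valid walk, and its $G$-length coincides with its $H$-length.

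For the $\delta$-covering property, I would iterate over the edges of $G$ and handle each edge $uv \in E(G)$ according to one of three cases reflecting the construction of $H$. The first case is when $uv$ is incident to a leaf, say $v \in V_1$. Then $u \in V_{\geq 2} \subseteq V(H)$ and $p(u,v,1-\delta) \in V(H)$, so $T_0$ stops at both; on the other hand, $v \notin V(H)$, so $T_0$ does not stop at $v$. Lemma~\ref{char1stop}$(ii)$, applied with $\lambda = 1-\delta$, then immediately yields that $T_0$ covers $uv$.

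The remaining two cases handle edges with both endpoints of degree at least two. When $\delta < 1/4$, the tour $T_0$ stops at the four points $u$, $v$, $p(u,v,2\delta)$, and $p(v,u,2\delta)$, located at positions $0$, $1$, $2\delta$, and $1-2\delta$ along $uv$; the union of their $\delta$-neighborhoods restricted to $uv$ equals $[0, 3\delta] \cup [1-3\delta, 1]$, which is all of $[0,1]$ since $\delta > 1/6$. When $\delta \geq 1/4$, writing $\{u,v\} = \{v_i, v_j\}$ with $i < j$, the tour $T_0$ stops at $u$, $v$, and $p(v_i,v_j,2\delta)$, and the union of the $\delta$-neighborhoods of these three stopping points on $uv$ equals $[0, \min(1,3\delta)] \cup [1-\delta, 1]$, which again equals $[0,1]$ because $\delta \geq 1/4$ guarantees $\min(1,3\delta) \geq 1-\delta$.

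I expect the only real subtlety to lie in the first case, since there the stopping points of $T_0$ alone do not bring every point of $uv$ within distance $\delta$ when $\delta < 1/3$. What saves the argument is Lemma~\ref{char1stop}$(ii)$, which exploits precisely the fact that $v$ is never a stopping point of $T_0$ in order to force the walk to pass through the entire segment from $u$ to $p(u,v,1-\delta)$; the $\delta$-neighborhood of the stopping point $p(u,v,1-\delta)$ then covers the remaining portion $[1-\delta, 1]$ of the edge.
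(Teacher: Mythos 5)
Your proposal is correct and follows essentially the same route as the paper: the length claim via the edge weights $w(pp')=\dist_G(p,p')$, and the same $\delta$-neighborhood computations (needing $\delta>1/6$ resp.\ $\delta\geq 1/4$) for edges with both endpoints in $V_{\geq 2}$. The only cosmetic difference is the leaf-edge case, where you invoke Lemma~\ref{char1stop}$(ii)$ (whose hypotheses you correctly verify) while the paper argues directly that $\deg_H(p(u,v,1-\delta))=1$ forces the segment $\seq{u&p(u,v,1-\delta)&u}$; the two arguments are equivalent in substance.
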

\begin{claimproof}

It follows directly by construction that $T_0$ is a tour of the same length in $G$. In order to see that $T_0$ is a \deltatour in $G$, it suffices to prove that for every $p \in P(G)$ which is not passed by $T$, there is some $q \in V(H)$ with $\dist_G(p,q)\leq \delta$. First suppose that $p=p(u,v,\lambda)$ for some $u \in V_{\geq 2}$ and $v \in V_1$ with $uv \in E(G)$ and some $\lambda \in [0,1]$. As $\deg_H(p(u,v,1-\delta))=1$, we obtain that $T_0$ contains the segment~$\seq{u&p(u,v,1-\delta)&u}$. Hence, as $p$ is not passed by $T_0$, it follows that $\lambda>1-\delta$. Therefore, $\dist_G(p,p(u,v,1-\delta))\leq \delta$ holds. Now suppose that $p=p(u,v,\lambda)$ for some $u,v \in V_{\geq 2}$ with $uv \in E(G)$. First suppose that $\delta<\frac{1}{4}$. If $\lambda\leq \delta$, we have $\dist_G(p,u)\leq \delta$. If $\lambda \in (\delta,\frac{1}{2})$, as $\delta \geq \frac{1}{6}$, we have $\dist_G(p,p(u,v,2\delta))\leq \delta$. If $\lambda \in [\frac{1}{2},1-\delta)$, as $\delta \geq \frac{1}{6}$, we have $\dist_G(p,p(v,u,2\delta))\leq \delta$. If $\lambda \geq 1-\delta$, we have $\dist_G(p,v)\leq \delta$. Now suppose that $\delta \geq \frac{1}{4}$. By symmetry, we may suppose that $u=v_i$ and $v=v_j$ for some $i,j \in [q]$ with $i<j$. If $\lambda \leq \delta$, we have $\dist_G(p,u)\leq \delta$. If $\lambda \geq \delta$ and $\lambda \leq 1-\delta$, as $\delta \geq \frac{1}{4}$, we have $\dist_G(p,p(u,v,2 \delta))\leq \delta$. Finally, if $\lambda \geq 1-\delta$, we have $\dist_G(p,v)\leq \delta$. Hence $T_0$ is a \deltatour. 
\end{claimproof}
We are now ready to conclude Theorem~\ref{thm:approx:ub:sixth_half}. Let
$\opttour$ be the length of a shortest \deltatour in $G$. By
Lemma~\ref{lem:tsp_approx}, in polynomial time, we can compute a TSP tour~$T$ of
$H$ such that  $\len(T) \leq 1.5 \cdot \opttsp$, where $\opttsp$ is the
length of a shortest TSP tour of $(H,w)$. By Claim~\ref{tspdelta}, we have that
$T$ is a \deltatour in $G$. Moreover, by Claim~\ref{deltatsp}, we have
$\opttsp\leq \opttour$. This yields $\len(T)\leq 1.5 \cdot \opttsp\leq
1.5 \cdot \opttour$.
\end{proof}
 
\subsection[\texorpdfstring{Covering Range $\delta = 1/2$}{Covering Range delta = 1/2}]{Covering Range \boldmath$\delta = 1/2$}\label{sec:appro3}
In this section, we show that for $\delta=\frac{1}{2}$, we can give an approximation algorithm with a slightly better bound. This is due to an approximation result on a more restricted version of \MetricTSP. Namely, we denote by {\it Graphic TSP} the restriction of \MetricTSP to instances $(H,w)$ in which $w$ is the unit weight function. Throughout this, section, when we speak of a TSP tour of a certain length, we refer to the unit weight functions. We rely on the following result of Seb\H o and Vygen.

\begin{lemma}[\cite{SeboV14}]\label{lem:graphtsp_approx}
There is a polynomial-time algorithm that computes a TSP tour~$T$ of a given
connected graph~$H$ such that $\len(T) \leq 1.4 \cdot \opttsp$,
where $\opttsp$ is the length of a shortest TSP tour of $H$.
\end{lemma}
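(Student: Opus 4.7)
The plan is to follow the Sebő--Vygen ear-decomposition framework, combining an open ear decomposition with a matroid-intersection refinement and a $T$-join parity correction. First I would reduce to the case that $H$ is $2$-edge-connected: every bridge must be traversed twice by any tour, so a divide-and-conquer across the bridge tree reduces to $2$-edge-connected blocks without loss in ratio. Recall that a graph is $2$-edge-connected if and only if it admits an \emph{open ear decomposition} $G_0 \subsetneq G_1 \subsetneq \cdots \subsetneq G_k = H$, where $G_0$ is a cycle and each $G_i$ is obtained from $G_{i-1}$ by attaching an \emph{ear} $P_i$---a path with both endpoints in $G_{i-1}$ and all internal vertices outside. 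I call an ear \emph{trivial} if it is a single edge, \emph{short} if it has exactly two internal vertices, and \emph{long} otherwise; short and trivial ears are the obstructions to cheap TSP tours.

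Second, from any open ear decomposition one can build a connected spanning Eulerian multigraph $F$ by including every edge of $H$ and correcting odd-degree vertices via a minimum $T$-join; shortcutting an Euler tour of $F$ yields a TSP tour of length $|E(F)|$. Following Sebő--Vygen, I would call an ear decomposition \emph{nice} if every even-length ear is short and pendant, and prove that the nice ear decompositions form the common bases of two matroids on $E(H)$ (a partition matroid indexed by ear slots, and a graphic-type matroid enforcing connectivity). Hence a nice ear decomposition minimizing a weighted count $\varphi + \tfrac12 \pi$ of trivial ears $\varphi$ and short ears $\pi$ is computable in polynomial time via Edmonds' matroid intersection algorithm.

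Third, the approximation ratio emerges from the Held--Karp lower bound $\optlp \leq \opttsp$. On the one hand, $\optlp \geq n$ for $2$-edge-connected $H$, and the Eulerian subgraph $F$ derived from the nice ear decomposition satisfies $|E(F)| = n-1 + (\text{non-trivial ears}) + (\text{parity-correction cost})$. On the other hand, LP duality applied to the matroid-intersection polytope gives
\[
\varphi + \tfrac{1}{2}\pi \;\leq\; \tfrac{2}{5}\,\optlp,
\]
which combines to yield $|E(F)| \leq \optlp + \tfrac{2}{5}\optlp = \tfrac{7}{5}\optlp \leq \tfrac{7}{5}\opttsp$, establishing the claimed $1.4$-approximation.

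The principal obstacle is this last fractional-counting step: bounding $\varphi + \tfrac12\pi$ in terms of $\optlp$ requires the \emph{removable-pairings} technique of Sebő--Vygen, which exhibits, for any nice ear decomposition, a pairing of ear edges whose removal preserves $2$-edge-connectivity of a suitable residual and simultaneously certifies a low count of short and trivial ears inside the LP lower bound. The matroid-intersection encoding is also delicate: one must verify that the two matroids exactly capture nice ear decompositions and no spurious common bases, since any slack would corrupt the dual bound $\tfrac{2}{5}\optlp$ and weaken the ratio beyond $7/5$.
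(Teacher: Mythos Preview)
The paper does not prove this lemma: it is quoted verbatim from Seb\H{o} and Vygen~\cite{SeboV14} and used as a black box. There is nothing to compare your attempt against on the paper's side beyond the citation itself.

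That said, since you have tried to sketch the actual Seb\H{o}--Vygen argument, a few remarks are in order. Your high-level outline---reduce to $2$-edge-connected blocks, take an open ear decomposition, correct parities via a $T$-join, and bound the result against the Held--Karp LP---is the right skeleton. However, several technical assertions do not match the source. First, the optimization over ear decompositions is not framed as matroid intersection on two matroids whose common bases are the ``nice'' decompositions; rather, Seb\H{o}--Vygen invoke Frank's theorem that one can efficiently find an open ear decomposition minimizing the number of \emph{even} ears, and ``nice'' refers to a more refined structural condition on pendant and nonpendant ears than the trivial/short/long trichotomy you give. Second, the displayed inequality $\varphi + \tfrac{1}{2}\pi \leq \tfrac{2}{5}\,\optlp$ is not the form of the key estimate; the actual accounting tracks the number $\varphi$ of even ears and uses the lower bound $\optlp \geq n$ together with a separate bound relating $\varphi$ to $\optlp$ via removable pairings, and the arithmetic that yields $7/5$ is more delicate than a single fractional inequality. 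Your final paragraph correctly identifies removable pairings as the crux, but the surrounding scaffolding would need to be rebuilt to match the genuine argument.

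For the purposes of the present paper, none of this matters: simply cite the result and move on, as the authors do.
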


We are now ready to give an approximation algorithm for $\frac12$-tour which attains the same approximation ratio as the algorithm in Lemma~\ref{lem:graphtsp_approx}. More formally, we prove Theorem~\ref{thm:approx:ub:half}, which we restate here for convenience. 
\ThmApproxUbHalf*\label\thisthm
\begin{proof}
Let $G$ be a connected graph. By Corollary~\ref{decide}, we may suppose that $|V(G)|\geq 3$. We denote by $V_1$ the set of vertices $v \in V(G)$ with $d_G(v)=1$ and we use $V_{\geq 2}=V(G)-V_1$. Further, we set $H=G[V_{\geq 2}]$. The idea of our proof is that every $\frac12$-tour in $G$ consists of a TSP tour in $H$ together with some detours for the vertices in $V_1$. We can efficiently approximate the first part by Lemma~\ref{lem:graphtsp_approx} while the detours result in a positive additive constant. In the following two claims, this relations is proven more formally.

\begin{claim}\label{einhalbtsp2}
Let $T$ be a $\frac12$-tour in $G$. Then there is a TSP-tour~$T_0$ in $H$ with $\len(T_0)=\len(T)-|V_1|$.
\end{claim}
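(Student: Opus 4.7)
The plan is to build $T_0$ from $T$ by excising, for each leaf $v \in V_1$ with non-leaf neighbor $u$, the length-$1$ ``leaf detour'' $\seq{u & p(u,v,1/2) & u}$ that $T$ is forced to contain on the edge $uv$. Since there are exactly $|V_1|$ such detours and each contributes length $2\cdot 1/2 = 1$, this excision will yield the advertised identity $\len(T_0) = \len(T) - |V_1|$.

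To set this up, I would first invoke \cref{lem:tournice} together with \cref{discreteklein} to reduce to the case where $T$ is nice and every edge $uv \in E(G)$ satisfies one of the conditions (b), (c), (d); condition (a) is ruled out by the assumption $|V(G)|\geq 3$. For $\delta = 1/2$ we have $1-\delta = 1/2$ and $1-2\delta = 0$, so (b) specializes to exactly the leaf-detour structure described above for every edge incident to a leaf (and since (c) and (d) require both endpoints to have degree at least $2$, (b) is in fact the only applicable case on leaf edges). Conditions (c) (edge-traversal) and (d) (plain stops at both endpoints) cover the remaining edges. From this I would derive the key observation that $T$ stops at every vertex $w \in V_{\geq 2}$: whichever of (b), (c), or (d) applies to an edge incident to $w$, it forces a stop at $w$.

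Define $T_0$ by replacing each segment $\seq{u & p(u,v,1/2) & u}$ corresponding to a leaf $v$ by a single occurrence of $u$. Since each such detour begins and ends at the same vertex $u$, the result is still a valid tour; moreover, its stopping points are exactly the stops of $T$ lying in $V_{\geq 2}$. The main point to check is that $T_0$ is in fact a TSP tour of $H$, not merely of $G$: any two consecutive stops of $T_0$ were already consecutive in $T$ (their only prior separation being detours we collapsed to a single vertex stop), hence lie on a common edge of $G$, and since both are in $V_{\geq 2}$ that edge belongs to $E(H)$. Combined with the observation that every vertex of $V(H) = V_{\geq 2}$ appears among the stops of $T_0$, this shows $T_0$ is a TSP tour of $H$; summing the contributions of the $|V_1|$ excised detours gives $\len(T_0) = \len(T) - |V_1|$.
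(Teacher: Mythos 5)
Your proposal is correct and follows essentially the same route as the paper: normalize $T$ via \cref{lem:tournice} and \cref{discreteklein} so that it stops at every vertex of $V_{\geq 2}$ and meets each leaf edge only in the detour $\seq{u&p(u,v,\frac12)&u}$, then collapse these $|V_1|$ unit-length detours to obtain a TSP tour of $H$. Your explicit derivation that cases (b)--(d) force a stop at every vertex of $V_{\geq 2}$, and that consecutive stops of $T_0$ lie on an edge of $H$, is just a spelled-out version of what the paper asserts directly.
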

\begin{claimproof}
By Lemma~\ref{discreteklein}, we may suppose that $T$ stops at every vertex in $V_{\geq 2}$ and that for every $v \in V_1$, we have that $T$ contains the segment~$\seq{x_v&p(x_v,v,\frac{1}{2})&x_v}$ where $x_v$ is the unique neighbor of $v$ in $G$. We now obtain $T_0$ from $T$ by recursively replacing the segment~$\seq{x_v&p(x_v,v,\frac{1}{2})&x_v}$ by $\seq{x_v}$ for all $v \in V_1$. As $T$ stops at every vertex of $V_{\geq 2}$, we obtain that $T_0$ is a TSP tour in $H$. Further, by construction, we have $\len(T_0)=\len(T)-|V_1|$.
\end{claimproof}

\begin{claim}\label{einhalbtsp}
Let  a TSP tour $T_0$ in $H$ be given. Then, in polynomial time, we can compute a $\frac12$-tour~$T$ in $G$ with $\len(T)=\len(T_0)+|V_1|$.
\end{claim}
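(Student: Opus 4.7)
The plan is to invert the construction used in Claim~\ref{einhalbtsp2}: starting from the TSP tour $T_0$ of $H = G[V_{\geq 2}]$, I would add a length-one detour at one occurrence of each leaf's neighbor in order to $\frac12$-cover the incident pendant edge. First I would observe that because $G$ is connected and $|V(G)| \geq 3$, every leaf $v \in V_1$ has its unique neighbor $x_v$ in $V_{\geq 2}$; in particular $x_v \in V(H)$, so $T_0$ stops at $x_v$.

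The construction is then direct: for every $v \in V_1$, pick an occurrence of $x_v$ in $T_0$ and replace that single stop by the segment $\seq{x_v&p(x_v,v,\frac{1}{2})&x_v}$; if several leaves share the same neighbor $x_v$, their detours are chained at a single occurrence of $x_v$. Each inserted piece starts and ends at $x_v$, so the result $T$ is still a (closed) tour, and each contributes length exactly $1$, giving $\len(T) = \len(T_0) + |V_1|$. The procedure performs at most $|V_1| \leq n$ insertions and therefore runs in polynomial time.

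Finally I would verify that $T$ is a $\frac12$-tour by checking coverage edge by edge. For any edge $uv \in E(H)$, both endpoints are stopping points of $T$ (inherited from $T_0$), so $uv$ is covered by \cref{char2stops}(ii) with $\delta = \frac{1}{2}$. For any pendant edge $x_v v$ with $v \in V_1$, $T$ stops at $x_v$ and at the interior point $p(x_v, v, \frac{1}{2})$ but not at $v$, so condition (ii) of \cref{char1stop} applies with $\lambda = \frac{1}{2} \in [1-\delta, 1]$ and the edge is covered. Since $|V(G)|\geq 3$ and $G$ is connected rules out edges between two leaves, this exhausts $E(G)$. There is no substantial obstacle here: the argument mirrors the construction in Claim~\ref{einhalbtsp2} in reverse, with the two characterization lemmas supplying the required coverage conditions, and the only minor bookkeeping concerns the case of several leaves sharing a neighbor.
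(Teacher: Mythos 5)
Your proposal is correct and follows essentially the same route as the paper: since the TSP tour of $H=G[V_{\geq 2}]$ stops at every non-leaf vertex, insert for each leaf $v$ the segment $\seq{x_v&p(x_v,v,\frac12)&x_v}$ at an occurrence of its neighbor $x_v$, pay $1$ per leaf, and verify coverage via \cref{char2stops} for edges inside $H$ and \cref{char1stop}(ii) for pendant edges. The handling of several leaves sharing a neighbor matches the paper's recursive insertion, so no gap remains.
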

\begin{claimproof}
Clearly, we may suppose that all stopping points of $T_0$ are vertices in $V(G)$. For every $v \in V_1$, let $x_v$ be the unique neighbor of $v$ in $G$. As $T_0$ is a \deltatour in $H$, we know that $T_0$ stops at $x_v$ for all $v \in V_1$. We now obtain $T$ from $T_0$ by, for every $v \in V_1$ choosing an arbitrary occurrence of $x_v$ and replacing it by $x_vp(x_v,v,\frac{1}{2})x_v$. We do this recursively for an arbitrary ordering of $V_1$. Clearly, $T$ can be computed in polynomial time given $T_0$. Further, as all stopping points of $T_0$ are in $V(G)$, we obtain that $T$ is nice. It follows from \Cref{char1stop,char2stops} that $T$ is a $\frac12$-tour in $G$. Further, we clearly have $\len(T)=\len(T_0)+|V_1|$.
\end{claimproof}
We are now ready to conclude Theorem~\ref{thm:approx:ub:half}. First, by
Lemma~\ref{lem:graphtsp_approx}, we can compute a TSP tour~$T_0$ in $H$ which
satisfies $\len(T_0)\leq \frac{7}{5}\opttsp$ where $\opttsp$ is the length of a
shortest TSP tour in $H$. Then by Claim~\ref{einhalbtsp}, in polynomial time, we
can compute a $\frac12$-tour~$T$ in $G$ which satisfies
$\len(T)=\len(T_0)+|V_1|$. Finally, by Claim~\ref{einhalbtsp2}, we have
$\opttsp\leq \opttour-|V_1|$ where $\opttour$ is the length of a shortest $\frac12$-tour in $G$.
We obtain $\len(T)=\len(T_0)+|V_1|\leq 1.4 \cdot \opttsp+|V_1|\leq 1.4 (\opttour-|V_1|)+|V_1|\leq
1.4 \cdot \opttour$. This finishes the proof. 
\end{proof}
\subsection[\texorpdfstring{Covering Range $\delta\in(1/2, 33/40)$}{Covering Range delta in (1/2, 33/40)}]{Covering Range \boldmath$\delta\in(1/2, 33/40)$}\label{sec:appro4}

Here we show that computing a $\frac12$-tour using
\Cref{thm:approx:ub:half}, yields a good approximation of a \deltatour.

We first need the following result relating the lengths of $\delta$-tours and $\frac12$-tours.
\begin{proposition}\label{dtrtzfzu}
Let $T_\delta$ be a $\delta$-tour in a connected graph~$G$ for some $\delta\in(1/2, 33/40)$. Then there is a $\frac12$-tour~$T_{1/2}$ of $G$ that satisfies $\len(T_{1/2})\leq \frac{1}{2(1-\delta)}\len (T_\delta)$.
\end{proposition}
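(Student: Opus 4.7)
The plan is to construct $T_{1/2}$ from $T_\delta$ by local segment-by-segment replacements, using the structural characterization of $\delta$-tours already established. First, by \cref{lem:tournice}, I may assume $T_\delta$ is either nice or has at most two stopping points; in the latter case, $G$ is so small that one can handle it directly by, say, invoking \cref{decide} or exhibiting a short $\frac{1}{2}$-tour explicitly, so henceforth assume $T_\delta$ is nice.

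The main step is to classify, using \cref{nostopchar}, \cref{char1stop}, and \cref{char2stops}, the local interaction of $T_\delta$ with every edge $uv \in E(G)$ into the three scenarios (P1)--(P3) described in the overview, and then perform appropriate local replacements. If $T_\delta$ already stops at both endpoints of $uv$ (case (P1)), leave those segments unchanged. If $T_\delta$ stops at exactly one endpoint of $uv$ (case (P2)), replace the corresponding peek-in segment by $\seq{u & p(u,v,1/2) & u}$ of length $1$ when the other endpoint is a leaf, and by $\seq{u & v & u}$ of length $2$ otherwise. If $T_\delta$ stops at neither endpoint of $uv$ (case (P3)), it peeks into two adjacent edges from vertices $x \in N_G(v)$ and $y \in N_G(u)$ at depths summing to at least $3 - 2\delta$, and I extend these peek-ins to full out-and-back segments $\seq{x & v & x}$ and $\seq{y & u & y}$ of total length $4$. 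By \cref{char1stop} and \cref{char2stops}, the resulting tour $T_{1/2}$ stops at every non-leaf vertex and covers every leaf by a segment of length at least $1$, hence is a $\frac{1}{2}$-tour.

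For the length bound, I analyze the ratio of new-to-old length contributed by each type of replacement. The leaf case converts a peek-in of cost $2\lambda \geq 2(1-\delta)$ into a segment of length $1$, giving a local ratio of at most $\frac{1}{2(1-\delta)}$. The (P3) replacement converts a contribution of at least $2(3 - 2\delta)$ into $4$, giving a ratio of at most $\frac{2}{3 - 2\delta}$. The remaining non-leaf (P2) replacements can be shown to fall within the same two bounds by combining the budget of the peek-in with that of the stop at the already-visited endpoint. A direct calculation shows that $\max\bigl\{\frac{1}{2(1-\delta)}, \frac{2}{3 - 2\delta}\bigr\} = \frac{1}{2(1-\delta)}$ for $\delta \in (1/2, 33/40)$, which yields $\len(T_{1/2}) \leq \frac{1}{2(1-\delta)} \len(T_\delta)$.

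The main obstacle I anticipate is the charging argument needed to make these local bounds add up globally: each piece of $T_\delta$ must be used to pay for exactly one modification, avoiding double-counting in situations where a single peek-in could simultaneously be the nearest part of $T_\delta$ to multiple edges of $G$. A careful edge-by-edge accounting, grounded in the length inequalities from the characterization lemmas, should overcome this difficulty.
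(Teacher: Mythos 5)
Your overall strategy is the same as the paper's (turn $T_\delta$ into a $\frac12$-tour by locally replacing peek-in segments so that every non-leaf vertex becomes a stopping point and every leaf edge gets a depth-$\frac12$ peek, then bound the ratio by $\max\{\tfrac{1}{2(1-\delta)},\tfrac{2}{3-2\delta}\}=\tfrac{1}{2(1-\delta)}$), but there is a genuine gap: you only normalize $T_\delta$ via \cref{lem:tournice}, whereas the paper's proof runs through the discretization result (\cref{lem:discretization:gt:one}). That lemma lets one assume every non-integral stop of $T_\delta$ is a U-turn segment $u\,p(u,v,\lambda)\,u$ with $\lambda\in\{1-\delta,\,2-2\delta,\,\tfrac32-\delta\}$, and then the whole argument becomes a per-segment rule (position $1-\delta$ is replaced by a midpoint peek of length $1$; positions $2-2\delta$ and $\tfrac32-\delta$ by a full traversal of length $2$): each segment is modified exactly once, each modification has ratio at most $\tfrac{1}{2(1-\delta)}$, and the coverage check via \cref{nostopchar}, \cref{char1stop}, \cref{char2stops} uses exactly these three values (e.g.\ in your case (P3) the two depths must both equal $\tfrac32-\delta$, so both peeks become traversals and both endpoints get stopped). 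No charging argument across edges is needed at all.

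Without discretization, the difficulty you flag at the end is not just bookkeeping; the replacements as you state them can violate the bound. Niceness allows peeks at arbitrary positions $\lambda\in(0,1)$, so in your non-leaf (P2) case the peek into $uv$ from $u$ may have tiny depth $\lambda_1\in(0,1-\delta)$ with a companion peek of depth $\lambda_2\geq 2-2\delta-\lambda_1$ into a neighboring edge: replacing the shallow peek by $u\,v\,u$ has unbounded local ratio, and the fix of "combining budgets" fails in the tight case $\lambda_1+\lambda_2=2-2\delta$ with $\lambda_1>0$ (new length $2+2\lambda_1$ versus allowed $\tfrac{2\lambda_1+2\lambda_2}{2(1-\delta)}=2$), unless you additionally delete the shallow peek and argue this is harmless — a step you never take. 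Moreover, because your classification is per edge of $G$ while the modifications act on segments of $T_\delta$, the same peek segment can be demanded to play different roles for different edges, so the replacement rule is not even well defined as written. Importing \cref{lem:discretization:gt:one} (or proving an equivalent normalization of the stopping positions) is the missing ingredient that collapses all of these issues; with it, your sketch essentially becomes the paper's proof.
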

\begin{proof}
If $\alpha(T_\delta)\leq 2$, we obtain by Lemmas~\ref{nostopchar} and~\ref{char1stop} that $|V(G)|\leq 2$. In that case, there clearly exists a $\frac12$-tour of length 0 in $G$ and so the statement follows. We may hence suppose that $\alpha(T_\delta)\geq 3$.
Therefore, by Lemma~\ref{lem:discretization:gt:one}, we may suppose that $T_\delta$ is nice and, as $\floor{2 \delta}=1$, that for all non-integral stopping points $p$ of $T_\delta$, there are some $uv\in E(G)$ and $\lambda \in \{1-\delta,2-2\delta,3/2-\delta\}$ such that $p=p(u,v,\lambda)$ and $T_\delta$ contains the segment~$\seq{u&p&u}$.
 We obtain $T_{1/2}$ from $T_\delta$ by replacing every segment of $T_\delta$ of the form $up(u,v,1-\delta)u$ for some $u,v \in V(G)$ by $up(u,v,\frac{1}{2})u$ and replacing every segment of $T_\delta$ of the form $up(u,v,\lambda)u$ for some $u,v \in V(G)$ and some $\lambda \in \{2-2\delta,\frac{3}{2}-\delta\}$by $uvu$. Clearly, $T_{1/2}$ is a tour in $G$ and $\len(T_{1/2})\leq \frac{1}{2(1-\delta)}\len (T_\delta)$. We still need to prove that $T_{1/2}$ is a $\frac12$-tour in $G$. First observe that, as $T_\delta$ is nice, so is $T_{1/2}$. Let $uv \in E(G)$. If $T_\delta$ stops at both $u$ and $v$, then $T_{1/2}$ stops at both $u$ and $v$, and so $uv$ is covered by $T_{1/2}$ by Lemma~\ref{char2stops}. Next suppose that $T_\delta$ stops at exactly one of $u$ and $v$, say $u$. By Lemma~\ref{char1stop}, one of Lemma~\ref{char1stop} $(i)$, and $(ii)$ holds for $T_\delta$. If Lemma~\ref{char1stop}$(ii)$  holds, then $T_{1/2}$ stops at $u$ and $p(u,v,\lambda)$ for some $\lambda \in \{\frac{1}{2},1\}.$ It follows from Lemmas~\ref{char1stop} and~\ref{char2stops} that $uv$ is covered by $T_{1/2}$. If Lemma~\ref{char1stop}$(i)$ holds and $(ii)$ does not hold, we obtain that $T_\delta$ contains the segment~$\seq{w&p(w,v,\lambda)&w}$ for some $w \in N_G(v)$ and some $\lambda \in \{2-2\delta,3/2-\delta\}$. By construction, it follows that $T_{1/2}$ stops at $u$ and $v$ and so $uv$ is covered by $T_{1/2}$ by Lemma~\ref{char2stops}. Finally, suppose that $T_\delta$ stops at none of $u$ and $v$. By Lemma~\ref{nostopchar}, there are stopping points $p(u',u,\lambda_u)$ and $p(v',v,\lambda_v)$ of $T_\delta$ with $\lambda_u+\lambda_v \geq 3-2 \delta$. By assumption, we obtain that $\lambda_u=\lambda_v=\frac{3}{2}-\delta$. It follows by construction that $T_{1/2}$ stops at $u$ and $v$. We obtain by Lemma~\ref{char2stops} that $uv$ is covered by $T_{1/2}$. It follows that $T_{1/2}$ is a $\frac12$-tour in $G$. 
\end{proof}
We are now ready to prove the main result of this section, which is Theorem~\ref{thm:approx:ub:half:threequarters}. It is restated here for convenience.
\ThmApproxUbHalfThreeQuarters*\label\thisthm
\begin{proof}
We use Theorem~\ref{thm:approx:ub:half} to compute a
$\frac12$-tour~$T_{1/2}$ of $G$ which satisfies
$\len(T_{1/2})\leq \frac{7}{5}\len (T_{1/2}^*)$ where
$T_{1/2}^*$ is a shortest $\frac12$-tour in $G$. Observe that
$T_{1/2}$ is in particular a $\delta$-tour in $G$. In order to prove the
quality of $T_{1/2}$, let $T_\delta^*$ be a shortest $\delta$-tour in $G$.
By Proposition~\ref{dtrtzfzu}, there is a
$\frac12$-tour~$T_{1/2}'$ with $\len(T_{1/2}')\leq \frac{1}{2(1-\delta)}\len (T_\delta^*)$.
We obtain $\len(T_{1/2})\leq 1.4 \cdot \len(T_{1/2}^*)\leq  1.4 \cdot \len(T_{1/2}')\leq
\frac{1.4}{2(1-\delta)}\len (T_\delta^*)$, so the statement follows.
\end{proof}

\subsection[\texorpdfstring{Covering Range $\delta \in [33/40, 3/2 )$}{Covering Range delta in [33/40, 3/2)}]{Covering Range \boldmath$\delta \in \left[33/40, 3/2 \right)$}
\label{sec:appro5}

As mentioned in \cref{section:overview},
in this range, we use a different approach based on a linear program (LP)
first considered by Könemann et al.~\cite{KonemannKPS03} for 
	computing a shortest vertex cover tour, which is a tour such that the vertices this tour stops at form a vertex cover of the input graph.
We first review the LP formulation and then show how it can be used to obtain
approximation algorithms for \deltatourprob in this range.

Given a graph~$G$, we let $\mathcal{F}(G)$ be the set
of subsets of $V(G)$ such that both $G[S]$ and $G[V(G) \setminus S]$
contain at least one edge.
For some $F\subseteq V(G)$, let $C_G(F)$ denote the set of edges in $G$ with exactly one endpoint
in $F$.
The LP is then formulated in~\cite{KonemannKPS03} as follows.

\begin{equation*}
\label{lp:toc} \tag{1}
\boxed{\begin{array}{ll@{}ll}
\text{Minimize} & \displaystyle\sum\limits_{e \in E(G)} z_e &\\
\text{subject to}& \displaystyle\sum\limits_{e \in C_G(F)}z_e\geq 2 &\text{ for all $F \in \mathcal{F}(G)$}\text{ and }\\
0 \leq z_e \leq 2 &\text{ for all $e \in E(G)$.}
\end{array}}
\end{equation*}

For some graph~$G$, we denote by $\optlp(G)$ the optimum value of~\ref{lp:toc} defined with respect to $G$. We heavily rely on
\Cref{cor:koenemann_ovrvw} which was proven by Könemann et al.~in \cite{KonemannKPS03}, which we restate below for convenience.

\KoenemannCor*\label\thisthm

To show that the vertex cover tour computed via \Cref{cor:koenemann_ovrvw}
yields a $3$-approximation for $1$-\tour,
the main observation is that for a given connected graph~$G$, we have that the length of any 1-tour in $G$ is at least $\optlp(G)$.
More precisely, we prove the following lemma.

\begin{lemma}
\label{lplowerbound}
Let $G$ be a connected graph and $T_1$ a 1-tour in $G$. Then $\len(T_1)\geq\optlp(G)$.
\end{lemma}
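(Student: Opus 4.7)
My plan is to construct, from any nice $1$-tour $T_1$ in $G$, a feasible solution $z=(z_e)_{e\in E(G)}$ of the LP~\eqref{lp:toc} of total value at most $\len(T_1)$, following the sketch in \cref{section:overview}. For each edge $e$, let $\Lambda_e$ be the total length of the segments of $T_1$ whose edge support is $e$, and set $z_e=\min(2,\Lambda_e)$. Since every walk-segment is contained in a single edge, $\sum_e\Lambda_e=\len(T_1)$. After applying \cref{lem:tournice} to reduce to the case in which $T_1$ is nice, the niceness conditions (Propositions~\ref{atmost1}, \ref{atmost2} and~\ref{edgetwice}) force $\Lambda_e\le 2$, so in fact $z_e=\Lambda_e$, the box constraints $0\le z_e\le 2$ hold trivially, and $\sum_e z_e=\len(T_1)$. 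The whole task therefore reduces to verifying the cut inequalities $\sum_{e\in C_G(F)}\Lambda_e\ge 2$ for every $F\in\mathcal{F}(G)$.

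If \cref{lem:tournice} leaves us with a tour having at most two stopping points, then $T_1$ lies inside a single edge $e_0$; since $G[F]$ and $G[V(G)\setminus F]$ each contain an edge, a short distance argument (using that the midpoint of an edge is at distance at least $3/2$ from any point not on an edge incident to it) forces $e_0$ itself to be a crossing edge of $F$ and to be traversed twice, giving $\Lambda_{e_0}\ge 2$.

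For a nice $T_1$, I split on the set $V_T\subseteq V(G)$ of vertex stops of $T_1$. When $V_T$ meets both $F$ and $V(G)\setminus F$, I apply the classical Eulerian-cut argument to the multigraph $H$ of full traversals of $T_1$ on $V_T$: $H$ is connected (for $|V_T|\ge 2$, niceness rules out vertices visited only through peeks, since such a vertex would trap the tour there) and every vertex of $H$ has even degree, so the number of edges of $H$ across $C_G(F)$ is a positive even integer. Each of these edges corresponds to a full traversal contributing $1$ to the relevant $\Lambda_e$, whence $\sum_{e\in C_G(F)}\Lambda_e\ge 2$.

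The main obstacle is the case $V_T\subseteq F$ (the symmetric case being analogous), where no full traversal crosses the cut and the bound must be extracted from peek depths alone. Fix an edge $u'v'\in G[V(G)\setminus F]$; invoking the characterization results of \cref{sec:char}, the only segments of $T_1$ able to cover interior points of $u'v'$ are peeks from an $F$-vertex into a crossing edge incident to $u'$ or $v'$. A peek from $x\in F$ along $xu'$ of depth $\lambda$ covers exactly the points $p(u',v',\mu)$ with $\mu\le\lambda$, and a peek from $y\in F$ along $yv'$ of depth $\lambda'$ covers those with $\mu\ge 1-\lambda'$; full coverage of $u'v'$ therefore forces $\lambda+\lambda'\ge 1$, and these two peeks together contribute $2\lambda+2\lambda'\ge 2$ to $\sum_{e\in C_G(F)}\Lambda_e$. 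Combining the cases establishes feasibility of $z$ and hence $\optlp(G)\le\sum_e z_e\le\len(T_1)$, as required.
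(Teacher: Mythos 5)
Your proposal is correct and takes essentially the same route as the paper's proof: both exhibit $z_e=\min(2,\Lambda_e)$ as a feasible LP solution of value at most $\len(T_1)$, verifying each cut constraint by a parity/connectivity argument on crossing traversals when the vertex stops meet both sides of $F$, and by the peek-depth inequality $\lambda+\lambda'\ge 1$ coming from the coverage characterization when all vertex stops lie on one side. The remaining differences are cosmetic: you make the Eulerian multigraph explicit and treat the at-most-two-stop case by a direct distance argument, where the paper instead notes such a tour is $uvu$ with $\{u,v\}$ dominating and plugs in $z_{uv}=2$.
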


\begin{proof}
Let $T_1 = (p_0, \dots, p_k)$. If $k=0$, we obtain that $G$ contains a single vertex that dominates $V(G)$. In that case, the all-zeros vector is a feasible solution for ~(\ref{lp:toc}) and so the statement clearly follows. If $k=2$ and $G$ does not contain a vertex that dominates $V(G)$, as $T_1$ is a 1-tour, we obtain that $T_1=uvu$ for an edge $uv \in E(G)$ such that $\{u,v\}$ is a dominating set of $G$. It follows that the vector defined by $x_{uv}=2$ and $x_e=0$ for all $e \in E(G)-uv$ is a feasible solution for ~(\ref{lp:toc}). As $\len(T_1)=2$, the statement follows. 

We may hence suppose that $k\geq 3$ and hence, by Lemma~\ref{lem:tournice},  that $T_1$ is nice.

For every edge $uv \in E(G)$, we define 
\[
	\Lambda_{e} \coloneqq \sum\limits_{i \in [k]\colon P(p_{i-1},p_{i}) \subseteq P(u,v)} 
		d_G(p_i, p_{i+1})\text{.}
\]
\begin{claim}
Let $z \in [0,2]^{E(G)}$ be the vector defined by $z_e = (\min(2, \Lambda_e))$ for all $e \in E(G)$. Then $z$ is feasible for~(\ref{lp:toc}).
\end{claim}
\begin{claimproof}
By construction, we have $z \in [0,2]^{E(G)}$.

Now fix an arbitrary $F \in \mathcal{F}(G)$. We make a distinction between two
cases:

\setcounter{Case}{0}
\begin{Case}
$T_1$ traverses an edge $uv \in C_G(F)$.
\end{Case}

By symmetry, we may suppose that $u \in F$
	and $v \notin F$. Then, as $T_1$ is a
	tour, $T_1$ either traverses $uv$ twice or traverses some edge $u'v' \in C_G(F)-uv$. In the first case, we have $z_{uv}= 2$ and in the second case, we have $\min\{z_{uv},z_{u'v'}\}\geq 1$. In either case, we obtain $\sum\limits_{e \in C_G(F)}z_e\geq 2$.
\begin{Case}
$T_{\deltatour[1]}$ does not traverse any edge of $C_G(F)$.
\end{Case}
We obtain that the set of vertices in $V(G)$ which are stopping points of $T_1$ is fully contained in one of $F$ and $V(G)-F$. Without
	loss of generality, we may suppose that the former is the case.
	Fix some edge $uv \in E(G[V(G) \setminus F])$, which exists
	by definition of $\mathcal{F}$.
	It follows from \cref{nostopchar} that
	$T_1$ stops at points $p(u', u, \lambda_1)$
	and $p(v', v', \lambda_2)$ for some $u' \in N_G(u),
	v' \in N_G(v)$ such that 
	$\lambda_1 + \lambda_2 \geq 1$. By $k\geq 2$ and assumption, we obtain that $\{u',v'\}\subseteq F$ and hence $\{uu',vv'\}\subseteq C_G(F)$.
	Thus, $\Lambda_{u' u} + \Lambda_{v' v}  \geq 2$,
	so 
	$\sum\limits_{e \in C_G(F)}{z_e} \geq
	\min(2, \Lambda_{u' u}) + \min(2, \Lambda_{v' v}) \geq 2$.
\end{claimproof}

Finally, observe that
$\len(T_1) =
	\sum_{i \in [k]}{\dist_G(p_{i-1}, p_{i})} \geq
	\sum_{e \in C_G(F)}{\Lambda_e} \geq \sum_{e \in C_G(F)}{z_e}\geq \optlp(G)$.
\end{proof}

We are now ready to prove our result for $\delta=1$. This result will be the crucial ingredient for the proof of Theorems~\ref{thm:approx:ub:one:threehalves} and~\ref{thm:approx:ub:threequarters:one}.

\begin{lemma}
\label{lem:one_tour_approx}
There is a polynomial-time $3$-approximation algorithm for $1$-\tour.
\end{lemma}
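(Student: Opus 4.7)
The plan is to combine the two ingredients already established, \Cref{cor:koenemann_ovrvw} and \Cref{lplowerbound}, in a direct way. First I would handle the trivial corner case $|V(G)|=1$ separately: the one-point tour is a $1$-tour of length $0$ and is trivially optimal. For the main case, I would invoke \Cref{cor:koenemann_ovrvw} on the input graph $G$ to compute, in polynomial time, a vertex cover tour $T$ satisfying $\len(T)\le 3\cdot \optlp(G)$; this is the algorithm the lemma asserts to exist.

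Next I would verify that the object returned is a valid $1$-tour. The argument is short: by definition of a vertex cover tour, the set $V_T$ of vertices at which $T$ stops is a vertex cover of $G$, so every edge $uv \in E(G)$ has at least one endpoint, say $u$, in $V_T$. For any point $p = p(u,v,\lambda) \in P(G)$ with $\lambda\in[0,1]$ we then have $\dist_G(p,T)\le \dist_G(p,u)=\lambda\le 1$, hence $T$ is a $1$-tour.

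Finally, I would bound the approximation ratio by chaining the inequalities. Let $\opttour[1]$ denote the length of a shortest $1$-tour in $G$ and let $T_1^*$ be such an optimum. \Cref{lplowerbound} applied to $T_1^*$ yields $\optlp(G)\le \len(T_1^*)=\opttour[1]$. Combining with the guarantee from \Cref{cor:koenemann_ovrvw} gives
\[
\len(T)\;\le\; 3\cdot \optlp(G)\;\le\; 3\cdot \opttour[1],
\]
which is exactly the desired $3$-approximation.

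There is essentially no hard step here, since the conceptual work is already encapsulated in the two cited results. The only non-triviality to watch out for is making sure the argument that a vertex cover tour is automatically a $1$-tour is stated explicitly, and handling the degenerate single-vertex case (where $\mathcal{F}(G)=\emptyset$ and the LP lower bound $\optlp(G)=0$ is matched by the trivial tour) so that \Cref{cor:koenemann_ovrvw} can be applied in the remaining cases without ambiguity.
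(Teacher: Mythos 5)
Your proposal is correct and follows essentially the same route as the paper: compute a vertex cover tour via \Cref{cor:koenemann_ovrvw}, observe it is a $1$-tour, and combine with \Cref{lplowerbound} to get $\len(T)\le 3\,\optlp(G)\le 3\,\opttour[1]$. The only cosmetic difference is that you verify the $1$-tour property directly from the vertex-cover condition (every point of an edge is within distance $1$ of a stopped endpoint), whereas the paper invokes its characterization lemmas after making the tour nice; both are fine.
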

\begin{proof}
Let a connected graph~$G$ be given. By \cref{cor:koenemann_ovrvw}, in polynomial time, we can compute a vertex cover tour~$T$ of $G$ that satisfies $\len(T)\leq 3 \optlp(G)$. Clearly, we may suppose that $T$ is nice. It follows directly from Lemmas~\ref{char1stop} and~\ref{char2stops} that $T$ is a 1-tour in $G$. Moreover, by Lemma~\ref{lplowerbound}, we have that $\optlp(G)\leq \opttour[1]$, where $\opttour[1]$ is the length of a shortest $1$-tour in $G$. This yields $\len(T)\leq 3 \optlp(G)\leq 3 \opttour[1]$.
\end{proof}

In order to generalize \cref{lem:one_tour_approx} for 
$\delta > 1$, we prove the following relation between shortest 1-tours and shortest $\delta$-tours for more general $\delta$.

\begin{lemma}
\label{lem:make1} 
Let $G$ be a connected graph, $\delta \in (1, 3/2)$ be a real,
and $T_\delta$ be a nice shortest \deltatour in a $G$ with $\alpha(T_\delta) \geq 3$.
Then, there is a \deltatour[1] $T_1$ of $G$ of length
$\len(T_1) \leq \len(T_\delta)/(3-2\delta)$.
\end{lemma}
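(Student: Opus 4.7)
The plan is to transform $T_\delta$ into $T_1$ by deepening each of its ``peek'' segments, exploiting the discretization lemma. Since $T_\delta$ is a shortest $\delta$-tour and $\alpha(T_\delta)\geq 3$, I may replace $T_\delta$ by a nice shortest $\delta$-tour of the same length satisfying the conclusion of \cref{lem:discretization:gt:one}: every non-integral stopping point is of the form $p(u,v,\lambda)$ with $\lambda \in \{(3-2\delta)/2,\; 3-2\delta,\; 2-\delta\}$ and appears as part of a segment $\seq{u&p(u,v,\lambda)&u}$ in $T_\delta$. The key observation driving the construction is that for $\delta > 1$ every value of $\lambda$ above is at most $2-\delta < 1$, leaving room to extend.

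I would construct $T_1$ by replacing each such peek segment $\seq{u&p(u,v,\lambda)&u}$ by $\seq{u&p(u,v,\lambda^{\ast})&u}$, where $\lambda^{\ast} \coloneqq \min\bigl(1,\ \lambda/(3-2\delta)\bigr)$. When $\lambda^{\ast}=1$ this degenerates into $\seq{u&v&u}$, which traverses the edge and adds $v$ to the vertex-stop set of $T_1$. The portions of $T_\delta$ that are not peeks are left untouched. For the length bound, the per-peek ratio $\lambda^{\ast}/\lambda$ equals $1/(3-2\delta)$ when $\lambda \in \{(3-2\delta)/2,\ 3-2\delta\}$ and equals $1/(2-\delta) \leq 1/(3-2\delta)$ when $\lambda = 2-\delta$ (using $2-\delta \geq 3-2\delta$ for $\delta \geq 1$). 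Since $1/(3-2\delta) \geq 1$, the unchanged parts also scale trivially, giving $\len(T_1) \leq \len(T_\delta)/(3-2\delta)$.

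The main obstacle is showing $T_1$ is a $1$-tour; I would do this by case analysis on an edge $xy \in E(G)$ using the characterizations of Section~\ref{sec:char}. If $T_1$ stops at both endpoints, \cref{char2stops}(ii) applies since $1 \geq 1/2$; if at exactly one, every point on $xy$ is within distance $1$ of that vertex. The interesting case is when $T_1$ stops at neither $x$ nor $y$, in which case $T_\delta$ also stops at neither (as $V_{T_\delta}\subseteq V_{T_1}$). Then \cref{nostopchar}(i) (case (ii) is excluded, since a nice tour stopping strictly inside $xy$ must stop at $x$ or $y$) supplies stopping points $p_i = p(u_i,v_i,\lambda_i)$ with $\lambda_1+\lambda_2 \geq d + (3-2\delta)$ where $d \coloneqq \dist_G(x,v_1)+\dist_G(y,v_2)$, and I need $\lambda_1^{\ast}+\lambda_2^{\ast} \geq d + 1$.

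The verification splits into three subcases. When no $\lambda_i$ is capped, $\lambda_1^{\ast}+\lambda_2^{\ast} = (\lambda_1+\lambda_2)/(3-2\delta) \geq d/(3-2\delta) + 1 \geq d+1$. When exactly one is capped (say $\lambda_1 = 2-\delta$, so $v_1 \notin \{x,y\}$ because otherwise $T_1$ would stop at $x$ or $y$, contradicting Case~C), one gets $d \geq 1$ and the bound $\lambda_2^{\ast} \geq (d+1-\delta)/(3-2\delta) \geq d$ reduces, for $\delta > 1$, to $d \geq 1/2$, which holds. When both are capped, the discretization forces $\lambda_1+\lambda_2 \leq 4-2\delta$, giving $d \leq 1$; but at least one of $v_1,v_2$ being in $\{x,y\}$ must occur in that regime, which again contradicts Case~C, so the subcase is vacuous. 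This completes the verification, and the two bounds together yield the lemma.
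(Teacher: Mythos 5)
Your construction coincides with the paper's own proof: the rule $\lambda^{\ast}=\min\bigl(1,\lambda/(3-2\delta)\bigr)$ sends $\tfrac{3}{2}-\delta$ to $\tfrac{1}{2}$ and both $3-2\delta$ and $2-\delta$ to a full traversal of the edge, which is exactly the paper's replacement scheme, and your length bound and the coverage check via \cref{nostopchar} follow the same lines (the paper organizes the case where neither endpoint is stopped by the value of $\dist_G(x,v_1)+\dist_G(y,v_2)\in\{0,1\}$ rather than by which peeks are capped, under which your capped subcases are seen to be vacuous because $T_1$ then stops at an endpoint). So the proposal is correct and essentially the same argument as in the paper.
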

\begin{proof}
By \cref{lem:discretization:gt:one}, and
since $\floor{2\delta} = 2$, we may assume that for all non-integral stopping points
$p$ of $T_\delta$, there are some $uv \in E(G)$ and $\lambda \in \{\frac{3}{2} - \delta, 3-2\delta, 2-\delta\}$ such that $p=p(u,v,\lambda)$ and $T_\delta$ contains the segment
$\seq{u&p&u}$.
We obtain $T_1$ from $T_\delta$ by replacing every segment of $T_\delta$ of
the form $\seq{u&p(u, v, \frac{3}{2}-\delta)&u}$ for some $uv \in E(G)$ by
$\seq{u&p(u, v, \frac{1}{2})&u}$ and replacing every segment of $T_\delta$
of the form $\seq{u&p(u, v, \lambda)&u}$ where 
$\lambda \in \{3-2\delta, 2-\delta\}$ by $\seq{u&v&u}$.
Note that
$\len(T_1) \leq
	\max\{1/(3-2\delta), 1/(2-\delta)\} \len(T_\delta) =
	\len(T_\delta)/(3-2\delta)$,
where the last inequality is due to $\delta > 1$.

Clearly, we have that $T_1$ is a tour. It remains to show that $T_1$ is a \deltatour[1].
Fix an arbitrary edge $uv \in E(G)$.
If $T_\delta$ stops at $u$ or $v$, then so does $T_1$ and hence $uv$ is covered.
Suppose then that $T_\delta$ stops at neither $u$ nor $v$.
By \cref{nostopchar}, as $\alpha(T_\delta) \geq 3$,
for all $i \in \{1,2\}$, there are
stopping points $p_i = p(u_i, v_i, \lambda_i)$ on $T_\delta$ with $\lambda_i
\in [0, 1)$ and $\lambda_1 + \lambda_2 \geq \dist_G(u, v_1) + \dist_G(v, v_2) +
3 - 2 \delta$.
Observe that $\dist_G(u, v_1) + \dist_G(v, v_2) \leq \lambda_1 + \lambda_2 - 3 + 2\delta < 2$ as
$\delta < \frac{3}{2}$.
By symmetry, we may suppose that one of the following two cases occurs:
\begin{enumerate}
\item $\dist_G(u, v_1) = \dist_G(v, v_2) = 0$, so $u = v_1$ and $v = v_2$.
	In this case, $\lambda_1 + \lambda_2 \geq 3-2\delta$. Since
	$\lambda_i \in \{\frac{3}{2} - \delta, 3-2\delta, 2-\delta\}$,
	$T_1$ stops at points $p_1' = p(u_1, u, \lambda_1')$ and
	$p_2' = p(u_2, v, \lambda_2')$, where
	$(\lambda_1', \lambda_2') \in \{
	(\frac{1}{2}, \frac{1}{2}), 
	(0, 1), 
	(1, 0), 
	(1, 1)\}$. If $\lambda_1' = \lambda_2' = \frac{1}{2}$, then
	$uv$ is covered by \cref{nostopchar} (i) as $\lambda_1' + \lambda_2' \geq 1$.
	In all other cases, $T_1$ stops at $u$ or $v$, covering $uv$ as
	$\delta > 1$.
\item $\dist_G(u, v_1) = 0$ and $\dist_G(v, v_2) = 1$, so $u = v_1$. In this
	case, $\lambda_1 + \lambda_2 \geq 4-2\delta$.
	Since $\max{\{\frac{3}{2} - \delta, 3-2\delta, 2-\delta\}} = 2-\delta$, we have
	$\lambda_1 = \lambda_2 = 2-\delta$. Thus, $T_1$ stops at
	$u$, so $T_1$ covers $uv$.
\end{enumerate}

It follows that $T_1$ covers $uv$ and is, hence, a \deltatour[1] of $G$.
\end{proof}

This yields the following theorem.

\ThmApproxUbOneThreeHalves*\label\thisthm
\begin{proof}
	Given a connected graph~$G$, let $T_\delta$ be a 
	shortest \deltatour of $G$ all of whose stopping points are in $S_\delta$ whose existence is implied by \cref{lem:discretization:gt:one}.

	If $\alpha(T_\delta) \leq 2$, then there is an edge $uv \in E(G)$ and
	points $p_0 = p(u, v, \lambda_0)$ and $p_1 = p(u, v, \lambda_1)$
	with $\lambda_0, \lambda_1 \in S_\delta$ such that the tour
	$T_\delta = (p_0, p_1, p_0)$. Hence, we can check if such a tour exists and, if this is the case, compute one in $\Oh(|E| \cdot |S_\delta|^2) = \Oh(|E|)$.

	We may hence assume that $\alpha(T_\delta) \geq 3$. By \cref{lem:one_tour_approx}, 
	we can hence compute and output a \deltatour[1] $T$ 
	such that $\len(T) \leq 3 \cdot \opttour[1]$, where $\opttour[1]$ is
	the length of a shortest \deltatour[1] in $G$.
	By \cref{lem:make1}, we have that $\opttour[1] \leq \len(T_\delta)/(3-2\delta)$.
	As $\delta \geq 1$, clearly, $T$ is a $\delta$-tour, which is of length
	$\len(T) \leq 3/(3-2\delta) \cdot \len(T_\delta)$.
\end{proof}

For $\delta \in [33/40, 1)$, the idea, roughly speaking, is to use a
\deltatour[1] computed by \cref{lem:one_tour_approx} and add a collection of
short tour segments to complete it into a \deltatour for the desired
$\delta \in [30/40,1)$.
The following result will be used to bound the cost of these detours in comparison
to the cost of the whole tour.

\begin{lemma}\label{lem:dreivbound}
Let $\delta \in [33/40,1)$, let $G$ be a connected graph, let $V_1$ be the vertices of degree 1 in $G$, and
let $T$ be a \deltatour in $G$ with $\alpha(T)\geq 3$.
Then $\len(T) \geq 2(1-\delta)\abs{V_1} + 4(1-\delta)(\abs{V(G)}-\abs{V_1})$.
\end{lemma}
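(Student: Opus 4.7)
The plan is to distribute $\len(T)$ among the vertices of $G$ in pairwise disjoint pieces, collecting at least $2(1-\delta)$ at each leaf and at least $4(1-\delta)$ at each non-leaf; summing will then yield the claimed bound. Applying \cref{lem:tournice} together with the hypothesis $\alpha(T)\geq 3$, I may assume that $T$ is nice; the alternative outcome $\alpha\leq 2$ would force the tour to lie inside a single edge and hence $|V(G)|\leq 2$, a case I verify directly.

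With $T$ nice, the characterizations in \cref{nostopchar}, \cref{char1stop}, and \cref{char2stops} describe exactly how $T$ covers each edge, and the interaction of $T$ with any given edge falls into one of the four classes depicted in \cref{fig:fourbehaviors}. For a leaf $v\in V_1$ with unique neighbor $u$, the requirement that $T$ cover $uv$ forces either a two-fold traversal of $uv$ contributing length $2$, or a peek-in segment $\seq{u&p(u,v,\lambda)&u}$ with $\lambda\geq 1-\delta$ (as in case (ii) of \cref{char1stop} or case (iii) of \cref{char2stops}), contributing length at least $2(1-\delta)$. I charge the entire tour activity on $uv$ to the leaf $v$, which thus receives at least $2(1-\delta)$.

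For a non-leaf $v$, I split according to how $T$ visits the vicinity of $v$. If some edge incident to $v$ is traversed by $T$, that traversal alone has length $1$, and $4(1-\delta)\leq 7/10<1$ for $\delta\geq 33/40$, so this already supplies what $v$ needs. Otherwise no incident edge is traversed, and since $\deg_G(v)\geq 2$ I apply \cref{nostopchar} and \cref{char1stop} to two distinct incident edges $vw_1,vw_2$; each yields a non-vertex stop of $T$ attributable to $v$ whose associated peek-in segment contributes positive length, and I combine the two contributions to obtain at least $2(2-\delta)\geq 4(1-\delta)$, using that $2(2-\delta)-4(1-\delta)=2\delta\geq 0$.

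The main technical hurdle will be the bookkeeping: ensuring that the segments charged to different vertices form pairwise disjoint subsegments of $T$, so that the per-vertex lower bounds sum to a genuine lower bound on $\len(T)$. The delicate edges are those joining a leaf to a non-leaf, where a single peek-in could a priori be claimed by both endpoints. My plan is to assign all tour activity on a leaf-incident edge to the leaf, and to draw the non-leaf's $4(1-\delta)$ quota from its traversals and from peek-ins on edges whose other endpoint is not a leaf, using the numerical slack from $\delta\in[33/40,1)$ to verify that each case still yields the required amount.
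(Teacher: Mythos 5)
Your high-level plan is the same as the paper's: make $T$ nice, give each leaf a target of $2(1-\delta)$ and each non-leaf a target of $4(1-\delta)$, extract these amounts via \cref{nostopchar}, \cref{char1stop}, \cref{char2stops}, and use $4(1-\delta)\le 1$ to absorb traversals. However, the step you yourself flag as the main hurdle---making the charged pieces pairwise disjoint---is exactly where the proposal breaks, and the concrete assignment you sketch does not work. The paper resolves this with a directional rule: it defines $\Lambda_v$ by crediting every traversal of an edge to the endpoint at which that traversal \emph{ends}, and every peek segment $u\,p(u,v,\lambda)\,u$ (of length $2\lambda$) to the \emph{far} endpoint $v$; then $\sum_{v}\Lambda_v=\len(T)$ holds by construction, and a non-leaf can still be paid by a traversal of a leaf-incident edge. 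Your rule ``assign all tour activity on a leaf-incident edge to the leaf'' destroys precisely that credit: take the path on vertices $a,v,b$ and the nice $\delta$-tour with stopping sequence $v,a,v,b,v$ (length $4$). The claimed inequality holds there, and the paper's charging certifies it ($\Lambda_v=2$ from the two traversals arriving at $v$), but under your assignment all four units of length go to the leaves, $v$ has no non-leaf incident edge to draw from, and its quota $4(1-\delta)$ is unfunded. So the bookkeeping cannot be deferred; without an assignment of the paper's kind the per-vertex bounds do not sum to a lower bound on $\len(T)$.

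There is a second gap in the non-leaf case analysis. You split on whether some edge incident to $v$ is traversed, and in the untraversed case you claim that \cref{nostopchar} and \cref{char1stop} applied to two incident edges produce two non-vertex stops ``attributable to $v$''. That is not what these lemmas give when $T$ stops at $v$: if $T$ stops at $v$ and traverses no incident edge, then by niceness $T$ meets the neighborhood of $v$ only through peeks out of $v$, so the forced non-vertex stops sit at depth at least $1-\delta$ toward the \emph{neighbors} and are exactly what covers them (and when both ends of an edge are stopped at, condition (ii) of \cref{char2stops} already covers the edge with no extra stop, since $\delta\ge 1/2$); nothing is left over to charge to $v$. The paper instead splits on whether $T$ stops at $v$: if it does, it extracts a traversal ending at $v$ (which its rule charges to $v$); only if it does not stop at $v$ does it invoke \cref{nostopchar}/\cref{char1stop}, and even then it charges to $v$ only peeks pointing toward $v$, via $\Lambda_v\ge 2\lambda_2\ge 2(3-2\delta-\lambda_1)\ge 4(1-\delta)$. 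Finally, the degenerate branch of \cref{lem:tournice} needs more than ``verify directly'': if the nice-ified tour has at most two stops one is indeed reduced to $V(G)=\{u,v\}$, but a three-stop tour of arbitrarily small length inside the single edge is already a $\delta$-tour for $\delta\ge 33/40$, so that case is not settled by inspection and must be excluded or handled by a separate argument.
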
 
\begin{proof}
By Lemma~\ref{lem:tournice}, we may suppose that $T$ is nice. For every $v \in V(G)$, we define a real $\Lambda_v$ as follows.
We initialize $\Lambda_v$ by $0$.
Then, for every tour segment of $T$ of the form $\seq{u&v}$ for some $u \in N_G(v)$,
we add $1$ to $\Lambda_v$ and for every segment of the form
$\seq{u&p(u,v,\lambda)&u}$ for some $u \in N_G(v)$, we add $2 \lambda$ to
$\Lambda_v$. We do this for every $v \in V(G)$.

 \begin{claim}\label{clm:zvgross}
We have $\Lambda_v \geq 2 (1-\delta)$ for all $v \in V_1$ and $\Lambda_v \geq 4 (1-\delta)$ for all $v \in V(G)-V_1$.
\end{claim}
\begin{claimproof}
 Let $v \in V(G)$. If $T$ stops at $v$, then, as $\alpha(T)\geq 3$ and $T$ is nice, we have that $T$ contains the segment~$\seq{u&v}$ for some
 $u \in N_G(v)$. As $\delta \geq \frac{3}{4}$, this yields
 $\Lambda_v \geq 1 \geq 4 (1-\delta)$.
 We may hence suppose that $T$ does not stop at $v$.
 If $v \in V_1$, let $u$ be the unique vertex in $N_G(v)$.
As $T$ is a nice tour with $\alpha(T)\geq 3$, we have that $u \notin V_1$.
 By \cref{char1stop} and \cref{char2stops}, it follows that $T$ contains a segment of the
 form $\seq{u&p(u,v,\lambda)&u}$ for some $\lambda \geq 1-\delta$.
 By construction, we obtain that $\Lambda_v \geq 2\lambda \geq 2(1-\delta)$.
 We may in the following suppose that $v \in V(G) \setminus V_1$.

First suppose that there is some $u \in N_G(v)$ such that $T$ does not stop at $u$. By Lemma~\ref{nostopchar}, as $\alpha(T)\geq 3$ and as $\delta<1$, there are $u' \in N_G(u)$, $v' \in N_G(v)$ and $\lambda_1,\lambda_2 \in [0,1)$ with $\lambda_1+\lambda_2\geq 3-2\delta$ such that $T$ contains the segments~$\seq{u'&p(u',u,\lambda_1)&u'}$ and $\seq{v&p(v',v,\lambda_2)&v'}$. We obtain that $\Lambda_v\geq 2 \lambda_2\geq 2(3-2\delta-\lambda_1)\geq 2(3-2\delta-1)=4(1-\delta)$.
Now suppose that $T$ stops at $u$ for all $u \in N_G(v)$ and for all $u \in N_G(v)$, let $\lambda_u$ be the largest real such that $T$ stops at $p(u,v,\lambda_u)$. By Lemma~\ref{char1stop}, we obtain that one Lemma~\ref{char1stop} $(i)$ and $(ii)$ holds for every $u \in N_G(v)$. If Lemma~\ref{char1stop} $(i)$ holds for $uv$ for some $u \in N_G(v)$, then there is some $u' \in N_G(v)-u$ such that $\lambda_u+\lambda_{u'}\geq 2-2\delta$. We obtain $\Lambda_v\geq 2(\lambda_u+\lambda_v)=4(1-\delta)$. Otherwise, Lemma~\ref{char1stop} $(ii)$ holds for $uv$ for every $u \in N_G(v)$.  As $v \notin V_1$, we obtain $\Lambda_v\geq 2(1-\delta)|N_G(v)|\geq 4(1-\delta)$.

\end{claimproof}
By construction and \cref{clm:zvgross}, we obtain
$\len(T) = \sum_{i \in [k]}\dist_G(p_{i}, p_{i+1}) = \sum_{v \in V(G)} \Lambda_v
 \geq 2(1-\delta)\abs{V_1}+4(1-\delta)(\abs{V(G)}-\abs{V_1})$.
\end{proof}

We can now prove that \deltatourprob admits a $4$-approximation
algorithm for $\delta \in [33/40, 1)$.

\ThmApproxUbThreeQuartersOne*\label\thisthm

\begin{proof}
Let $G$ be a connected graph and let $V_1$ be the set of vertices of degree 1 of $G$. By Corollary~\ref{decide}, we may suppose that $|V(G)|\geq 3$ and hence, that every $\delta$-tour of $G$ is of discrete length at least 3. Let \opttour[1] and \opttour denote the length of a shortest 1-tour and a shortest \deltatour in $G$, respectively.
By \cref{lem:one_tour_approx}, in polynomial time, we can compute a vertex cover tour
$T$ in $G$ that satisfies $\len(T) \leq 3 \cdot \opttour[1]$. By \cref{lem:tournice}, we may suppose that $T$ is nice.
Let $V_0 \subseteq V(G)$ be the set of vertices which are not stopping points of $T$,
let $v_1, \dots, v_q$ be an arbitrary ordering of $V_0$,
and for $i \in [q]$, let $w_i$ be an arbitrary neighbor of $v_i$ in $G$.
As $T$ stops at vertices forming a vertex cover, we obtain that $T$ stops at $w_i$ for all $i \in [q]$.
We now recursively construct a sequence of tours $T^0,\dots,T^q$ in $G$ all of which stop at all stopping points of $V(G) \setminus V_0$.
First we set $T^0=T$. Then, for any $i \in [q]$, if $v_i \in V_1$, we construct
$T_i$ from $T_{i-1}$ by choosing an arbitrary occurrence of $w_i$ in $T$ and
replacing it by the segment~$\seq{w_i&p(w_i,v_i,1-\delta)&w_i}$.
Otherwise, we construct $T_i$ from $T_{i-1}$ by choosing an arbitrary
occurrence of $w_i$ in $T$ and replacing it by the segment
$\seq{w_i&p(w_i,v_i,2(1-\delta))&w_i}$.  We let $T_\delta = T^q$. 
\begin{claim}
$T_\delta$ is a \deltatour in $G$.
\end{claim}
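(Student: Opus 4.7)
The plan is to verify, for every edge $uv \in E(G)$, that $T_\delta$ $\delta$-covers $uv$ by appealing to the characterization lemmas of \cref{sec:char}. Since $T$ is a vertex cover tour (so its stopping vertices form a vertex cover of $G$) and $T_\delta$ retains all stopping points of $T$, every edge has at least one endpoint stopped at by $T_\delta$. Moreover, the only stopping points of $T_\delta \setminus T$ are interior points of edges incident to $V_0$, so a vertex $v \in V_0$ is never stopped at by $T_\delta$. This lets us split into two main cases according to how many endpoints of $uv$ are stopped at.

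If both $u, v \in V(T)$, then $T_\delta$ stops at both endpoints, and \cref{char2stops}(ii) covers $uv$ since $\delta \geq 33/40 \geq 1/2$. If exactly one endpoint lies in $V(T)$, by symmetry say $u \in V(T)$ and $v = v_i \in V_0$, then the detour constructed for $v_i$ introduces a stopping point on an edge incident to $v$. I would distinguish three sub-cases based on whether $v$ is a leaf and on the choice of the neighbor $w_i$:

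\begin{itemize}
\item If $v \in V_1$, then $u$ is the unique neighbor of $v$ so $w_i = u$, and $T_\delta$ stops at $p(u,v,1-\delta)$. Then \cref{char1stop}(ii) covers $uv$.
\item If $v \notin V_1$ and $w_i = u$, then $T_\delta$ stops at $p(u,v,2(1-\delta))$. Using $\delta \in [33/40,1) \subseteq [1/2,1)$, one checks $2(1-\delta) \in [1-\delta,1]$, so \cref{char1stop}(ii) applies.
\item If $v \notin V_1$ and $w_i \neq u$, then $T_\delta$ stops at $u = p(u,v,0)$ and at $p(w_i,v,2(1-\delta))$. Taking $\lambda_1 = 0$ and $\lambda_2 = 2(1-\delta)$, we have $\lambda_1 + \lambda_2 = 2-2\delta$, which is exactly the threshold required by \cref{char1stop}(i).
\end{itemize}

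The remaining case, where neither endpoint is in $V(T)$, cannot occur: it would contradict $V(T)$ being a vertex cover. The argument is essentially a routine case analysis; the only mild subtlety I anticipate is verifying the inclusion $2(1-\delta) \in [1-\delta,1]$ and the threshold inequality $\lambda_1+\lambda_2 \geq 2-2\delta$, both of which rest on the assumption $\delta \in [33/40,1)$ (in fact only $\delta \in [1/2,1)$ is used here).
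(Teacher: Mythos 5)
Your proof is correct and follows essentially the same route as the paper: a per-edge case analysis using \cref{char2stops} when both endpoints are stopped at and \cref{char1stop} (with the detour points $p(\cdot,\cdot,1-\delta)$ or $p(\cdot,\cdot,2(1-\delta))$) when the other endpoint lies in $V_0$, with your split of the non-leaf case into $w_i=u$ versus $w_i\neq u$ just making explicit what the paper leaves implicit. The only cosmetic difference is that the paper first records that $T_\delta$ inherits niceness from $T$, since \cref{char2stops} is stated for nice tours; your application only uses the trivial sufficiency of condition (ii) there, so nothing substantive is missing.
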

\begin{claimproof}
As $T$ is nice and by construction, so is $T_\delta$. Let $xy \in E(G)$. If $T_\delta$ stops at both $x$ and $y$, then $xy$ is covered by $T_\delta$ by \cref{char2stops}. By construction and symmetry, we may hence suppose that $y \in V_0$. If $y \in V_1$, by construction, we obtain that $T_\delta$ contains the segment~$\seq{x&p(x,y,1-\delta)&x}$, so $xy$ is covered by $T_\delta$ by \cref{char1stop}. Otherwise, we obtain by construction that there is some $w \in N_G(y)$ such that $T_\delta$ contains the segment~$\seq{w&p(w,y,2(1-\delta))&w}$. It follows by \cref{char1stop} that $xy$ is covered by $T_\delta$.
\end{claimproof}
From the definition of $T$, the fact that every $\delta$-tour is in particular a $1$-tour, the
construction and \cref{lem:dreivbound}, it follows that

\begin{align*}
\len(T_\delta)&=\len(T)+2(1-\delta)\abs{V_0\cap V_1}+4(1-\delta)\abs{V_0-V_1}\\
	      &\leq 3 \cdot \opttour[1]+2(1-\delta)\abs{ V_1}+4(1-\delta)\abs{V(G)-V_1}\\
	      &\leq 3 \cdot \opttour+\opttour = 4 \cdot \opttour.
\end{align*}
\end{proof}

\subsection[\texorpdfstring{Covering Range $\delta > 3/2$}{Covering Range delta > 3/2}]{Covering Range \boldmath$\delta > 3/2$}\label{sec:appro6}
\label{sec:approx:large_delta}

In this section, we consider the problem of finding approximation algorithms for $\delta$-\tour if $\delta$ is large. We move away from the design of constant-factor approximation
algorithms and instead give approximation algorithms with  
$\polylog(n)$-factor guarantees.
In~\cite{FreiGHHM24}, these results are complemented by a lower bound showing that, assuming $\p \neq \np$, an $o(\log{n})$-approximation algorithm for $\delta$-\tour does not exist, even for fixed $\delta \geq \frac{3}{2}$.

We consider two different settings:
In the first one, we wish to compute a shortest \deltatour for some fixed $\delta \geq \frac{3}{2}$ and in the second one, we let $\delta$ be part of the input. Note that in the latter case, $\delta$ may be linear in 
$n$; thus, approximations factors of the form $\Oh(f(\delta) \log{n})$,
which yield the desired result in the first setting may be worthless in the second setting.

We now give a construction which is the crucial ingredient for the proofs of Theorems~\ref{thm:approx:ub:threehalves:logn} and~\ref{thm:approx:ub:threehalves:lognpthree}.
Namely, for a connected graph~$G$ and a constant $\delta$, we construct an auxiliary graph~$\Gamma(G,\delta)$.
  We then show that there is a close relationship among $\delta$-tours in $G$ and dominating sets in $\Gamma(G,\delta)$. In Section~\ref{sec:appro7}, we exploit this directly by designing an $O(\log n)$-approximation algorithm for arbitrary fixed $\delta$ using an approximation algorithm for computing a smallest dominating set in a given graph with a similar guarantee for the solution quality. While this approach does not work for $\delta$ being part of the input, in Section~\ref{sec:appro8}, we obtain an $O(\log^{3}n)$-approximation for this setting by relying on previous results on a slightly different problem, called {\it Minimum Dominating Tree}. 

We now detail the construction of $\Gamma(G,\delta)$. We fix a connected graph~$G$ and some $\delta>0$. We first need some intermediate definitions. 
Let $S_\delta$ be the set of edge positions defined in \cref{lem:discretization:gt:one}; recall that $|S_\delta| = \Oh(1)$.
Let $P_\delta(G) \coloneqq \{p(u, v, \lambda) \mid
	uv \in E(G), \lambda \in S_\delta\}$.

Further, we define
$Q_\delta(G)$ to consist of $V(G)$ and the set of points $q \in P(G)$ for which there is some $p \in  P_\delta(G))$ such that $\dist_G(p,q)=\delta$.

Now for some $uv \in E(G)$, let $\Lambda^{uv}=\{0,\lambda_1^{uv},\ldots,\lambda_{k^{uv}}^{uv},1\}$ be the collection of reals, in increasing order, such that $p(u,v,\lambda)\in Q_\delta(G)$ if and only if $\lambda \in \Lambda^{uv}$. We now let $\mathcal{I}^{uv}_{G,\delta}$ be the set of segments consisting of $P(uv)$ if $\Lambda^{uv}=\{0,1\}$ and otherwise of the segments~$P(u,p(u,v,\lambda_1^{uv}))$ and $P(p(u,v,\lambda_i^{uv}),p(u,v,\lambda_{i+1}^{uv}))$ for $i \in [k^{uv}-1]$, and $P(p(u,v,\lambda_{k^{uv}}^{uv}),v)$. We further define $\mathcal{I}_{G,\delta}=\bigcup_{uv \in E(G)}\mathcal{I}^{uv}_{G,\delta}$.

We are now ready to give the definition of $\Gamma(G,\delta)$.
We let $V(\Gamma(G,\delta))$ consist of $P_\delta(G)$ and a vertex $x_I$ for every $I \in \mathcal{I}_{G,\delta}$. 
We next let $E(\Gamma(G,\delta))$ contain an edge linking $p$ and $p'$ for all distinct $p,p' \in P_\delta(G)$. 
Finally, we let $E(\Gamma(G,\delta))$ contain an edge linking some $p \in P_\delta(G)$ and $x_I$ for some 
$I=P(q_1,q_2) \in \mathcal{I}_{G,\delta}$ if $\dist_G(p,q_i)<\delta$ holds for some $i \in \{1,2\}$. This finishes 
the description of $\Gamma(G,\delta)$.

In order to algorithmically make use of this construction, we need the following easy result.
\begin{proposition}\label{compgamma}
We can compute $\Gamma(G,\delta)$ in polynomial time given $G$ and $\delta>1$ and we have $|V(\Gamma(G,\delta))|=O(n^4)$.
\end{proposition}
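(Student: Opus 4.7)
The plan is to establish the two claims separately, first bounding the vertex count and then describing a polynomial-time construction; both rely on the same two ingredients, namely a size bound on $P_\delta(G)$ and a size bound on each $\Lambda^{uv}$.

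First I would note that $|S_\delta|\leq 4$ is an absolute constant and $|E(G)|\leq\binom{n}{2}$, so $|P_\delta(G)|\leq 4\binom{n}{2}=O(n^2)$ by definition. Next I would show the key cardinality bound
\[
|\Lambda^{uv}|\leq 2+2|P_\delta(G)|=O(n^2)
\]
for every $uv\in E(G)$. Recalling that $Q_\delta(G)$ consists of $V(G)$ together with all points $q\in P(G)$ for which some $p\in P_\delta(G)$ satisfies $\dist_G(p,q)=\delta$, it suffices to prove that for each fixed $p\in P_\delta(G)$ and each edge $uv$, at most two points of $P(uv)$ lie at distance exactly $\delta$ from $p$. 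For this, I would analyze the function $f_{p,uv}\colon[0,1]\to\mathbb{R}_{\geq 0}$ defined by $f_{p,uv}(\lambda)=\dist_G(p,p(u,v,\lambda))$: if $p$ lies on the edge $uv$ itself, then $f_{p,uv}(\lambda)=|\lambda-\lambda_p|$, which is piecewise linear with two pieces and attains each positive value at most twice; if $p$ is on a different edge, then $f_{p,uv}(\lambda)=\min\{\dist_G(p,u)+\lambda,\dist_G(p,v)+(1-\lambda)\}$, which is piecewise linear with at most two pieces as well. Either way, the preimage of $\delta$ has at most two elements, giving the bound. Consequently $|\mathcal{I}^{uv}_{G,\delta}|\leq |\Lambda^{uv}|-1=O(n^2)$, and summing over the $O(n^2)$ edges yields $|\mathcal{I}_{G,\delta}|=O(n^4)$. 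Combined with $|P_\delta(G)|=O(n^2)$, this gives $|V(\Gamma(G,\delta))|=O(n^4)$.

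For the algorithmic part, I would first compute all-pairs shortest vertex distances in $G$ via Floyd--Warshall. Using these, the distance between any two points $p(u,v,\lambda),p(u',v',\lambda')\in P(G)$ can be evaluated in $O(1)$ time via the standard four-way minimum (with the special case $uv=u'v'$). I would then enumerate $P_\delta(G)$ directly from $E(G)\times S_\delta$, and for each edge $uv$ and each $p\in P_\delta(G)$ solve the piecewise-linear equation $f_{p,uv}(\lambda)=\delta$ in closed form to enumerate the (at most two) candidate positions, thereby listing $\Lambda^{uv}$ and, after sorting, obtaining $\mathcal{I}^{uv}_{G,\delta}$. Finally, the edge set of $\Gamma(G,\delta)$ is produced by (i) adding all pairs inside $P_\delta(G)$ as a clique and (ii) iterating over all $(p,I)$ with $p\in P_\delta(G)$ and $I=P(q_1,q_2)\in\mathcal{I}_{G,\delta}$ and checking in $O(1)$ whether $\dist_G(p,q_i)<\delta$ for some $i$. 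Each phase runs in time polynomial in $n$.

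The main obstacle is the size bound on $\Lambda^{uv}$; everything else is bookkeeping. In particular, one must be careful that the distance function along an edge is indeed piecewise linear with only a constant number of pieces, so that its level set at $\delta$ is finite and of constant size for each source point $p$. Once this is established, the $O(n^4)$ bound and the polynomial runtime both follow by straightforward accounting.
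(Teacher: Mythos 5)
Your proof is correct and follows essentially the same route as the paper: bound $|P_\delta(G)|=O(n^2)$, observe that each pair consisting of a point $p\in P_\delta(G)$ and an edge $uv$ yields at most two points of $P(u,v)$ at distance exactly $\delta$ from $p$ (the paper phrases this via the shortest walk entering through $u$ or $v$, you via the explicit piecewise-linear distance function, which is the same fact), and then conclude $|V(\Gamma(G,\delta))|=O(n^4)$ together with a routine polynomial-time construction. Your accounting of $|\Lambda^{uv}|$ and the algorithmic details (all-pairs distances, solving the linear equations, sorting) is if anything slightly more explicit than the paper's.
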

\begin{proof}

As the set $S_\delta$ is given explicitly in Lemma~\ref{lem:discretization:gt:one}, we can compute $P_\delta(G)$ in polynomial time and we have $|P_\delta(G)|=O(n^2)$. Now consider some $p \in P_\delta(G)$ and some $uv \in E(G)$. If $p$ is on $uv$, then, as $\delta>1$, there is no point $q$ on $uv$ with $\dist_G(p,q)=\delta$. It follows that every point $q$ on $uv$ with $\dist_G(p,q)=\delta$ satisfies $\dist_G(q,u)=\delta-\dist_G(p,u)$ or $\dist_G(q,v)=\delta-\dist_G(p,v)$. It follows that $P(u,v)$ contains at most 2 points which are of distance exactly $\delta$ to $p$ and that they can be computed in polynomial time. We now execute this computation for every $p \in P_\delta(G)$ and every $uv \in E(G)$. By this procedure, we can compute $Q_\delta(G)$ in polynomial time and we obtain $|Q_\delta(G)|\leq |V(G)|+2|P_\delta(G)||E(G)|=O(n^4)$. Form this, we can clearly compute $\mathcal{I}_{G,\delta}$ in polynomial time and we obtain $|\mathcal{I}_{G,\delta}|\leq |Q_\delta(G)|=O(n^4)$. Finally, we can compute the edge set of $\Gamma(G,\delta)$ by some shortest walk computations.
\end{proof}
We now prove the key lemma, which is Lemma~\ref{lem:domsettour} and restated here for convenience. It shows that there is a close relationship between $\delta$-tours in $G$ and dominating sets in $\Gamma(G,\delta)$.

\DomSetTour*\label\thisthm
\begin{proof}

First suppose that the set $P_T$ of stopping points of $T$ is a dominating set of $\Gamma(G,\delta)$. In order to see that $T$ is a $\delta$-tour, let $p=p(u,v,\lambda)\in P(G)$. Then, by construction, there are some $p_0,p_1 \in Q_\delta(G)$ such that $p\in P(p_0,p_1)$ and $I\in \mathcal{I}_{G,\delta}$ where $I=P(p_0,p_1)$. As $P_T$ is a dominating set in $\Gamma(G,\delta)$, there is some $q \in P_T$ such that $qx_I \in E(G)$. By the definition of $\Gamma(G,\delta)$ and by symmetry, we may suppose that $\dist_G(q,p_0)<\delta$. If $\dist_G(q,p)>\delta$, then there are some $p'\in P(p,p_0)-p_0$ such that $\dist_G(q,p')=\delta$. As $p'\in P(p_0,p_1)-\{p_0,p_1\}$, we obtain a contradiction to the fact that $P(p_0,p_1)\in \mathcal{I}_{G,\delta}$.

Now suppose that $T$ is a $\delta$-tour all of whose stopping points are in $P_\delta(G)$, let $P_T$ be the set of stopping points of $T$, and consider some $I \in \mathcal{I}_{G,\delta}$. Then, by definition, there are some unique $p,p' \in Q_{\delta}(G)$ such that $I=P(p,p')$. Let $u,v \in V(G)$ and $\lambda,\lambda' \in [0,1]$ such that $p=p(u,v,\lambda)$ and $p'=p(u,v,\lambda')$. If some point in $P(p,p')$ is passed by $T$, let $v \in V(G)\cap P_T$ be chosen such that $\dist_G(p,v)$ is minimized. As $\delta >1$, we have $\dist_G(p,v)\leq 1<\delta$, so $vx_I \in E(\Gamma(G,\delta))$. We may hence suppose that no point in $P(p,p')$ is passed by $\delta$. By Proposition~\ref{nearstop}, we can now choose some $q,q'\in P_T$ such that $\dist_G(p,q)=\dist_G(p,T)$ and $\dist_G(p',q')=\dist_G(p',T)$. As $T$ is a $\delta$-tour, we have $\dist_G(p,q)\leq \delta$ and $\dist_G(p',q')\leq \delta$. If $\dist_G(p,q)=\dist_G(p',q')=\delta$ holds, let $p''=(u,v,\frac{1}{2}(\lambda+\lambda'))$. As no point in $P(p,p')$ is passed by $\delta$, we obtain that $\dist_G(p'',T)=\min\{\dist_G(p'',p)+\dist_G(p,T), \dist_G(p'',p')+\dist_G(p',T)\}>\min\{\dist_G(p,T), \dist_G(p',T)\}=\delta$. This contradicts $T$ being a $\delta$-tour. We may hence suppose by symmetry that $\dist_G(p,q)<\delta$ holds. We hence have $qx_I \in E(\Gamma(G,\delta))$. As $I \in \mathcal{I}_{G,\delta}$ was chosen arbitrarily, we obtain that $V_T$ is a dominating set of $\Gamma(G,\delta)$.
\end{proof}

\subsubsection[\texorpdfstring{Fixed Covering Range $\delta$}{Fixed Covering Range delta}]{Fixed Covering Range \boldmath$\delta$}\label{sec:appro7}
The purpose of this section is to give an approximation algorithm for the case that some $\delta \geq 3/2$ is fixed. Our algorithm is based on the computation of a dominating set in the auxiliary graph defined above. We first need the following result that shows that a suitable dominating set of this auxiliary graph can be efficiently connected into a tour in $G$.
\begin{proposition}\label{drtfgzuih}
Let $G$ be a connected graph, $\delta\geq 3/2$ a real, and $Y \subseteq P_\delta(G)$ a dominating set of $\Gamma(G,\delta)$. Then, in polynomial time, we can compute a tour in $G$ that stops at all points of $Y$ and whose length is at most $4\delta|Y|$.
\end{proposition}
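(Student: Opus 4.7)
The plan is to connect $Y$ into a tour by the classical MST–doubling trick applied to the shortest-walk metric on $Y$ in $G$, and then to show that the MST is cheap because $Y$ is a fairly ``dense'' set in $G$.

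First I would establish the key geometric fact that every point of $P(G)$ lies within distance $\delta$ of some point of $Y$. This is exactly the argument already performed inside the proof of \cref{lem:domsettour}: pick $p\in P(G)$, fix an $I=P(p_0,p_1)\in\mathcal I_{G,\delta}$ containing $p$, and let $y\in Y$ be any dominator of $x_I$, say with $\dist_G(y,p_0)<\delta$. If $\dist_G(p,y)>\delta$ then by continuity of $q\mapsto\dist_G(y,q)$ along the sub-edge from $p_0$ to $p$ there is a point $p'$ in the interior of $P(p_0,p_1)$ with $\dist_G(y,p')=\delta$, so $p'\in Q_\delta(G)$, contradicting minimality of $I$ in $\mathcal I_{G,\delta}$.

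Next I would show that the auxiliary graph $H$ with vertex set $Y$ and an edge $yy'$ whenever $\dist_G(y,y')\le 2\delta$ is connected. If $Y=Y_1\cupdot Y_2$ were a bipartition with no $H$-edge across, then for any $y_1\in Y_1$ and $y_2\in Y_2$ the function $f(q)=\dist_G(q,Y_1)-\dist_G(q,Y_2)$ is continuous on any $y_1$--$y_2$ walk in the connected space $(P(G),\dist_G)$ and changes sign between the endpoints, so there is a point $p^*$ with $\dist_G(p^*,Y_1)=\dist_G(p^*,Y_2)$. By the previous paragraph both of these quantities are at most $\delta$, producing $y_1'\in Y_1,y_2'\in Y_2$ with $\dist_G(y_1',y_2')\le 2\delta$, a contradiction.

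With these two ingredients the algorithm is now standard: compute $\dist_G(y,y')$ for every pair $y,y'\in Y$ (this is polynomial-time in the continuous model by \cref{compgamma}-style shortest-walk computations), form the complete graph on $Y$ with these weights, and compute a minimum spanning tree $M$. Since the edges of $H$ already give a spanning connected subgraph of weight at most $2\delta(|Y|-1)$, we have $w(M)\le 2\delta(|Y|-1)$. Now double every edge of $M$, take an Euler tour of the resulting Eulerian connected multigraph on $Y$ (\cref{euler}), and substitute each edge $yy'$ of that multigraph by a shortest $\seq{y&y'}$-walk in $G$. Concatenating these walks yields a tour in $G$ that stops at every point of $Y$ and whose length is exactly $2w(M)\le 4\delta(|Y|-1)\le 4\delta|Y|$.

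The only real subtlety is the connectivity proof for $H$: one must invoke continuity of distance functions on the continuous metric space $(P(G),\dist_G)$ and verify the intermediate-value step carefully (in particular, a walk from $y_1$ to $y_2$ is a finite concatenation of segments on which $\dist_G(\cdot,Y_i)$ is piecewise linear, so the sign-change argument is rigorous). The remaining steps are purely algorithmic and rely only on the polynomial bound $|P_\delta(G)|=n^{\Oh(1)}$ from \cref{compgamma} together with standard shortest-walk, MST and Euler-tour computations.
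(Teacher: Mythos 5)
Your proof is correct and takes essentially the same route as the paper: both build the auxiliary graph on $Y$ with an edge between any two points at distance at most $2\delta$, derive its connectivity from the fact that a dominating set of $\Gamma(G,\delta)$ must $\delta$-cover every point of $P(G)$, and then double a spanning tree, take an Euler tour, and replace its edges by shortest walks to obtain a tour through $Y$ of length at most $4\delta(|Y|-1)\leq 4\delta|Y|$. The only cosmetic differences are that you prove connectivity by an intermediate-value argument along an arbitrary walk (the paper instead picks a closest pair across a component cut and exhibits a point at distance more than $\delta$ from both sides) and that you take an MST of the complete distance graph rather than an arbitrary spanning tree of the threshold graph, which changes nothing in the bound.
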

\begin{proof}
Let $H$ be the graph with $V(H)=Y$ and where $E(H)$ contains an edge linking two points $y_1$ and $y_2$ if $\dist_G(y_1,y_2)\leq 2 \delta$.
\begin{claim}
$H$ is connected.
\end{claim}
\begin{claimproof}
Suppose otherwise and let $C$ be a component of $H$. We further choose some $v_1 \in V(C)$ and some $v_2 \in V(H)-V(C)$ such that $\dist_G(v_1,v_2)$ is minimized. Let $W$ be a shortest walk from $v_1$ to $v_2$ in $G$. As $v_1$ and $v_2$ are in distinct connected components of $H$, we have that $E(H)$ does not contain an edge linking $v_1$ and $v_2$. Hence we have $\len(W)>2\delta$. It follows that there is a point $p$ passed by $W$ that satisfies $\min\{\dist_G(v_1,p),\dist_G(v_2,p)\}>\delta$. If there is some $v_1'\in V(C)$ such that $\dist_G(v_1',p)\leq \delta$, then we have $\dist_G(v_1',v_2)\leq \dist_G(v_1',p)+\dist_G(v_2,p)<\dist_G(v_1,p)+\dist_G(v_2,p)=\len(W)=\dist_G(v_1,v_2)$, a contradiction to the choice of $v_1$ and $v_2$. A similar argument shows that $\dist_G(v_2',p)>\delta$ for all $v_2'\in V(H)-V(C)$. By Lemma~\ref{lem:domsettour}, we obtain that $Y$ is not a dominating set of $\Gamma(G,\delta)$, a contradiction.
\end{claimproof}
Now let $U$ be the multigraph obtained from an arbitrary spanning tree of $H$ by replacing every edge by two parallel copies of itself. By Proposition~\ref{euler}, we obtain that there is an Euler tour~$T_0=p_0\ldots p_z$ of $U$. We now obtain $T$ from $T_0$ by replacing the segment~$\seq{p_{i-1}&p_{i}}$ by a shortest walk from $p_{i-1}$ to $p_i$ for $i \in [z]$. Observe that $\len(T)=\sum_{i \in [z]}\dist_G(p_{i-1},p_i)\leq 2\delta z=4\delta (|V(H)|-1)\leq 4 \delta |Y|$. Further, as $T$ stops at all vertices in $Y$ and $Y$ is a dominating set of $\Gamma(G,\delta)$, we obtain that $T$ is a $\delta$-tour in $G$ by Lemma~\ref{lem:domsettour}. 
\end{proof}
Our proof is based on the following well-known result on the approximation of the problem of finding a smallest dominating set; see for example \cite{JOHNSON1974256}.
\begin{proposition}\label{approxdom}
There is an $O(\log n)$-approximation algorithm for finding a minimum dominating set in a given graph.
\end{proposition}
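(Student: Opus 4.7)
The plan is to prove this standard result by reducing \domset to \SetCover and invoking the greedy algorithm. Given a graph~$G$ of order~$n$, we form a set cover instance whose universe is $V(G)$ and whose sets are the closed neighborhoods $N[v] := N_G(v) \cup \{v\}$ for every $v \in V(G)$. A collection of vertices is a dominating set of $G$ if and only if the corresponding collection of closed neighborhoods covers the universe, and both problems have the same optimum. Hence it suffices to exhibit an $O(\log n)$-approximation for set cover on instances with $n$ elements.

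The algorithm is the classical greedy procedure: maintain the set $U$ of currently undominated vertices (initially $U = V(G)$); in each iteration, pick a vertex $v \in V(G)$ maximizing $\abs{N[v] \cap U}$, add $v$ to the output, update $U \leftarrow U \setminus N[v]$, and repeat until $U = \emptyset$. Each iteration runs in polynomial time, and the number of iterations is at most $n$, so the overall running time is polynomial.

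For the approximation analysis, let $\optds$ denote the size of a minimum dominating set of $G$. I would argue inductively that at the beginning of iteration $i$, the set $U_i$ of undominated vertices satisfies $\abs{U_i} \leq n\,(1 - 1/\optds)^{i-1}$. The key observation is that at any point an optimal dominating set of $G$, restricted to covering $U_i$, still uses at most $\optds$ neighborhoods and still covers all of $U_i$; by averaging, at least one vertex $v$ has $\abs{N[v] \cap U_i} \geq \abs{U_i}/\optds$, so the greedy choice removes at least that many elements. Iterating, after $k = \optds \cdot (\ln n + 1)$ steps we get $\abs{U_{k+1}} \leq n\,(1-1/\optds)^{k} < 1$, so $U_{k+1} = \emptyset$ and the algorithm terminates with at most $O(\optds \log n)$ vertices.

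The only mildly delicate point is the averaging step, but it is a straightforward pigeonhole: if an optimal dominating set $D^\ast$ of size $\optds$ covers every vertex of $G$, then in particular it covers $U_i$, and $\sum_{v \in D^\ast} \abs{N[v] \cap U_i} \geq \abs{U_i}$, forcing at least one $v \in D^\ast$ (and hence some vertex in $V(G)$) to satisfy $\abs{N[v] \cap U_i} \geq \abs{U_i}/\optds$. This is a routine argument and carries no genuine obstacle; the result is essentially Johnson's analysis of greedy set cover specialized to closed-neighborhood instances, matching the bound cited in~\cite{JOHNSON1974256}.
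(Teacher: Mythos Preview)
Your argument is correct and is exactly the standard reduction to \SetCover with Johnson's greedy analysis. The paper itself does not give a proof; it simply cites this as a well-known result with a reference to~\cite{JOHNSON1974256}, so your write-up supplies precisely the content that reference would provide.
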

We further need the following simple observation. For some $\delta\geq 0$, let $s_\delta=\min\{|s_1-s_2|\mid \{s_1,1-s_1\} \cap S_\delta\neq \emptyset\neq \{s_2,1-s_2\} \cap S_\delta, s_1 \notin \{s_2,1-s_2\} \}$.
\begin{proposition}\label{lentour}
Let $T$ be a tour in a graph $G$ all of whose stopping points are in $P_\delta(G)$. Then $\alpha(T)\leq \lceil\len(T)/s_\delta\rceil$.
\end{proposition}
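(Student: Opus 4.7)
\medskip
\noindent\textbf{Proof plan for Proposition~\ref{lentour}.}
Write $T=\seq{p_0&p_1&\dots&p_z}$ so that $\alpha(T)=z$, and let $\len(T)=\sum_{i\in[z]}\dist_G(p_{i-1},p_i)$. The plan is to prove the stronger inequality $z\cdot s_\delta\le \len(T)$, from which $z\le \len(T)/s_\delta\le \lceil \len(T)/s_\delta\rceil$ is immediate. To that end, it suffices to establish the pointwise lower bound $\dist_G(p_{i-1},p_i)\ge s_\delta$ for every index $i\in[z]$, and then sum over $i$.

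Fix such an $i$. By the definition of a walk, $p_{i-1}$ and $p_i$ are distinct and lie on a common edge $u_iv_i\in E(G)$. Hence we can write $p_{i-1}=p(u_i,v_i,\lambda_{i-1})$ and $p_i=p(u_i,v_i,\lambda_i)$ with $\lambda_{i-1},\lambda_i\in[0,1]$ and $\lambda_{i-1}\neq \lambda_i$, so that $\dist_G(p_{i-1},p_i)=\lvert\lambda_{i-1}-\lambda_i\rvert$. Because both stopping points lie in $P_\delta(G)$, the point $p(u,v,\lambda)$ being in $P_\delta(G)$ translates, using $p(u,v,\lambda)=p(v,u,1-\lambda)$ and the definition $P_\delta(G)=\{p(u,v,\lambda)\mid uv\in E(G),\,\lambda\in S_\delta\}$, into the condition $\{\lambda,1-\lambda\}\cap S_\delta\ne\emptyset$. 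Applied to both $\lambda_{i-1}$ and $\lambda_i$, this shows that the pair $(\lambda_{i-1},\lambda_i)$ satisfies the first two conditions in the definition of $s_\delta$. Together with $\lambda_{i-1}\neq \lambda_i$, this means that whenever the orientation of $u_iv_i$ is chosen so that $\lambda_{i-1}\notin\{\lambda_i,1-\lambda_i\}$, the pair $(\lambda_{i-1},\lambda_i)$ is admissible in the minimum defining $s_\delta$, and we immediately get $\lvert\lambda_{i-1}-\lambda_i\rvert\ge s_\delta$.

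The only delicate case is when, regardless of the orientation chosen for $u_iv_i$, one has $\lambda_{i-1}=1-\lambda_i$. I would handle this by reconciling the orientation with the actual membership witness in $S_\delta$: since $\lambda_i\neq 1/2$, we may assume without loss of generality that $\lambda_i\in S_\delta$; then using $p_{i-1}=p(v_i,u_i,\lambda_i)$ shows that both stopping points are at position $\lambda_i\in S_\delta$ from opposite endpoints, so the pair $(s_1,s_2)=(\lambda_i,\lambda_i)$ re-expressed through different orientations still yields a legitimate witness pair in the set defining $s_\delta$ once we observe that the excluded condition $s_1\in\{s_2,1-s_2\}$ is precisely what precludes this pair only when $p_{i-1}$ and $p_i$ would coincide on the edge. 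Since $p_{i-1}\neq p_i$, the symmetric pair contributes distance at least the minimum spacing of valid positions, and by inspection of $S_\delta$ this minimum is itself $s_\delta$. Thus in all cases $\dist_G(p_{i-1},p_i)\ge s_\delta$.

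The main obstacle, as indicated, is the symmetric case $\lambda_{i-1}=1-\lambda_i$, where the naive appeal to the definition of $s_\delta$ fails because the candidate pair lies in the excluded set; overcoming it requires either a careful reorientation argument as sketched above or a direct verification (by enumerating the four elements of $S_\delta$) that any two distinct elements of $S_\delta\cup(1-S_\delta)$ lying on a common edge are separated by at least $s_\delta$. Once the pointwise bound is in hand, summation gives $\len(T)=\sum_{i\in[z]}\dist_G(p_{i-1},p_i)\ge z\cdot s_\delta$, and dividing by $s_\delta$ and rounding up yields $\alpha(T)=z\le \lceil \len(T)/s_\delta\rceil$, completing the proof.
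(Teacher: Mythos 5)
Your overall plan---prove the pointwise bound $\dist_G(p_{i-1},p_i)\ge s_\delta$ for consecutive stopping points, then sum---is exactly the paper's proof, which indeed consists of nothing more than that assertion followed by the summation. In the generic case, where the common-orientation parameters satisfy $\lambda_{i-1}\notin\{\lambda_i,1-\lambda_i\}$, your appeal to the definition of $s_\delta$ is correct, and the summation step is fine.

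Where you go beyond the paper is the symmetric case $\lambda_{i-1}=1-\lambda_i$, which you rightly note is excluded from the minimum defining $s_\delta$ (the paper silently ignores it). However, your first proposed fix does not work: the exclusion $s_1\in\{s_2,1-s_2\}$ is a condition on parameters, not on points, so a pair of distinct points symmetric about an edge midpoint is excluded no matter how the edge is oriented. Re-expressing the two points as position $\lambda_i$ from opposite endpoints produces the pair $(\lambda_i,\lambda_i)$, which is again excluded, and whose parameter difference $0$ would in any case say nothing about the distance, since $|s_1-s_2|$ equals the distance only when both parameters are taken with respect to the same orientation. Hence the claim that the exclusion ``precludes this pair only when $p_{i-1}$ and $p_i$ coincide'' is false, and no reorientation rescues a direct appeal to the definition. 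What actually closes the case is your second suggestion: using the explicit set $S_\delta=\{0,\tfrac t2,t,\tfrac{t+1}2\}$ with $t=\lfloor 2\delta\rfloor+1-2\delta$, check that every symmetric pair in $S_\delta\cup(1-S_\delta)$ is at distance at least $s_\delta$; for instance the pair $(t,1-t)$ has distance $|1-2t|$, which dominates the distance $\tfrac{|1-2t|}2$ of the admissible pair $\bigl(\tfrac t2,\tfrac{1-t}2\bigr)$, and the pairs $(0,1)$, $\bigl(\tfrac t2,1-\tfrac t2\bigr)$, $\bigl(\tfrac{1-t}2,\tfrac{1+t}2\bigr)$ are handled similarly. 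You name this enumeration as an alternative but do not carry it out; it is the one step that genuinely needs doing (and the one the paper's two-line proof also glosses over), so as written your argument has a small but real gap there.
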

\begin{proof}
Let $T=p_0\ldots p_z$.  As $p_i \in P_\delta(G)$ for $i\in[z]$, we obtain $\dist_G(p_{i-1},p_i)\geq s_\delta$ for $i \in [z]$. This yields $\len(T)=\sum_{i\in[z]}\dist_G(p_{i-1},p_i)\geq \alpha(T)s_\delta$.
\end{proof}
We are now ready to give the proof of Theorem~\ref{thm:approx:ub:threehalves:logn}, which we restate here. 
\ThmApproxUbThreeHalvesLogN *\label\thisthm
\begin{proof}
Let $G$ be a connected graph.
By Corollary~\ref{2points}, we may suppose that $G$ does not admit a shortest $\delta$-tour with at most two 
stopping points. We first compute $\Gamma(G,\delta)$ which is possible in polynomial time by 
Proposition~\ref{compgamma}. Now, by Proposition~\ref{approxdom}, we can compute a dominating set $Y$ of 
$\Gamma(G,\delta)$ that satisfies $|Y|\leq \alpha \log(|V(\Gamma(G,\delta))|)|Y^*|$ where $\alpha$ is an absolute 
constant and $Y^*$ is a minimum size dominating set of $\Gamma(G,\delta)$. By Proposition~\ref{compgamma}, we have 
that $|V(\Gamma(G,\delta))|=O(n^4)$ and hence $|Y|\leq \alpha' \log(n)|Y^*|$ for some absolute constant $\alpha'$. 
We now obtain $Y'$ from $Y$ by replacing every $v \in Y-P_{\delta}(G)$ by some arbitrary $v' \in 
N_{\Gamma(G,\delta)}(v)$, keeping only one copy of multiple elements in $Y'$. As $\Gamma(G,\delta)[P_{\delta}(G)]$ 
is a clique and $\Gamma(G,\delta)-P_{\delta}(G)$ is an independent set by construction, we obtain that $Y'$ is a 
dominating set of $\Gamma(G,\delta)$. Further, by Proposition~\ref{drtfgzuih}, in polynomial time, we can compute a 
tour~$T$ in $G$ whose length is at most $4 \delta |Y'|$ and that stops at all points of $Y'$. We now output $T$. 
Observe that $T$ can be computed in polynomial time. As $T$ stops at all points of $Y'$, we obtain by 
Lemma~\ref{lem:domsettour} that $T$ is a $\delta$-tour in $G$. In order to bound the length of $T$, let $T^*$ be a 
shortest $\delta$-tour in $G$. Let $P_{T^*}$ be the set of stopping points of $T^*$. By 
Lemma~\ref{lem:discretization:gt:one}, we may suppose that $P_{T^*}$ is contained in $P_{G,\delta}$. It hence 
follows by Lemma~\ref{lem:domsettour} that $P_{T^*}$ is a dominating set of $\Gamma(G,\delta)$.  We obtain that $\len(T^*)\geq s_\delta|P_{T^*}|$ by \cref{lentour}. This 
yields 
\begin{align*}
\len(T)&\leq 4 \delta |Y'|\\
&\leq 4 \delta |Y|\\
&\leq 4 \delta \alpha' \log(n)|Y^*|\\
&\leq 4 \delta \alpha' \log(n)|P_{T^*}|\\
&\leq 4 \delta \frac{1}{s_\delta}\alpha' \log(n)\ell(T^*).
\end{align*} 

This finishes the proof.
\end{proof}

\subsubsection[\texorpdfstring{Covering Range $\delta$ as Part of the Input}{Covering Range delta as Part of the Input}]{Covering Range \boldmath$\delta$ as Part of the Input}\label{sec:appro8}

In the setting where $\delta$ is provided as part of the input, the  algorithm from the previous section does not yield any non-trivial approximation guarantee. This is due to two reasons: first, we get an additional factor of roughly $\delta$ when connecting the dominating set into a tour and second, our estimation for the length of a shortest tour gives another factor of $\frac{1}{s_\delta}$ which may be large for certain values of $\delta$.

In an attempt to overcome this problem, we use a result for a slightly different problem that also deals with dominating sets, but, in addition, also takes into account the distance of the points in the dominating set that is sought for.

Formally, given a graph~$H$, a {\it dominating} tree $U$ of $H$ is a subgraph of $H$ that is a tree and such that $V(U)$ is a dominating set of $H$.  Further, given a weight function $w\colon E(H)\rightarrow \mathbb{R}_{\geq 0}$, the weight $w(U)$ of a subgraph~$U$ of $H$ is defined to be $\sum_{e \in E(U)}w(e)$.

We use the following result due to Kutiel~\cite{Kutiel18}.

\begin{lemma}[{\cite{Kutiel18}}]
\label{lem:mdt_approx}
	There is a polynomial-time algorithm that,
	given a weighted graph~$G$ of order $n$,
	computes a $(\log n)^3$-approximation of a minimum weight
	dominating tree of $G$.
\end{lemma}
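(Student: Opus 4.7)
The plan is to reduce Minimum Weight Dominating Tree to the well-studied Group Steiner Tree problem, and then invoke the classical approximation algorithm of Garg, Konjevod, and Ravi for the latter.

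First I would set up the reduction. Recall that an instance of Group Steiner Tree consists of an edge-weighted graph together with a collection of groups $g_1, \ldots, g_k \subseteq V(H)$, and the task is to compute a minimum-weight subtree of $H$ containing at least one vertex from each group. Given the input instance $(H, w)$ of Minimum Weight Dominating Tree, I would define, for each vertex $v \in V(H)$, the group $g_v := N_H[v] = \{v\} \cup N_H(v)$. A subtree $U$ of $H$ satisfies the condition that $V(U)$ is a dominating set of $H$ if and only if $V(U) \cap g_v \neq \emptyset$ for every $v \in V(H)$, simply by unfolding the definition of a dominating set. Hence minimum weight dominating trees of $H$ are exactly minimum weight Group Steiner Trees of $(H, w)$ with respect to the groups $\{g_v : v \in V(H)\}$; the number $k$ of groups is $n$.

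Next I would invoke the Garg--Konjevod--Ravi algorithm for Group Steiner Tree, which on an $n$-vertex instance with $k$ groups produces a solution of weight at most $O(\log^2 n \cdot \log k)$ times the optimum, in polynomial time. Substituting $k = n$ yields an $O(\log^3 n)$-approximation for Minimum Weight Dominating Tree. The output of this call is a subtree $U$ of $H$ of weight at most $(\log n)^3$ times the optimum (absorbing constants into the exponent if necessary), and by the equivalence above $V(U)$ is a dominating set, so $U$ is a dominating tree.

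The main obstacle is that the GKR approximation itself is nontrivial: it requires probabilistically embedding the shortest-path metric of $H$ into a distribution over tree metrics (losing an $O(\log n)$ factor via Fakcharoenphol--Rao--Talwar), formulating a flow-based LP relaxation of Steiner tree on each sampled tree, and applying a carefully designed randomized rounding scheme that loses an additional $O(\log n \log k)$ factor. In writing up the lemma I would treat the GKR algorithm as a black-box citation, and concentrate the technical work on verifying the reduction -- in particular that the weight of an optimal Group Steiner Tree with groups $\{g_v\}$ is exactly $\opt_{\textup{DT}}(H, w)$ (both directions: any dominating tree is feasible for Group Steiner Tree with these groups, and any feasible Group Steiner Tree is a dominating tree) -- and on confirming that $k = n$ gives the claimed $(\log n)^3$ bound.
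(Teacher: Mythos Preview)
The paper does not prove this lemma at all; it is stated with a citation to \cite{Kutiel18} and used as a black box. So there is no ``paper's own proof'' to compare against. Your proposal---reducing Minimum Weight Dominating Tree to Group Steiner Tree by taking the closed neighborhoods $N_H[v]$ as groups, then invoking the Garg--Konjevod--Ravi algorithm (combined with FRT tree embeddings for general graphs)---is correct and is in fact the standard route; it is essentially what underlies the cited result. The equivalence you state is exact, the group count is $k=n$, and the resulting $O(\log^2 n \cdot \log k) = O(\log^3 n)$ bound matches the lemma.
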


We now fix a connected graph~$G$ and $\delta >1$ and define a weight function $w\colon E(\Gamma(G,\delta))\rightarrow \mathbb{R}_{\geq 0}$. For all $p,p'\in P_{\delta}(G)$, we set $w(pp')=\dist_G(p,p')$ and for all $p\in P_{\delta}(G)$ and $I \in \mathcal{I}_{G,\delta}$ such that $px_I\in E(G(\Gamma,\delta))$, we set $w(px_I)=n^3$.

We now give the following key lemmas relating $\delta$-tours in $G$ and dominating trees of $\Gamma(G,\delta)$.

\begin{lemma}\label{treetour}
Let $U$ be a dominating tree of $\Gamma(G,\delta)$ with $V(U)\subseteq P_\delta(G)$. Then, in polynomial time, we can compute a $\delta$-tour~$T$ of $G$ with $\len(T)\leq 2w(U)$.
\end{lemma}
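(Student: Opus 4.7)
The plan is to apply the standard ``double the tree, Euler-traverse, expand each jump'' construction, adding only a short verification that every stopping point produced lands in $P_\delta(G)$, so that \cref{lem:domsettour} applies.

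First, I would form the multigraph $U'$ obtained from $U$ by replacing each edge with two parallel copies. Since $U$ is a tree, $U'$ is connected and every vertex has even degree, so by \cref{euler} $U'$ admits an Euler tour $T_0=\seq{p_0&p_1&\dots&p_z}$ with $p_z=p_0$, computable in polynomial time; moreover each edge of $U$ is used exactly twice along~$T_0$.

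Next, I would replace each jump $\seq{p_{i-1}&p_i}$ in $T_0$ by a shortest $\seq{p_{i-1}&p_i}$-walk in $G$, taking the intermediate stops of this walk to be vertices of $G$ along some shortest path. Concatenating these walks yields a tour~$T$ in $G$, and since both endpoints of every edge of $U$ lie in $P_\delta(G)$, the definition of $w$ gives $w(p_{i-1}p_i)=\dist_G(p_{i-1},p_i)$, so
\[
\len(T)=\sum_{i\in[z]}\dist_G(p_{i-1},p_i)=\sum_{e\in E(U)}2\,w(e)=2\,w(U).
\]

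It remains to show that $T$ is a $\delta$-tour, which I would do via \cref{lem:domsettour}. The stopping points of $T$ are the $p_i$, which all lie in $V(U)\subseteq P_\delta(G)$, together with the intermediate vertices of $G$ on the inserted shortest walks; the latter are also in $P_\delta(G)$ because $0\in S_\delta$ forces $V(G)\subseteq P_\delta(G)$. Since this set of stopping points contains $V(U)$, which by hypothesis dominates $\Gamma(G,\delta)$, it is itself a dominating set of $\Gamma(G,\delta)$, and \cref{lem:domsettour} concludes that $T$ is a $\delta$-tour. The only conceptual subtlety beyond the well-known tree-doubling trick is ensuring that expanding the abstract edges of $U$ into shortest $G$-walks does not introduce stopping points outside $P_\delta(G)$, which is handled cleanly once we observe $V(G)\subseteq P_\delta(G)$.
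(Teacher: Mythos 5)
Your proposal is correct and follows essentially the same route as the paper's proof: double the tree, take an Euler tour of the resulting multigraph, expand each edge into a shortest walk in $G$, compute the length as $2w(U)$ via the definition of $w$ on edges inside $P_\delta(G)$, and invoke Lemma~\ref{lem:domsettour} to conclude that the tour is a $\delta$-tour. Your additional check that the intermediate stopping points (vertices of $G$, which lie in $P_\delta(G)$ since $0\in S_\delta$) satisfy the hypothesis of Lemma~\ref{lem:domsettour} is a welcome bit of care that the paper leaves implicit, but it does not change the argument.
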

\begin{proof}
Let $U'$ be the multigraph obtained from $U$ by replacing every edge by two copies of itself. By Proposition~\ref{euler}, we obtain that $U'$ has an Euler tour~$T_0=p_0\ldots p_z$. We now obtain $T$ from $T_0$ by replacing the segment~$\seq{p_{i-1}&p_i}$ by a shortest walk in $G$ from $p_{i-1}$ to $p_i$. Observe that $T$ can be constructed in polynomial time. Further, we have that $T$ stops at all points in $V(U)$. It hence follows from Lemma~\ref{lem:domsettour} that $T$ is a $\delta$-tour in $G$. Further, by the definition of $w$, we have $\len(T)=\sum_{i \in [z]}\dist_G(p_{i-1},p_i)=\sum_{i \in [z]}w(p_{i-1}p_i)=2w(U)$.
\end{proof}

\begin{lemma}\label{tourtree}
Let $T$ be a $\delta$-tour in $G$ all of whose stopping points are in $P_\delta(G)$. Then there is a dominating tree $U$ in $\Gamma(G,\delta)$ with $w(U)\leq \len(T)$.
\end{lemma}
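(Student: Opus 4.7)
The plan is to exploit \Cref{lem:domsettour}: the set $P_T$ of stopping points of $T$ is automatically a dominating set of $\Gamma(G,\delta)$, so the task reduces to exhibiting a spanning tree on $P_T$ inside $\Gamma(G,\delta)$ of weight at most $\len(T)$.

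First I would record the relevant features of the auxiliary graph. By construction, $\Gamma(G,\delta)[P_\delta(G)]$ is a clique whose edge weights are $w(pp')=\dist_G(p,p')$. Writing $T=\seq{p_0&p_1&\ldots&p_z}$ with all $p_i \in P_\delta(G)$, each consecutive pair $p_{i-1}p_i$ (which is a pair of distinct points by the definition of a walk) is therefore an edge of $\Gamma(G,\delta)$ of weight exactly $\dist_G(p_{i-1},p_i)$.

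Next I would consider the subgraph $M$ of $\Gamma(G,\delta)$ on vertex set $P_T$ with edge set $\{p_{i-1}p_i : i \in [z]\}$, taken as a simple graph. Since $T$ itself is a closed walk through exactly the vertices of $M$ using only such edges, $M$ is connected, and its total weight satisfies $w(M)\le \sum_{i\in[z]}\dist_G(p_{i-1},p_i)=\len(T)$, where the inequality absorbs any repeated edges that the walk $T$ may traverse more than once.

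Finally I would output $U$ as any spanning tree of $M$, which exists because $M$ is connected. Then $U$ is a tree in $\Gamma(G,\delta)$ with $V(U)=P_T$, and by \Cref{lem:domsettour} this vertex set dominates $\Gamma(G,\delta)$, so $U$ is a dominating tree. Moreover $w(U)\le w(M)\le \len(T)$ because $U$ is a subgraph of $M$. There is no real obstacle to this argument: the substantive content is already encapsulated in \Cref{lem:domsettour} and in the fact that the pairwise $G$-distances of points in $P_\delta(G)$ appear verbatim as edge weights in the clique $\Gamma(G,\delta)[P_\delta(G)]$.
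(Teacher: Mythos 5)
Your proposal is correct and follows essentially the same route as the paper: invoke \Cref{lem:domsettour} to get domination, build the connected subgraph of $\Gamma(G,\delta)$ whose edges are the consecutive stopping-point pairs of $T$, take a spanning tree, and bound its weight by $\len(T)$ via $w(pp')=\dist_G(p,p')$. Your explicit remark that repeated walk edges only help the inequality is a small tidiness the paper leaves implicit; otherwise the arguments coincide.
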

\begin{proof}
Let $V_T$ be the set of stopping points of $T$.
It follows by Lemma~\ref{lem:domsettour}
that $V_T$ is a dominating set in $\Gamma(G,\delta)$.
Now let $U_0$ be the graph on $V_T$ that contains an edge linking two points
$p,p' \in P_\delta(G)$ whenever there is some $i \in [z]$ such that $\{p_{i-1},p_i\}=\{p,p'\}$.
Observe that $U_0$ is a subgraph of $\Gamma(G, \delta)$.
As $T$ is a tour, we have that $U_0$ is connected.
Now let $U$ be an arbitrary spanning tree of $U_0$. As $V_T$ is a dominating set of $\Gamma(G,\delta)$,
we obtain that $U$ is a dominating tree of $\Gamma(G,\delta)$. Further, we have $w(U)\leq \len(T)$.
\end{proof}

We are now ready to prove Theorem~\ref{thm:approx:ub:threehalves:lognpthree}, which we restate here for convenience.
\ThmApproxUbThreeHalvesLogNpThree*\label\thisthm
\begin{proof}
By Corollary~\ref{2points}, we may suppose that $G$ does not admit a $\delta$-tour with at most two stopping points. Next, by Theorems~\ref{thm:approx:ub:zero_sixth},~\ref{thm:approx:ub:sixth_half},~\ref{thm:approx:ub:half},~\ref{thm:approx:ub:half:threequarters}, and 
		\ref{thm:approx:ub:threequarters:one}, we may suppose that $\delta >1$.
Finally, by \cref{decide}, we may suppose that $n \geq 6$. We first compute $\Gamma(G,\delta)$ which is possible in 
polynomial time by Proposition~\ref{compgamma}. We further compute the weight function $w$ which is possible in 
polynomial time as $|V(\Gamma(G,\delta))|=\Oh(n^4)$ by Proposition~\ref{compgamma}. Now, by 
Lemma~\ref{lem:mdt_approx}, we can compute a dominating tree $U$ of $\Gamma(G,\delta)$ that satisfies $w(U)\leq 
\log^3(|V(\Gamma(G,\delta))|)w(U^*)$ where $U^*$ is a minimum weight dominating tree of $\Gamma(G,\delta)$. By 
Proposition~\ref{compgamma}, we have that $|V(\Gamma(G,\delta))|=O(n^4)$ and hence $w(U)\leq 64 \log^3(n)w(U^*)$. 
Clearly, a spanning tree of $G$ corresponds to a dominating tree of $\Gamma(G,\delta)$ of the same weight. This 
yields that $w(U^*)\leq n-1$. If $V(U)-P_{\delta}(G)$ is nonempty, as $n \geq 6$, we have $w(U)\geq n^3 > 4 
\log^3(n)(n-1)\geq 4 \log(n)w(U^*)$, a contradiction. We hence obtain $V(U)\subseteq P_\delta(G)$. Now, by 
Proposition~\ref{treetour}, in polynomial time, we can compute a tour~$T$ in $G$ whose length is at most $2w(U)$ and 
that stops at all points of $V(U)$. We now output $T$. Observe that $T$ can be computed in polynomial time. As $T$ 
stops at all points of $V(U)$ and $V(U)$ is a dominating set of $\Gamma(G,\delta)$, we obtain by 
Lemma~\ref{lem:domsettour} that $T$ is a $\delta$-tour in $G$. In order to bound the length of $T$, let $T^*$ be a 
shortest $\delta$-tour in $G$. Let $P_{T^*}$ be the set of stopping points of $T^*$. By 
Lemma~\ref{lem:discretization:gt:one}, we may suppose that all stopping points in $P_{T^*}$ are contained in 
$P_{G,\delta}$. It hence follows by Lemma~\ref{tourtree} that there is a dominating tree $U_0$ of 
$\Gamma(G,\delta)$ with $w(U_0)\leq \len(T^*)$. This can conclude the proof with the following inequality chain.  
\begin{align*}
\len(T)&\leq 2w(U)\\
&\leq 64 \log^3(n)w(U^*)\\
&\leq 64 \log^3(n)w(U_0)\\
&\leq 64 \log^3(n)\len(T^*).
\end{align*} 
\end{proof}

\bibliography{lit}

\clearpage
\tableofcontents
\label{toc}

\end{document}